\pgfplotsset{compat=1.17}
    \pgfplotsset{
    name nodes near coords/.style={
        every node near coord/.append style={
            name=#1-\coordindex,
            alias=#1-last,
        },
    },
    name nodes near coords/.default=coordnode
    }
\newcommand{\bX}{\mathbf{X}}
\newcommand{\bA}{\mathbf{A}}
\newenvironment{proofof}[1]{{\vspace*{5pt} \noindent\bf Proof of #1:  }}{\hfill\rule{2mm}{2mm}\vspace*{5pt}}
\newtheorem{claim}{Claim}[section]
\newtheorem{theorem}{Theorem}[section]
\newtheorem{lemma}[theorem]{Lemma}
\newtheorem{observation}[theorem]{Observation}
\newtheorem{definition}[theorem]{Definition}
\DeclareMathOperator*{\argmax}{argmax}
\DeclareMathOperator*{\argmin}{argmin}
\title{Approximately EFX Allocations for Indivisible Chores}
\author{Shengwei Zhou
	\thanks{IOTSC, University of Macau. \{yc17423,xiaoweiwu\}@um.edu.mo. This work is funded by the Science and Technology Development Fund (FDCT), Macau SAR (file no. 0014/2022/AFJ, 0085/2022/A, 0143/2020/A3 and SKL-IOTSC-2021-2023).}
	\and
	Xiaowei Wu $^*$
}
\date{}
\begin{document}
	\pagestyle{plain}
	
	\maketitle

        \begin{abstract}
        	In this paper, we study how to fairly allocate a set of $m$ indivisible chores to a group of $n$ agents, each of which has a general additive cost function on the items.
        	Since envy-free (EF) allocations are not guaranteed to exist, we consider the notion of envy-freeness up to any item (EFX).
        	In contrast to the fruitful results regarding the (approximation of) EFX allocations for goods, very little is known for the allocation of chores.
        	Prior to our work, for the allocation of chores, it is known that EFX allocations always exist for two agents or general number of agents with identical ordering cost functions.
        	For general instances, no non-trivial approximation result regarding EFX allocation is known.
        	In this paper, we make progress in this direction by providing several polynomial time algorithms for the computation of EFX and approximately EFX allocations.
            We show that for three agents we can always compute a $(2+\sqrt{6})$-approximation of EFX allocation.
        	For $n\geq 4$ agents, our algorithm always computes a $(3n^2-n)$-approximation.
        	We also study the bi-valued instances, in which agents have at most two cost values on the chores.
            For three agents, we provide an algorithm for the computation of EFX allocations.
            For $n\geq 4$ agents, we present algorithms for the computation of partial EFX allocations with at most $(n-1)$ unallocated items; and $(n-1)$-approximation of EFX allocations.
        \end{abstract}

	\section{Introduction}
	
	Fairness is receiving increasing attention in a broad range of research fields, including but not limited to computer science, economics, and mathematics.
	A fair allocation problem focuses on allocating a set $M$ of $m$ items to a group $N$ of $n$ agents, where different agents may have different valuation functions on the items.
	When the valuation functions give positive values, the items are considered as goods, e.g., resources; when the valuation functions give negative values, the items are considered as chores, e.g., tasks.
	In the latter case, we refer to the valuation functions as cost functions.
	In this paper, we focus on the situation when the functions are additive.
	Arguably, two of the most well-studied fairness notions are {\em envy-freeness} (EF) \cite{foley1967resource} and {\em proportionality} (PROP) \cite{steihaus1948problem}.
	Proportionality means that each agent received at least her proportional share of all items.
	Envy-freeness is even stronger.
	Informally speaking, an allocation is EF if no agent wants to exchange her bundle of items with another agent in order to increase her utility.
	In contrast to the case of divisible items, where EF and PROP allocations always exist
	\cite{alon1987splitting,edward1999rental,conf/focs/AzizM16}, when items are indivisible, they are not guaranteed to exist even for some simple cases.
	For example, consider allocating a single indivisible item to two agents.
	This example also defies approximations of EF and PROP allocation. 
	Therefore, researchers have turned their attention to relaxations of these fairness notions.
	{\em Envy-freeness up to one item} (EF1) \cite{conf/sigecom/LiptonMMS04} and {\em envy-freeness up to any item} (EFX) \cite{journals/teco/CaragiannisKMPS19} are two widely studied relaxations of EF.
	Informally speaking, an EF1 allocation requires that the envy between any two agents can be eliminated by removing some item; while an EFX allocation requires that the envy can be eliminated by removing any item.
 
	It has been shown that EF1 allocations are guaranteed to exist and can be found in polynomial time for goods~\cite{conf/sigecom/LiptonMMS04}, chores and even mixture of the two~\cite{conf/approx/BhaskarSV21}.
	However, EF1 could sometimes lead to extreme unfairness, even if a much fairer allocation exists.
	EFX, on the other hand, puts a much stronger constraint on the allocation and is arguably the most compelling fairness notion.
	There are fruitful results regarding the existence and computation of (approximations of) EFX allocations since the notion was first proposed by Caragiannis et al.~\cite{journals/teco/CaragiannisKMPS19}.
	{  For the allocation of goods, it has been shown that EFX allocations exist for two agents with general valuations and any number of agents with identical ordering (IDO) valuations~\cite{journals/siamdm/PlautR20}, and three agents with additive valuations~\cite{conf/sigecom/ChaudhuryGM20}.
    Very recently, Akrami et al.~\cite{conf/sigecom/AkramiACGMM23} generalized the latter result to three agents with nearly general valuations.}
	It remains a fascinating open problem whether EFX allocations always exist in general.
	For general number of agents with additive valuations, there are efficient algorithms for the computation of  $0.5$-approximate\footnote{Regarding approximations of EFX allocations, the approximate ratios are at most $1$ for goods, and at least $1$ for chores.} EFX allocations~\cite{journals/siamdm/PlautR20,conf/atal/Chan0LW19} and $0.618$-approximate EFX allocations~\cite{journals/tcs/AmanatidisMN20}.
  
	In contrast to the allocation of goods, very little regarding EFX allocations for chores is known to this day.
	It can be easily shown that the divide-and-choose algorithm computes an EFX allocation {  for} two agents with general valuations. 
	{  Recently, it has been shown that EFX allocations always exist for some special cases, e.g., IDO instances~\cite{conf/www/LiLW22}, binary instances~\cite{journals/corr/abs-2308-12177}, instances with two types of chores~\cite{conf/atal/0001LRS23}, and instances with leveled preferences~\cite{journals/corr/abs-2109-08671}.}
	However, even for three agents with general additive valuations, it is unknown whether constant approximations of EFX allocations exist, let alone the existence of EFX allocations.
	
	\subsection{Main Results}
	
	In this paper, we propose polynomial-time algorithms for the computation of approximately EFX allocations for indivisible chores.
	For three agents, our algorithm achieves an approximation ratio of $2+\sqrt{6}$ while for $n\geq 4$ agents the approximation ratio is $3n^2-n$.
	Prior to our work, no non-trivial results regarding the approximation of EFX allocation for chores are known, except for some special cases~\cite{conf/www/LiLW22,journals/corr/abs-2109-08671}.

	\smallskip
	\noindent
	{\bf Result 1} (Theorem \ref{thm:n=3}){\bf .}
	{\em There exists a polynomial time algorithm that computes a $(2+\sqrt{6})$-approximate EFX allocation for three agents with additive cost functions.}
	
	\smallskip

	\noindent
	{\bf Result 2} (Theorem \ref{thm:general-n}){\bf .}
	{\em There exists a polynomial time algorithm that computes a $(3n^2-n)$-approximate EFX allocation for $n\geq 4$ agents with additive cost functions.}

	\paragraph{Main Challenge}
	While being two seemingly similar problems, the EFX allocations of goods and chores admit distinct difficulties in approximability.
	For the allocation of goods, while computing {  EFX allocations is} difficult, getting a constant approximation ratio turns out to be quite straightforward.
	Existing algorithms~\cite{journals/tcs/AmanatidisMN20,journals/siamdm/PlautR20} for the computation of approximately EFX allocations for goods are mainly based on the 
	\emph{Envy-Cycle Elimination} technique~\cite{conf/sigecom/LiptonMMS04,conf/approx/BhaskarSV21}.
	Roughly speaking, in these algorithms, in each round an ``un-envied'' agent will be chosen to pick her favourite unallocated item, where the envy-cycle elimination technique ensures that there will always be an un-envied agent in each round.
	The result  of~\cite{conf/www/LiLW22} also follows this framework to compute an EFX allocation for chores on identical ordering instances.
	The key to the analysis is showing that the value/cost of an agent increases by at most a small constant factor in each round.
	However, it seems quite challenging to extend this framework to handle general instances for the allocation of chores.
	For the allocation of goods, it can be shown that the utilities of agents are non-decreasing in each round: the picking operation and envy-cycle elimination do not decrease the value of any agent.
	In contrast, for the allocation of chores, when an agent picks an item, its cost increases; when an envy-cycle is eliminated, the cost of the involved agents decreases.
	This introduces a main difficulty in computing an (approximation of) EFX allocation for chores: if the cost of an agent is very small when it picks an item with large cost, the approximation ratio of EFX can be arbitrarily bad.
	
	\paragraph{Our Techniques}
	To get around this difficulty, we adopt a completely different approach in {  allocating} the items.
	Our first observation is that when all items have small costs to all agents, then there is likely to exist a partition that looks ``even'' to all agents.
	To handle this case, we propose the Sequential Placement algorithm, which computes a partition of the items for a group of agents such that the ratios between the cost of any two bundles are bounded, under every agent's cost function.
	On the other hand, if there exists an item $e$ that has large cost to an agent $i$, then by allocating the item to some other agent $j$, we can ensure that agent $i$ does not strongly envy agent $j$, no matter what items agent $i$ received eventually.
	Our algorithms rely on a careful combination of the above two observations.
	For three agents, we show that by classifying the agents into different types depending on the number of large items they have, we are able to get a $(2+\sqrt{6})$-approximate EFX allocation.
	To extend the ideas to general number agents, we borrow some existing techniques for the computation of PROP1 allocation for goods~\cite{conf/sigecom/LiptonMMS04}, and bound the approximation ratio by $3n^2-n$, where $n$ is the number of agents.
	
	\medskip
	
	We also show that our results can be improved for \emph{bi-valued} instances.
	An instance is called a bi-valued if there exist two values $a,b\geq 0$ such that $c_i(e)\in\{a,b\}$ for every agent $i\in N$ and item $e\in M$.
	In other words, each agent classifies the set of items into \emph{large} items and \emph{small} items, and items in the same category have the same cost to the agent.
	The bi-valued instances are commonly considered as one of the most important special cases of the fair allocation problem~\cite{journals/corr/abs-2006-15747,conf/aaai/AkramiC0MSSVVW22,conf/sagt/GargM21,journals/tcs/AmanatidisBFHV21,conf/aaai/GargMQ22,conf/atal/EbadianP022}.
	For the case of bi-valued goods, polynomial time algorithms have been proposed for the computation of EFX allocations~\cite{journals/tcs/AmanatidisBFHV21} and EFX allocations that are also Pareto optimal (PO)~\cite{conf/sagt/GargM21}.
	{  It has been shown that for bi-valued chores, EF1 and PO allocations can be computed in polynomial time~\cite{conf/aaai/GargMQ22,conf/atal/EbadianP022}.}
	Unfortunately, no non-trivial result regarding the approximation of EFX allocations is known for bi-valued chores.
	
	In this paper, we make progress toward answering this problem for bi-valued instances.
	For three agents, we propose an algorithm that always computes an EFX allocation; for $n\geq 4$ agents, we propose 
    an algorithm that always computes an EFX partial allocation with at most $(n-1)$ unallocated items; and an algorithm that always computes an $(n-1)$-approximate EFX allocation.
    All our algorithms run in polynomial time.
 
	\smallskip
	\noindent
	{\bf Result 3} (Theorem \ref{thm:efx-3-agents-bi-valued}){\bf .}
	{\em For any instances with bi-valued cost functions, there exists a polynomial time algorithm that computes an EFX allocation for three agents.}
	
	\smallskip
	\noindent
	{\bf Result 4} (Theorem \ref{thm:partial-n-bivalued} and \ref{thm:general-n-bivalued}){\bf .}
	{\em For any instances with $n\geq 4$ agents with bi-valued cost functions, we can compute in polynomial time an EFX partial allocation with at most $(n-1)$ unallocated items and an $(n-1)$-approximate EFX allocation.
    }
 
	\subsection{Other Related Works}
	
	For the allocation of goods, there are also works that study partial EFX allocations, i.e., EFX allocations with some of the items unallocated.
	To name a few, Chaudhury et al.~\cite{conf/soda/ChaudhuryKMS20} show that EFX allocations exist if we are allowed to leave at most $n-1$ items unallocated.
	The result has been improved to $n-2$ items by Berger et al.~\cite{conf/aaai/BergerCFF22}.
	They also show that an EFX allocation that leaves at most one item unallocated exists for four agents.
	Chaudhury et al.~\cite{conf/sigecom/ChaudhuryGMMM21}, Berendsohn et al.~\cite{conf/mfcs/BerendsohnBK22}, and Akrami et al.~\cite{conf/sigecom/AkramiACGMM23} show that a $(1-\epsilon)$-approximate EFX allocation with sublinear number of unallocated goods and high Nash
	welfare exists.
	It remains unknown whether similar results (for the computation of approximately EFX partial allocation) hold for the allocation of chores.
	
	Besides EFX, there are other well-studied fairness notions, e.g., MMS~\cite{conf/bqgt/Budish10} and PROPX~\cite{journals/annurev/Herve19}, for the allocation of chores.
	{  While it has been shown that MMS allocations are not guaranteed to exist for indivisible chores~\cite{conf/aaai/AzizRSW17}, many works study the approximation of MMS~ \cite{conf/aaai/AzizRSW17,aziz2022approximate,conf/sigecom/HuangL21}, which results in the state-of-the-art approximation ratio of $13/11$~\cite{conf/sigecom/HuangS23}.}
	Regarding PROPX allocations, in contrast to the allocation of goods, {  for which} PROPX allocations are not guaranteed to exist~\cite{journals/annurev/Herve19,journals/orl/AzizMS20}, it is shown that PROPX allocations always exist and can be computed efficiently for chores~\cite{conf/www/LiLW22}.
    For a more detailed review of the algorithms and results for the fair allocation problem, please refer to the recent surveys by Aziz et al.~\cite{journals/sigecom/AzizLMW22} and Amanatidis et al.~\cite{journals/ai/AmanatidisABFLMVW23}.

	\subsection{Paper Outline}
    We first present the formal definitions and notions for the problem in Section~\ref{sec:preliminaries}.
	Then we consider the approximate EFX allocations for three agents and for four or more agents in Section~\ref{sec:three} and~\ref{sec:fourormore}, respectively.
	We consider the bi-valued instances with three agents in Section~\ref{sec:bi-three}, where we introduce the algorithm for computing EFX allocations.
	In Section~\ref{sec:bi-fourormore}, we study the bi-valued instances for four or more agents, and present the algorithms for computing the partial EFX allocations and the approximation EFX allocations.
	Finally, we conclude our results and propose some open problems in Section~\ref{sec:conclusion}.

	\section{Preliminaries}\label{sec:preliminaries}
	
	We consider how to fairly allocate a set of $m$ indivisible chores $M$ to a group of $n$ agents $N$.
    {  Unless otherwise specified, in this paper we use ``item'' to refer to a chore.}
	A bundle is a subset of items $X\subseteq M$.
	An allocation is represented by an $n$-partition $\bX = (X_1,\cdots,X_n)$ of the items, where $X_i \cap X_j = \emptyset$ for all $i \neq j$ and $\cup_{i\in N} X_i = M$.
	In the allocation $\bX$, agent $i\in N$ receives bundle $X_i$.
    We call the allocation $\bX$ a \emph{partial allocation} if $\cup_{i\in N} X_i\subsetneq M$.
	Each agent $i \in N$ has an additive cost function $c_i: 2^M \rightarrow \mathbb{R}^+ \cup \{0\}$. 
	That is, for any $i\in N$ and $X \subseteq M$, $c_i(X) = \sum_{e\in X} c_i(\{e\})$.
	When there is no confusion, we use $c_i(e)$ to denote $c_i(\{e\})$ for convenience. 
    Without loss of generality, we assume that all cost functions are normalized.
	That is, for any $i\in N$, $c_i(M)=1$.
	Further, given any set $X\subseteq M$ and $e\in M$, we use $X+e$ and $X-e$ to denote $X\cup\{e\}$ and $X\setminus\{e\}$, respectively.
	
	\begin{definition}[EF]
		An allocation $\bX$ is called envy-free (EF) if $c_i(X_i) \leq c_i(X_j)$ for any $i,j\in N$.
	\end{definition}
	
	\begin{definition}[$\alpha$-EFX]
		For any $\alpha \geq 1$, an allocation $\bX$ is $\alpha$-approximate envy-free up to any item ($\alpha$-EFX) if for any $i,j \in N$ and any $e \in X_i$,
		\begin{equation*}
		    c_i(X_i - e)\leq \alpha\cdot c_i(X_j).
		\end{equation*}
		When $\alpha = 1$, the allocation $\bX$ is EFX.
	\end{definition}

    
	Let $\sigma_i(j)$ be the $j$-th most costly item under $c_i$ (with ties broken deterministically, e.g., by item ID).
	In other words, for every agent $i\in N$, we have
	\begin{equation*}
		c_i(\sigma_i(1)) \geq c_i(\sigma_i(2)) \geq \cdots \geq c_i(\sigma_i(m)).
	\end{equation*}
	
	For each agent $i\in N$, we define $M_i^- = \{ \sigma_i(j): j \geq n  \}$ as the set of \emph{tail items}.
	Observe that we have $|M_i^-|  = m-n+1$ and $c_i(e) \geq c_i(e')$ for all $e\notin M_i^-$ and $e'\in M_i^-$.
	With this observation, we show that for every instance there exists a simple allocation that is $(m-n)$-EFX.
	
	\begin{lemma}\label{lemma:m-n+1}
		There exists an $(m-n)$-EFX allocation for every instance with $m$ items and $n$ agents.
	\end{lemma}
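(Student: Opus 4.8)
The plan is to write down one explicit allocation derived entirely from a single agent's cost ranking and then check the $\alpha$-EFX condition by hand with $\alpha = m-n$. First I would dispose of the degenerate case $m \le n$: here one can give each agent at most one item, so every bundle has size at most one, $c_i(X_i - e) = 0$ for each $e \in X_i$, and the allocation is EFX (hence $\alpha$-EFX for every $\alpha \ge 1$); so assume $m \ge n+1$ from now on.

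For the main case, I would fix agent $n$ and use her ordering $\sigma_n$: set $X_j = \{\sigma_n(j)\}$ for $j = 1, \dots, n-1$ and $X_n = M_n^- = \{\sigma_n(j) : j \ge n\}$. Thus each of agents $1,\dots,n-1$ holds a single item, and agent $n$ holds exactly the $m-n+1$ tail items of her own ranking. The point of this choice is that the agents with singleton bundles are automatically fine: for such an agent $j$ and its unique item $e$ we have $c_j(X_j-e)=0\le \alpha\, c_j(X_k)$ for every $k$, so the entire burden falls on agent $n$.

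For agent $n$ the check against $k=n$ is immediate since $c_n(X_n-e)\le c_n(X_n)\le \alpha\,c_n(X_n)$. Against $k\le n-1$, the key inequalities are: every item in $X_n = M_n^-$ costs at most $c_n(\sigma_n(n))$ under $c_n$, and exactly $m-n$ of them remain after deleting any $e$, so $c_n(X_n-e)\le (m-n)\,c_n(\sigma_n(n))$; meanwhile $c_n(X_k) = c_n(\sigma_n(k)) \ge c_n(\sigma_n(n-1)) \ge c_n(\sigma_n(n))$. Chaining these gives $c_n(X_n-e)\le (m-n)\,c_n(X_k)$, which is precisely $(m-n)$-EFX. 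Since the allocation is obtained just by sorting the items under $c_n$, it is computed in polynomial time.

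I do not expect a genuine obstacle; the only thing to watch is the boundary case in which $c_n(\sigma_n(n-1))=0$, so the right-hand side vanishes, but then the chain above forces $c_n(\sigma_n(n))=0$ and hence $c_n(X_n)=0$, so the inequality still holds trivially. The one idea in the proof is to anchor the construction on a single agent's ranking, so that the $n-1$ singleton bundles need no argument and all the cost is concentrated in the single bundle of cheap tail items, where it is amortized against the $n-1$ expensive items handed out.
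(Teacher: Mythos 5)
Your proposal is correct and is essentially the paper's own proof: the same anchoring on agent $n$'s ranking, the same allocation $X_j=\{\sigma_n(j)\}$ for $j<n$ and $X_n=M_n^-$, and the same amortization of the $m-n$ remaining tail items against the cost of any top-$(n-1)$ item. The extra care you take with the $m\le n$ case and with vanishing costs is harmless but not needed beyond what the paper already asserts.
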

	\begin{proof}
		Fix an arbitrary agent, say agent $n$, and define the allocation as follows.
		For all $i<n$, let $X_i = \{ \sigma_n(i) \}$.
		Let $X_n = M_n^-$ be the set containing the remaining items.
		Obviously, the resulting allocation is EFX for all agents $i < n$ since $|X_i| = 1$.
		By the definition of $M_n^-$, for every $e\in M_n^-$ and $i < n$, we have
		\begin{equation*}
		c_n(M_n^- - e) \leq (m-n)\cdot c_n(\sigma_n(i)) = (m-n)\cdot c_n(X_i).
		\end{equation*}
		Consequently, the allocation is $(m-n)$-EFX.
	\end{proof}
	Unfortunately the above approximation ratio is too large, especially when $m$ is large and $n$ is small.
	In the following sections, we present algorithms that compute allocations with approximations of EFX that depend only on $n$.
	In particular, our algorithm computes a $(2+\sqrt{6})$-EFX allocation when $n=3$ and $(3n^2-n)$-EFX allocation when $n\geq 4$.

	\section{Three Agents}\label{sec:three}
	
	In this section we present the algorithm that computes $(2+\sqrt{6})$-EFX allocations when $n=3$. 
	The ideas we use to compute the allocation in this part inspire our design of an approximation algorithm for general $n$.
	
	\begin{theorem}\label{thm:n=3}
		There exists an algorithm that computes an $\alpha$-EFX allocation where $\alpha = 2+\sqrt{6}\approx 4.4495$ for three agents in $O(m \log m)$ time.
	\end{theorem}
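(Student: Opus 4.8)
The plan is to combine an ``insurance'' mechanism for expensive items with a balancing subroutine for cheap items, stitched together by a short case analysis on how the expensive items are spread over the three agents. Fix the threshold $\tau=1/5$; call an item $e$ \emph{$i$-large} if $c_i(e)\ge\tau$, and \emph{large} if it is $i$-large for some agent $i$. Since every $c_i$ is normalized, each agent has at most five large items, so the set $L$ of all large items has constant size. The point of large items is the following insurance inequality: if in the final allocation $\bX$ every bundle $X_j$ with $j\neq i$ contains at least one $i$-large item, then agent $i$ is automatically $5$-EFX-satisfied, since for every $e\in X_i$ and $j\neq i$,
\[
c_i(X_i-e)\;\le\;c_i(X_i)\;\le\;c_i(M)\;=\;1\;=\;5\tau\;\le\;5\,c_i(X_j).
\]
So it suffices to place the large items so as to ``cover'' in this sense as many of the three agents as possible, and then to distribute the remaining small items so that every covered agent stays covered --- which is automatic, because adding items to $X_j$ only increases $c_i(X_j)$ --- while the at most one agent we fail to cover ends up with a roughly even three-way split of what it perceives.

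For the latter we use the Sequential Placement subroutine from the introduction: given a collection of items and $k\le 3$ target bundles, it returns a $k$-partition whose bundles have, simultaneously under the cost functions of all relevant agents, costs within a bounded ratio of each other; and when every item handed to it is small (cost at most $\tau$) to those agents, that ratio is a small absolute constant. An already-covered agent needs nothing further, while an agent all of whose remaining items are small only needs its three bundles balanced to within a constant factor of $1/3$; so we run Sequential Placement on the leftover small items and append its three bundles to the three partially built bundles.

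The analysis proceeds by cases on $(\ell_1,\ell_2,\ell_3)$, where $\ell_i\in\{0,1,\ge 2\}$ is the number of $i$-large items of agent $i$. If some agent $i$ has $\ell_i\ge 2$, place one of its $i$-large items in each of the other two bundles --- covering $i$ outright --- and then satisfy the other two agents using a balanced small-item split together with their own large items. If $\ell_1=\ell_2=\ell_3=0$, Sequential Placement with $k=3$ on all of $M$ finishes immediately. The remaining cases, where the $\ell_i$ lie in $\{0,1\}$, are the delicate ones: we single out one agent to be the (at most one) uncovered agent --- morally the one whose cost function is flattest, that is, has no large item or whose lone large item cannot be placed favourably --- give each of the other two agents its one covering large item inside each of its two foreign bundles, and then balance all three bundles with the small items via Sequential Placement so the uncovered agent sees an almost-even split; one checks this can be arranged so that either the uncovered agent's two costliest items $\sigma_i(1),\sigma_i(2)$ both land in its two foreign bundles, or all of its items are cheap enough that balancing alone certifies $5$-EFX.

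The main obstacle is exactly this case analysis: routing agent $i$'s large item into $X_j$ also changes how agents $j$ and $k$ see $X_j$, so one must argue that covering one agent never makes another over-envy by more than a factor $5$, and that the single uncovered agent always falls into one of the two good situations above. Quantitatively, the crux is that the partially built bundles must never be so lopsided from the uncovered agent's viewpoint that no subsequent small-item balancing can pull the ratio down to $5$; checking this, and that $\tau=1/5$ is the right threshold to make the insurance bound and the balancing bound land simultaneously at $5$, is the heart of the proof. The remaining ingredients --- the constant bound on $|L|$, the insurance inequality, and the $O(m\log m)$ running time (dominated by sorting, for identifying large items and inside Sequential Placement, with everything else linear) --- are routine.
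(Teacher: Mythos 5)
Your high-level architecture --- classify items as large or small per agent, park large items in foreign bundles as ``insurance'', and balance the small items --- is exactly the strategy the paper uses, and your insurance inequality is sound as far as it goes. But there is a genuine gap at precisely the point you yourself flag as ``the heart of the proof''. Covering agent $i$ in your sense requires an $i$-large item in \emph{each} of the two foreign bundles, hence at least two distinct $i$-large items. In the delicate cases where every $\ell_i\in\{0,1\}$ --- which, after the paper's preliminary reduction, is in fact the \emph{only} situation that survives, since $c_i(M_i^-)>5\,c_i(\sigma_i(2))$ forces $c_i(\sigma_i(2))<1/5$ and hence at most one large item per agent --- no agent can be covered at all. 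So your claim that at most one agent is left uncovered is false in exactly the hard regime, and the instruction to ``give each of the other two agents its one covering large item inside each of its two foreign bundles'' cannot be carried out with a single item. What actually has to happen is that each such agent is protected against one foreign bundle by its lone large item and against the other foreign bundle by a separate balancing or picking argument, and making both protections hold simultaneously for all three agents is where all the work lies.

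The paper resolves this with machinery your sketch does not supply: a two-agent Sequential Placement lemma showing that both \emph{small} agents see every bundle in $[1/8,5/8]$ (note your assertion that Sequential Placement balances ``simultaneously under the cost functions of all relevant agents'' is only established when two agents take turns placing; for the third agent one must additionally let her pick her cheapest bundle), and, when one or zero agents are small, EFX partitions of the remaining items under a single agent's cost function combined with a carefully chosen picking order, yielding quantitative bounds such as $\max\{c_1(S_1),c_1(S_2)\}\le \tfrac{5}{8}\,c_1(M^-)$. A related unproved step in your sketch is that the small-item balancing still certifies the uncovered agent after the bundles have been seeded with large items, which breaks the clean ``start from empty bundles'' analysis. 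Until these case-by-case arguments are written out, the proposal is an accurate roadmap of the paper's proof rather than a proof.
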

    
	{  
	For three agents we have  $M_i^- = M\setminus \{ \sigma_i(1),\sigma_i(2) \}$ for all $i\in N$.
	Observe that if there exists an agent $i\in N$ with $c_i(M_i^-) \leq \alpha \cdot c_i(\sigma_i(2))$, then by allocating $\sigma_i(1)$ and $\sigma_i(2)$ to the other two agents and $M_i^-$ to agent $i$, we end up having an $\alpha$-EFX allocation (following an analysis similar to the proof of Lemma~\ref{lemma:m-n+1}).}
	Hence it suffices to consider the case when every agent $i\in N$ has
	\begin{equation*}
	c_i(M_i^-) > \alpha\cdot c_i(\sigma_i(2)).
	\end{equation*}
	
	In other words, item $\sigma_i(2)$ (and thus every item other than $\sigma_i(1)$) is ``small''
	to agent $i$ in the sense that it contributes at most a $\frac{1}{\alpha+1}$ fraction to the total cost of $M - \sigma_i(1)$.
	Depending on whether $\sigma_i(1)$ has large cost, we classify the agents into two types: large agents and small agents.
    Here we introduce $\beta \in (0, 1/8)$ as the threshold to define large and small agents and our approximation ratio $\alpha$ will be decided by an optimized $\beta$.

	\begin{definition}[Large/Small Agent]
		We call agent $i\in N$ a \emph{large} agent if $c_i(\sigma_i(1)) \geq \beta$; \emph{small} otherwise.
	\end{definition}

	The main intuition behind the definition is as follows.
	For a large agent $i$, if $\sigma_i(1) \in X_j$ for some agent $j\neq i$, then as long as $c_i(X_i) \leq \alpha \cdot \beta$, agent $i$ is $\alpha$-EFX towards agent $j$, even if $X_j$ contains only one item $\sigma_i(1)$.
	On the other hand, for a small agent $i$, every item $e\in M$ has cost $c_i(e) < \beta$.
	Thus we can partition the items into bundles of roughly the same cost, under $c_i$, so that no matter which of these bundles agent $i$ eventually receives, she will be $\alpha$-EFX towards any other agent.
	Following the above intuitions, we proceed by considering how many small agents there are.
	In the following sections, we show that under different cases, the approximation ratio is either bounded by $\frac{2+4\beta}{1-4\beta}$ or $\frac{1}{2\beta}$.
    Therefore, by defining $\alpha = \max\left\{ \frac{2+4\beta}{1-4\beta},\frac{1}{2\beta} \right\}$, we can show that the algorithm is $\alpha$-approximate.
%
	Observe that when $0 < \beta < \frac{1}{8}$, $\frac{2+4\beta}{1-4\beta}$ is monotonically increasing and $\frac{1}{2\beta}$ is monotonically decreasing.
    Thus we can minimize $\alpha$ by picking $\beta$ as the positive root $\frac{\sqrt{6}-2}{4}$ of equation $\frac{2+4x}{1-4x} = \frac{1}{2x}$, which gives $\alpha = \max \left\{\frac{2+4\beta}{1-4\beta},\frac{1}{2\beta} \right\} =  2 + \sqrt{6} \approx 4.4495$.

	\subsection{At Least Two Small Agents}
	
	We first consider the case when there are at least two small agents.
	Without loss of generality, suppose agents $1$ and $2$ are small.
	Agent $3$ can be either small or large.
	
	\begin{lemma}\label{lemma:two-small}
		We can compute in polynomial-time a $3$-partition $(S_1, S_2, S_3)$ of $M$ such that for both $i\in \{1,2\}$, we have $c_i(S_j) \in [\frac{1}{4}-\beta, \frac{1}{2}+\beta]$ for all $j\in \{1,2,3\}$.
	\end{lemma}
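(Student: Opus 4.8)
The plan is to build $(S_1,S_2,S_3)$ by a single greedy pass that balances $c_1$ and $c_2$ simultaneously, and then to read off the interval $[1/8,5/8]$ from the resulting balance guarantee. Since agents $1$ and $2$ are small, $c_1(e),c_2(e)<1/8$ for every $e\in M$, so $\delta:=\max_{e\in M}\max\{c_1(e),c_2(e)\}<1/8$. For a bundle $S$ put $\phi(S):=\max\{c_1(S),c_2(S)\}$; note $\phi$ is monotone under adding items. I would process the items in non-increasing order of $\phi(\{e\})$ and place each item into a bundle currently minimizing $\phi(S_j)$, breaking ties so that the first three items land in distinct bundles. This takes $O(m\log m)$ time, and since $m\ge 9$ every bundle ends up non-empty.

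The upper bound on every $c_i(S_j)$ is the clean part. A ``last item'' argument shows that at termination $\max_j\phi(S_j)-\min_j\phi(S_j)\le\delta$: if $S_a$ has the largest final $\phi$ and $e$ is the last item placed into $S_a$, then just before that placement the bundle $S_a$ held exactly $S_a-e$ and was $\phi$-minimal, so every other bundle then had $\phi\ge\phi(S_a-e)\ge\phi(S_a)-\delta$, and $\phi$ only grows afterwards. Now $\sum_j c_1(S_j)=1$ forces $\max_j c_1(S_j)\ge 1/3$, hence $\max_j\phi(S_j)\ge 1/3$ and $\min_j\phi(S_j)\ge 1/3-\delta>5/24$. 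For the opposite direction, split $\{1,2,3\}$ according to which of $c_1,c_2$ attains $\phi(S_j)$; one part has at least two indices, say the part where $c_1(S_j)=\phi(S_j)$, and over those indices $\sum\phi(S_j)=\sum c_1(S_j)\le 1$, so $\min_j\phi(S_j)\le 1/2$ and therefore $\max_j\phi(S_j)\le 1/2+\delta<5/8$. Thus $\phi(S_j)\in[5/24,5/8)$ for every $j$, which already gives $c_1(S_j),c_2(S_j)<5/8$, and, for each $j$, that at least one of $c_1(S_j),c_2(S_j)$ is at least $5/24>1/8$.

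What remains, and what I expect to be the real obstacle, is the lower bound $\min\{c_1(S_j),c_2(S_j)\}\ge 1/8$ for every $j$: a priori some bundle could be charged its load $\phi(S_j)>5/24$ by agent $1$ while costing agent $2$ almost nothing. I would try to close this gap in one of two ways. The first is to refine the analysis of the pass itself: if, say, $c_2(S_1)<1/8$ then $c_1(S_1)=\phi(S_1)>5/24$ and $c_2(S_2)+c_2(S_3)>7/8$, which forces $\phi(S_2),\phi(S_3)$ to be large; processing items in decreasing $\phi(\{e\})$ order, one can then trace back to the moment $S_1$ last received an item and try to derive a contradiction from the fact that $S_1$ must have been $\phi$-minimal --- hence a recipient --- for every sufficiently $c_2$-heavy item processed afterwards. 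The second, if the pass alone does not suffice, is to append a repair loop: while some bundle is deficient for some agent, move a carefully selected subset of items into it from the bundle that is richest for that agent, arranged so that the deficient bundle climbs into $[1/8,1/4)$, the source bundle stays above $1/8$ for that agent, and neither bundle's cost for the \emph{other} agent leaves $[1/8,5/8]$. The constraint that is genuinely hard to keep is the last one: the destination bundle's cost for the other agent may already be near $5/8$, and a transfer that repairs one agent can drag along too much of the other agent's cost, while the source bundle's cost for the other agent can drop too far. The fix should be to choose the transferred subset with a small $c_1$-to-$c_2$ ratio --- exploiting that the source bundle's own costs have bounded ratio, since both lie in $[5/24,5/8)$ --- and possibly to cycle transfers through all three bundles. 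Tuning these transfers so that all four quantities ($c_1$ and $c_2$ of the source and of the destination) stay inside the target window is exactly what pins down the constants $1/8$ and $5/8$ in the statement.
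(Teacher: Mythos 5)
Your proposal does not close the lemma: the two-sided bound you actually establish is $\max\{c_1(S_j),c_2(S_j)\}\in(5/24,5/8)$ for every $j$, which gives the upper bound $c_i(S_j)<5/8$ for both agents but only a lower bound on the \emph{larger} of the two costs of each bundle. The heart of the lemma is precisely the simultaneous lower bound $c_1(S_j)\ge 1/8$ \emph{and} $c_2(S_j)\ge 1/8$, and your potential $\phi(S)=\max\{c_1(S),c_2(S)\}$ is structurally blind to this: a bundle can carry its entire $\phi$-load on agent $1$ while being essentially free for agent $2$ (think of two disjoint families of items, one supported only on $c_1$ and one only on $c_2$; balancing $\phi$ says nothing about how the $c_2$-mass is split once each bundle's $c_1$-share already dominates). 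You identify this yourself as ``the real obstacle,'' and neither of your two proposed fixes is carried out: the refinement of the greedy pass does not produce a contradiction (from $c_2(S_1)<1/8$ one only gets $\phi(S_2),\phi(S_3)>1/4$ and $\phi(S_1)>5/16$, all consistent with your balance guarantee $\max_j\phi-\min_j\phi\le\delta$), and the repair loop is left entirely unspecified --- no choice of transferred subset, no termination argument, no proof that the four quantities stay in the window. So the proof as written has a genuine gap at the only hard step.

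For comparison, the paper sidesteps the issue by never aggregating the two cost functions. Agents $1$ and $2$ alternate rounds; in her own round, an agent places her most costly remaining item into the bundle that is currently cheapest \emph{under her own cost function}. The analysis is then done separately for each agent $i\in\{1,2\}$: because agent $i$ places items in decreasing $c_i$-order, the items placed by the other agent after any round $t$ carry at most half of the remaining $c_i$-mass, and a ``last round in which agent $i$ added to her most expensive bundle'' argument bounds the spread of the three bundles under $c_i$ by roughly $1/4$ plus the other agent's contribution. That per-agent control is exactly what yields $c_i(S_j)\ge 1/8$ for \emph{both} agents, which a single scalar potential cannot deliver. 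If you want to salvage your approach, you would need to replace $\phi$ by a mechanism that separately tracks each agent's deficit --- at which point you are essentially reconstructing the alternating scheme.
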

	
	Note that Lemma~\ref{lemma:two-small} immediately implies Theorem~\ref{thm:n=3} when there are at least two small agents for the following reasons.
	Since the costs of the three bundles $S_1,S_2$ and $S_3$ differ by a factor of at most
    \begin{equation*}
        \frac{1/2+\beta}{1/4-\beta} = \frac{2+4\beta}{1-4\beta} \leq \alpha,
    \end{equation*}
    agent $1$ and $2$ are $\alpha$-EFX towards any other agent as long as every agent gets exactly one bundle.
	Therefore, by letting agent $3$ pick her favorite bundle, i.e., the one with minimum $c_3(S_j)$, and allocating the remaining two bundles to agents $1$ and $2$ arbitrarily, we end up with an $\alpha$-EFX allocation: agent $3$ does not envy agents $1$ and $2$; agents $1$ and $2$ are $\alpha$-EFX towards any other agent.
	Thus it remains to give the polynomial-time algorithm for the computation of $(S_1,S_2,S_3)$.
	The main idea behind the algorithm is quite simple: since agents $1$ and $2$ have small costs on every item, round-robin-like algorithms should work in computing such a partition.
	
	\paragraph{The Algorithm}
	We initialize $S_j$ as an empty bundle for all $j\in \{1,2,3\}$.
	Then we let agents $1$ and $2$ take turns to put the unallocated item with maximum cost into the bundle with the smallest cost, both under their own cost function until all items are allocated (see Algorithm~\ref{alg:RRP}).
	
	\begin{algorithm}
		\caption{Sequential Placement}\label{alg:RRP} 
		Initialize: $S_j \gets \emptyset$ for all $j\in \{1,2,3\}$, $P\gets M$ and $k\gets1$  \;
		\While{$P\neq\emptyset$}{
			let $e^*\leftarrow\argmax\{{c_k(e):e\in P}\}$ and $j^*\leftarrow\argmin\{{c_k(S_j):j\in \{1,2,3\}}\}$\;
			$S_{j^*}\leftarrow S_{j^*} + e^*,P\leftarrow P-e^*$\;
			$k\leftarrow (k\bmod 2)+1$ \;
		}
		\KwOut{$(S_1,S_2,S_3)$}
	\end{algorithm}
	
	Next we show that for both $i\in\{1,2\}$, we have $c_i(S_j) \in [1/4-\beta, 1/2+\beta]$ for all $j\in \{1,2,3\}$.
	
	\begin{proofof}{Lemma~\ref{lemma:two-small}}
		Fix any agent $i\in \{1,2\}$, say $i=1$.
		We assume w.l.o.g. that $c_1(S_1)\leq c_1(S_2)\leq c_1(S_3)$.
		We show that $c_1(S_3) \leq 1/2+\beta$ and $c_1(S_1) \geq 1/4-\beta$.
		Recall that each agent allocates one item to some bundle in each round.
		We say that it is agent $i$'s round if during this round agent $i$ gets to allocate an item to one of the bundles.	

		Consider the last round during which agent $1$ allocates an item to the bundle $S_3$, and let $t$ be the \emph{next} agent $1$'s round.
        {  If $t$ does not exist, e.g., agent $1$ allocates an item to bundle $S_3$ in her last round, then we let $t$ be a dummy round that happens after the sequential placement and in round $t$ agent $1$ allocates a dummy item (with cost $0$) to some bundle other.}
		If agent $1$ never assigns any item to $S_3$, then let $t$ be the first agent $1$'s round.
		In other words, $t$ is defined as the earliest agent $1$'s round such that starting from round $t$, agent $1$ never assigns any item to bundle $S_3$.
		Let $S'_1, S'_2, S'_3$ be the three bundles at the beginning of round $t$.
		Let $P' = M\setminus (S'_1\cup S'_2\cup S'_3)$ be the set of unallocated items at this moment.
        \begin{itemize}
        \item If $t$ is the first agent $1$'s round, then we have $c_1(S'_3) \leq \beta$, since at most one item is allocated to $S'_3$ and each item has cost at most $\beta$ under $c_1$.
		
		\item Otherwise, since agent $1$ allocates one item to $S_3$ in the previous agent $1$'s round, we know that at the beginning of that round, the cost of bundle $S_3$ is the minimum among the three.
		Thus we have
		\begin{equation}\label{eq:dif-of-S1-S3}
		c_1(S'_3) \leq c_1(S'_1) + 2\beta,
		\end{equation}
		since at most one item is allocated in each round. 
        \end{itemize}
        
		In both cases we have $c_1(S'_3) \leq  c_1(S'_1) + 2\beta$.
		
		\smallskip
		
		Let $P_1$ (resp. $P_2$) be the set of items in $P'$ that are allocated by agent $1$ (resp. $2$).
		Note that all these items are allocated at or after round $t$.
		Since agent $1$ allocates the items from the most costly to the least, we have $c_1(P_1) \geq c_1(P_2)$.
		We first give an upper bound on $c_1(S_3)$.
		
		\begin{claim}\label{claim:S3+P2}
			We have $c_1(S'_3) + c_1(P_2) \leq 1/2+\beta$.
		\end{claim}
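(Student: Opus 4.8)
The plan is to bound $c_1(S_3)$ by splitting the items in $S_3$ into those present at the beginning of round $t$ (i.e.\ in $S'_3$) and those added later. By the choice of $t$, every item added to $S_3$ at or after round $t$ is placed there by agent $2$, not agent $1$; hence $c_1(S_3) = c_1(S'_3) + c_1(S_3 \cap P_2')$ where $P_2'$ is the subset of $P_2$ that eventually lands in $S_3$. Since $c_1(S_3\cap P_2') \le c_1(P_2)$, it suffices to prove the stated inequality $c_1(S'_3) + c_1(P_2) \le 5/8$, which is what the claim asserts; I would then combine it at the end with the trivial $c_1(S_3) \le c_1(S'_3) + c_1(P_2)$.

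To prove $c_1(S'_3) + c_1(P_2) \le 5/8$, first I would use normalization: $c_1(M) = 1$, and $M$ is the disjoint union of $S'_1, S'_2, S'_3$ and $P = P_1 \cup P_2$. Therefore
\begin{equation*}
c_1(S'_1) + c_1(S'_2) + c_1(S'_3) + c_1(P_1) + c_1(P_2) = 1.
\end{equation*}
Now I invoke the two facts already established before the claim: $c_1(S'_3) \le c_1(S'_1) + 1/4$ (shown to hold in both cases just above), and $c_1(P_1) \ge c_1(P_2)$ (agent $1$ allocates items from most to least costly, and all of $P_1, P_2$ are allocated at or after round $t$, with agent $1$ moving first in the relevant alternation — or more carefully, because whenever agent $2$ places an item, the next item agent $1$ places is at least as costly to agent $1$; the paper states this). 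Also $c_1(S'_2) \ge 0$. Substituting $c_1(S'_1) \ge c_1(S'_3) - 1/4$ and $c_1(P_1) \ge c_1(P_2)$ into the normalization identity gives
\begin{equation*}
\bigl(c_1(S'_3) - 1/4\bigr) + 0 + c_1(S'_3) + c_1(P_2) + c_1(P_2) \le 1,
\end{equation*}
i.e.\ $2\,c_1(S'_3) + 2\,c_1(P_2) \le 5/4$, hence $c_1(S'_3) + c_1(P_2) \le 5/8$, as desired.

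The step I expect to need the most care is the justification of $c_1(P_1) \ge c_1(P_2)$. The subtlety is that $P_1$ and $P_2$ are not simply the items of a single contiguous round-robin phase: between round $t$ and the end, agent $1$ and agent $2$ alternate, but I must make sure the pairing argument — "match each item agent $2$ picks with the item agent $1$ picks immediately before (or after) it, and argue the latter is at least as costly to agent $1$" — actually goes through given that agent $1$ always removes the $c_1$-maximum remaining item on her turns. If agent $1$ moves first in this alternation starting at round $t$, then the $k$-th item agent $1$ removes is weakly more costly (under $c_1$) than the $k$-th item agent $2$ removes, because at the moment agent $1$ makes her $k$-th pick the item later taken by agent $2$ as her $k$-th pick is still available, so agent $1$'s pick is at least as costly; summing over $k$ and noting $|P_1| \ge |P_2|$ gives $c_1(P_1)\ge c_1(P_2)$. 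I would make this ordering assumption explicit (it follows from how $t$ is defined as an agent $1$'s round) and then the rest is the short arithmetic above.
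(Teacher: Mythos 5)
Your proof is correct and follows essentially the same route as the paper's: both combine the normalization $c_1(S'_1)+c_1(S'_2)+c_1(S'_3)+c_1(P_1)+c_1(P_2)=1$ with the two already-established facts $c_1(S'_3)\leq c_1(S'_1)+1/4$ and $c_1(P_1)\geq c_1(P_2)$ (the paper phrases the latter as $c_1(P_2)\leq \tfrac{1}{2}c_1(P)$), plus $c_1(S'_2)\geq 0$, and the algebra is identical up to rearrangement. Your explicit pairing argument for $c_1(P_1)\geq c_1(P_2)$ is a sound elaboration of what the paper asserts in one sentence.
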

		\begin{proof}
			Recall that
			\begin{equation*}
			    c_1(P_2) \leq \frac{1}{2}\cdot c_1(P') = \frac{1}{2}\cdot (1-c_1(S'_1)-c_1(S'_2)-c_1(S'_3)).
			\end{equation*}
			We have
			\begin{align*}
			c_1(S'_3) + c_1(P_2) \leq\ & c_1(S'_3) + \frac{1}{2}\cdot (1-c_1(S'_1)-c_1(S'_2)-c_1(S'_3)) \\
			\leq\ & \frac{1}{2} + \frac{1}{2}\cdot (c_1(S'_3)-c_1(S'_1))
			\leq \frac{1}{2} + \beta,
			\end{align*}			
			where the last inequality follows from~\eqref{eq:dif-of-S1-S3}.
		\end{proof}
		
		Recall that all items in $S_3 \setminus S'_3$ are allocated by agent $2$ after round $t$, by Claim~\ref{claim:S3+P2} we have
		\begin{equation*}
		c_1(S_3) \leq c_1(S'_3) + c_1(P_2) \leq \frac{1}{2}+\beta.
		\end{equation*}
		
		Next we give a lower bound for $c_1(S_1)$.
		If agent $1$ allocates all items in $P_1$ to $S_1$, then we have
		\begin{align*}
		c_1(S_1) \geq c_1(S'_1) + c_1(P_1) \geq c_1(S'_3) - 2\beta + c_1(P_2) \geq c_1(S_3) - 2\beta.
		\end{align*}
		
		Consequently we have (recall that $c_1(S_2) \leq c_1(S_3)$)
		\begin{equation*}
		1 = c_1(S_1) + c_1(S_2) + c_1(S_3) \leq 3\cdot c_1(S_1) + 4\beta,
		\end{equation*}
		which gives $c_1(S_1) \geq \frac{1-4\beta}{3} > \frac{1}{4}-\beta$.

	    \begin{figure}[htb]
            \centering
            \includegraphics[width=0.3\textwidth]{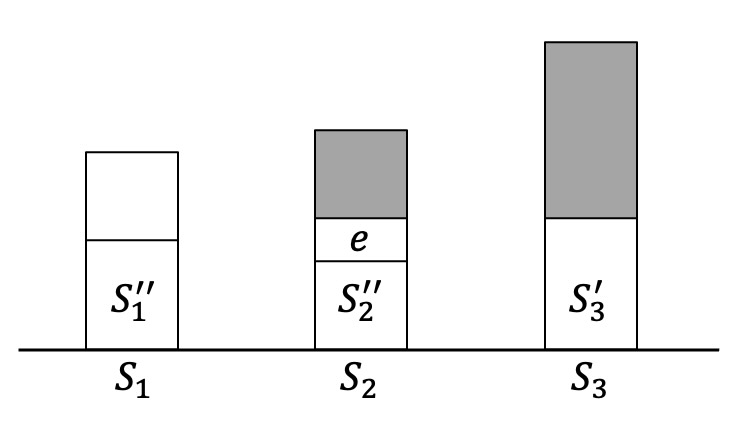}
            \caption{An illustration of the assignment of items in the three bundles, where the items within each bundle are ordered (from bottom to top) by the time they are included (from earliest to latest), and the items in the shaded areas belong to $P_2$.}
            \label{fig:P2_analysis}
        \end{figure}

		Otherwise we consider the last item $e\in P_1$ agent $1$ allocates to $S_2$ (see Figure~\ref{fig:P2_analysis} for an illustrative example).
		Let $S''_1$ and $S''_2$ be the two bundles right before the allocation.
		Note that $S'_1 \subseteq S''_1$, $S'_2 \subseteq S''_2$ and we have $c_1(S''_1) \geq c_1(S''_2)$.
		Since $e$ is the last item agent $1$ allocates to $S_2$, we know that all items in $S_2 \setminus (S''_2 + e)$ are from $P_2$.
		Also recall that after time $t$, agent 1 did not assign any item to $S_3$.
		In other words, all items in $S_3\setminus S_3'$ are from $P_2$.
		Thus we have
		\begin{equation*}
		    M\setminus (S_1 \cup S''_2 \cup \{e\}) \subseteq S_3 \cup P_2 = S'_3 \cup P_2.
		\end{equation*}
		By Claim~\ref{claim:S3+P2} we have
		\begin{equation*}
		1 - c_1(S_1) - c_1(S''_2) - c_1(e) \leq c_1(S'_3) + c_1(P_2) \leq \frac{1}{2}+\beta.
		\end{equation*}
		
		Since $c_1(S_1) \geq c_1(S''_1) \geq c_1(S''_2)$, we have
		\begin{equation*}
		c_1(S_1) \geq \frac{1}{2}\cdot \left(1-\frac{1}{2} -\beta - c_1(e)\right) \geq \frac{1}{4}-\beta,
		\end{equation*}
		where the second inequality holds due to $c_1(e) \leq \beta$.
	\end{proofof}

	\subsection{One Small Agent}
	
	Next we consider the case when there is exactly one small agent, say agent $3$.
	Let $e_1 = \sigma_1(1)$ and $e_2 = \sigma_2(1)$ be the most costly item under $c_1$ and $c_2$, respectively.
	Note that since agents $1$ and $2$ are large, we have $c_1(e_1) \geq \beta$ and $c_2(e_2) \geq \beta$.
	It is possible that $e_1=e_2$.

	\begin{lemma}\label{lemma:one-small}
	    When there is only one small agent, an $\alpha$-EFX allocation can be computed in polynomial time.
	\end{lemma}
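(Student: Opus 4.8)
The plan is to handle the case of exactly one small agent (agent $3$) by separately placing the two distinguished heavy items $e_1 = \sigma_1(1)$ and $e_2 = \sigma_2(1)$ so that the two large agents are protected from strong envy, and then dividing everything else cleverly between agent $3$ and the remaining large agent. Since $c_1(e_1) \ge 1/8$ and $c_1(M_1^-) > 5\cdot c_1(\sigma_1(2))$, every item other than $e_1$ is "small" under $c_1$ (contributes at most a $1/6$ fraction of $c_1(M - e_1)$), and similarly for agent $2$ with $e_2$. First I would split into the subcase $e_1 = e_2$ and the subcase $e_1 \neq e_2$. If $e_1 = e_2 =: e$, then I would give $e$ to agent $3$; because both agents $1$ and $2$ have all their remaining items small, I can run a Sequential-Placement-style procedure (as in Algorithm~\ref{alg:RRP}, but now only over $M - e$ and with agents $1,2$ alternating into two bundles) to split $M - e$ into two bundles $B_1, B_2$ whose costs are within a bounded factor under both $c_1$ and $c_2$; then agent $3$ already holds $e$, so we must instead have agent $3$ receive one of $B_1, B_2$ — so I would actually reverse roles: let $e$ go to whichever of agents $1,2$...

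Rather, the cleaner route in the case $e_1 \neq e_2$: assign $e_1$ to agent $2$ and $e_2$ to agent $1$ (each large agent gets the other's heavy item). Then agent $1$ will not strongly-envy agent $2$ as long as $c_1(X_1) \le 5\cdot c_1(\{e_1\})$, which holds once $c_1(X_1)\le 5/8$ (since $c_1(e_1)\ge 1/8$); symmetrically for agent $2$. So the remaining items $M' = M - e_1 - e_2$ must be divided among all three agents, adding to $X_1, X_2$ (which currently hold $e_2, e_1$ respectively) and forming $X_3$, while keeping $c_1(X_1)$ and $c_2(X_2)$ below $5/8$ and keeping agent $3$ — who has every item of cost $< 1/8$ — from strongly envying anyone, and keeping agents $1,2$ from strongly envying agent $3$. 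Here I would again invoke a placement/round-robin argument: agent $3$ is the only agent whose envy toward agent $3$... no — agent $3$ never envies herself; I need agents $1,2$ not to strongly-envy $3$ and agent $3$ not to strongly-envy $1,2$. I would run Sequential Placement with agents $1, 2, 3$ taking turns placing their heaviest remaining item of $M'$ into their currently-cheapest bundle among $\{X_1 \text{(seeded with }e_2), X_2\text{(seeded with }e_1), X_3\}$ — but the seeds break the symmetry, so I would instead first let agent $3$ (the small one) do the bulk of the balancing and argue, as in the proof of Lemma~\ref{lemma:two-small}, that each bundle ends up with cost in a bounded window under $c_3$; the protective role of $e_1, e_2$ then kills the residual strong envy of the two large agents.

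The key steps, in order: (1) reduce to the case where every agent $i$ has $c_i(M_i^-) > 5 c_i(\sigma_i(2))$ and only agent $3$ is small, so $e_1, e_2$ are the unique heavy items of agents $1, 2$; (2) dispose of the easy subcase $e_1 = e_2$ by handing that single item to one of the large agents and splitting the rest between the other large agent and agent $3$ via a two-bundle sequential placement, checking the $4$-EFX bound directly; (3) in the main subcase $e_1 \neq e_2$, cross-assign $e_1 \to X_2$, $e_2 \to X_1$ so that the large agents' envy toward each other is automatically controlled once their bundle costs stay $\le 5/8$; (4) run a three-bundle Sequential Placement on $M - e_1 - e_2$ driven primarily by agent $3$ (and possibly one of the large agents) and prove, in the style of Lemma~\ref{lemma:two-small}, that the three resulting bundles have costs within a factor of $4$ under $c_3$, and that $c_1(X_1), c_2(X_2) \le 5/8$; (5) conclude that no agent strongly-envies another with ratio $4$, by combining the bound under $c_3$ with the $e_1, e_2$ shielding argument for agents $1, 2$.

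I expect the main obstacle to be step (4): balancing three bundles that are already seeded with the large items $e_1$ and $e_2$ (which have very different costs under $c_1$, $c_2$, and $c_3$) while simultaneously guaranteeing a uniform window under $c_3$ and the one-sided upper bounds $c_1(X_1), c_2(X_2) \le 5/8$. The seeded bundles break the clean symmetry exploited in Lemma~\ref{lemma:two-small}, so the round-robin invariant ("the bundle just added to was the cheapest") must be re-established carefully, and one must ensure the placement order lets agent $3$ equalize her view of the bundles without letting agent $1$ or $2$ accumulate too much in $X_1$ or $X_2$; I anticipate needing a case analysis on whether $c_1(e_2)$ or $c_2(e_1)$ is large, and a slightly more generous constant (here $4$, versus $5$ elsewhere) to absorb the asymmetry.
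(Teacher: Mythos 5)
There are two genuine gaps. First, your final plan for the subcase $e_1=e_2$ (step (2): hand the item to one of the large agents and split the rest between the other large agent and agent $3$) does not work. Agent $3$ is small, so $c_3(e_1)<1/8$ and can in fact be arbitrarily close to $0$; meanwhile the bundle agent $3$ receives from the split of $M-e_1$ has $c_3$-cost on the order of $1/2$, so agent $3$ would strongly-envy the large agent holding $\{e_1\}$ by an unbounded factor. You actually had the right move in your first paragraph before talking yourself out of it: give $e_1=e_2$ to agent $3$ (who is then trivially EFX since $|X_3|=1$) and split $M-e_1$ between agents $1$ and $2$ as an EFX two-agent allocation; each of them then has $c_i(X_i-e)\le 1/2 \le 4\cdot c_i(e_i) \le 4\cdot c_i(X_3)$. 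There was no reason to insist that agent $3$ receive one of the two balanced bundles.

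Second, in the subcase $e_1\neq e_2$ your plan hinges on step (4) — a three-bundle sequential placement on $M-e_1-e_2$ with bundles pre-seeded by $e_2$ and $e_1$, balanced simultaneously under $c_3$ and with one-sided caps under $c_1,c_2$ — and you explicitly flag this as unresolved. As stated it is not a proof, and the seeding genuinely breaks the invariant of Lemma~\ref{lemma:two-small}. The paper avoids this entirely: it partitions $M^- = M-e_1-e_2$ into an EFX partition $(S_1,S_2,S_3)$ under the \emph{single} cost function $c_3$ (so every bundle has $c_3$-cost at least $1/6$ and agent $3$ is safe no matter which bundle she gets), orders the bundles so that $c_1(S_1)\le c_1(S_2)\le c_1(S_3)$, sets $X_1=S_1+e_2$, and lets agent $2$ \emph{choose} between $S_2+e_1$ and $S_3$. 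The choice step protects agent $2$ against agent $3$, the shield $e_2\in X_1$ protects agent $2$ against agent $1$, and agent $1$ is handled by the ordering together with $c_1(e_2)\le \frac15 c_1(M_1^-)\le c_1(S_3)$ — no multi-agent balancing of seeded bundles is ever needed. Note also that your shield argument via $c_1(X_1)\le 5/8$ and $c_1(e_1)\ge 1/8$ only yields a factor of $5$, whereas the lemma claims $4$.
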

	
	If $e_1=e_2$, we can assign it to agent 3.
    Let $X_3 = \{ e_1 \}$, and we compute an EFX allocation $(X_1, X_2)$ between agents $1$ and $2$ on items $M - e_1$.
    In this allocation, agent $3$ is obviously $\alpha$-EFX towards agents $1$ and $2$ because $|X_3| = 1$, and agents $1$ and $2$ do not envy each other by more than one item.
	On the other hand, since $(X_1, X_2)$ is an EFX allocation on items $M - e_1$, by removing any item $e$ from $X_1$ (resp. $X_2$), we have $c_1(X_1 - e) \leq 1/2$ (resp. $c_2(X_2 - e) \leq 1/2$).
	Since $c_1(X_3) = c_1(e_1) \geq \beta$ and $c_2(X_3) = c_2(e_2) \geq \beta$, the allocation is $\alpha$-EFX {  (recall that $\alpha \geq 1/(2\beta)$)}.
	
    Next, we consider the other case that $e_1\neq e_2$ and use the fact that agent $3$ is a small agent to compute an allocation that is fair to all.

    \paragraph{The Algorithm}
	If $e_1\neq e_2$, let $M^- = M - e_1 - e_2$.
    We first compute an EFX allocation $(S_1,S_2,S_3)$ for three agents under cost function $c_3$, on items $M^-$.
	Assume w.l.o.g. that $c_1(S_1)\leq c_1(S_2)\leq c_1(S_3)$, we decide the allocation $\bX$ as follows.
	Let $X_1 = S_1 + e_2$, let agent $2$ pick a bundle between $S_2 + e_1$ and $S_3$; then agent $3$ gets the remaining bundle.
	
    \begin{algorithm}
		\caption{Algorithm for 2 Large Agents}\label{alg:PDC2} 
		Initialize: $X_j \gets \emptyset$ for all $j\in \{1,2,3\}$, $P\gets M^-$  \;
		Compute an EFX allocation $(S_1, S_2, S_3)$ on items $P$ under cost function $c_3$ \;
		$X_1 \gets S_1+e_2$, $X_2 \gets \argmin\{c_2(S_2+e_1), c_2(S_3)\}$ \;
		$X_3\gets M\setminus (X_1\cup X_2)$ \;
		\KwOut{$(X_1,X_2,X_3)$}
	\end{algorithm}
	
	\begin{proofof}{Lemma~\ref{lemma:one-small}}
	    We show that the allocation is $\alpha$-EFX.
	    Let $c_3(S_i) = \min\{c_3(S_1),c_3(S_2),c_3(S_3)\}$, we show that $c_3(S_i)\geq \frac{1-4\beta}{3}$.
        Note that we have $c_3(M^-) \geq 1-2\beta$ since items $e_1$ and $e_2$ have cost at most $\beta$ under $c_3$.
	    Hence
        \begin{equation*}
	        c_3(M^-) = c_3(S_1) + c_3(S_2) + c_3(S_3) \leq 3\cdot c_3(S_i) + 2\beta, 
	    \end{equation*}
	    which implies $c_3(S_i) \geq \frac{1}{3}\cdot (c_3(M^-) - 2\beta) \geq \frac{1-4\beta}{3}$.

        In the following, we show that each agent $i\in \{1,2,3\}$ is $\alpha$-EFX towards the other two agents.

        {  Recall that $c_3(M) = 1$ and $\min\{ c_3(X_1),c_3(X_2) \} \geq \frac{1-4\beta}{3}$.
        Agent $3$ is $\alpha$-EFX towards agents $1$ and $2$ since
        }
		\begin{equation*}
		    c_3(X_3) \leq \frac{1+8\beta}{3} \leq \frac{1+8\beta}{1-4\beta}\cdot \min\{ c_3(X_1),c_3(X_2) \} \leq (\alpha-1) \cdot \min\{ c_3(X_1),c_3(X_2) \}.
		\end{equation*}
		The last inequality holds because $\frac{1+8\beta}{1-4\beta} = \frac{2+4\beta}{1-4\beta}-1 \leq \alpha-1$.
		
		Agent $2$ does not envy agent $3$ because she gets to pick a bundle between $S_2 + e_1$ and $S_3$, while agent $3$ gets the other bundle.
		For the same reason, we also have $c_2(X_2) \leq 1/2$. 
		Agent $2$ is $\alpha$-EFX towards agent $1$ because $c_2(X_1) \geq c_2(e_2) \geq \beta$, $c_2(X_2) \leq \frac{1}{2\beta}\cdot c_2(X_1) \leq \alpha\cdot c_2(X_1)$.
		
		Agent $1$ does not envy bundle $S_2 + e_1$ because $c_1(S_1) \leq c_1(S_2)$ and $c_1(e_2) \leq c_1(e_1)$.
		Agent $1$ is $\alpha$-EFX towards bundle $S_3$ because $c_1(S_1) \leq c_1(S_3)$ and
		\begin{equation*}
		c_1(e_2) \leq \frac{1}{\alpha}\cdot c_1(M_1^-) \leq \frac{1}{\alpha}\cdot c_1(M^-) \leq \frac{3}{\alpha}\cdot c_1(S_3),
		\end{equation*}
		where the last inequality holds since $S_3$ is the most costly bundle among $\{ S_1,S_2,S_3 \}$, i.e., $c_1(S_3)\geq \frac{1}{3}\cdot c_1(M^-)$.
		Hence we have $c_1(S_1+e_2) \leq \frac{3+\alpha}{\alpha}\cdot c_1(S_3)$.
        {  Note that $\frac{3+\alpha}{\alpha} \leq \alpha$ for all $\alpha \geq \frac{1+\sqrt{13}}{2}\approx 2.3028$.}
		Thus no matter which bundle agent $2$ picks, the resulting allocation is $\alpha$-EFX.
	\end{proofof}

	\subsection{All Agents are Large}

	Finally, we consider the case when all agents are large.
	Let $e_1 = \sigma_1(1)$, $e_2 = \sigma_2(1)$ and $e_3 = \sigma_3(1)$.
	By definition we have $c_1(e_1)\geq \beta$, $c_2(e_2)\geq \beta$ and $c_3(e_3)\geq \beta$.
    It is possible that some of $e_1,e_2,e_3$ are referring to the same item.
 
	\begin{lemma}\label{lemma:all-large}
	    When all three agents are large, an $\alpha$-EFX allocation can be computed in polynomial time.
	\end{lemma}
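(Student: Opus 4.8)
The plan is to argue by cases according to whether the three most costly items $e_1,e_2,e_3$ are pairwise distinct, and, when they are, to further split according to the magnitudes of $c_1(e_1),c_2(e_2),c_3(e_3)$. Throughout I would use the standing assumption $c_i(M_i^-)>5\,c_i(\sigma_i(2))$, which gives $c_i(e')<\tfrac16\bigl(1-c_i(e_i)\bigr)$ for every agent $i$ and every item $e'\neq e_i$; hence, writing $M'=M\setminus\{e_1,e_2,e_3\}$, every item of $M'$ is ``small'' to every agent, and $c_i(M')>\tfrac23\bigl(1-c_i(e_i)\bigr)$ for all $i$.

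First, suppose some item is the most costly one for two agents, say $e_1=e_2=e$ (this covers $e_1=e_2=e_3$ as well). I would set $X_3=\{e\}$ and compute, by divide-and-choose in $O(m\log m)$ time, an EFX allocation $(X_1,X_2)$ of $M-e$ between agents $1$ and $2$. The verification is short: agent $3$ holds a singleton and is trivially EFX; agents $1$ and $2$ are mutually EFX by construction; and for $i\in\{1,2\}$ and $x\in X_i$ we have $c_i(X_i-x)\le\tfrac12$ (a standard consequence of $(X_1,X_2)$ being an EFX allocation between two agents) while $c_i(X_3)=c_i(e)=c_i(e_i)\ge\tfrac18$, so $c_i(X_i-x)\le 4\,c_i(X_3)$.

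The remaining, and main, case is when $e_1,e_2,e_3$ are pairwise distinct. Here the guiding principle is that no agent should receive her own large item alongside other items: if $e_i\in X_j$ with $j\neq i$ then $c_i(X_j)\ge c_i(e_i)\ge\tfrac18$, so agent $i$ does not strongly-envy the holder of $e_i$ as long as $c_i(X_i-\cdot)\le\tfrac12$; and an agent who receives only $\{e_i\}$ is trivially EFX. It then remains to split $M'$ among the agents so that the resulting bundles look balanced to the agents that matter, and to attach the large items. I would compute an EFX partition of $M'$ with respect to one carefully chosen agent's cost function --- since $M'$ is entirely small to everyone, this partition is automatically ``balanced'' for that agent in the sense of Lemmas~\ref{lemma:two-small} and~\ref{lemma:one-small} --- attach the three large items cyclically (say $e_2$ to $X_1$, $e_3$ to $X_2$, $e_1$ to $X_3$), and let the agents pick their bundles in a suitable order, much as in the proof of Lemma~\ref{lemma:one-small}. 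The choice of which agent's cost to balance under, which cyclic rotation to use, and whether some $e_i$ should instead be returned to agent $i$ as a singleton, would be dictated by how many of $c_1(e_1),c_2(e_2),c_3(e_3)$ exceed a suitable constant threshold; in the sub-case where all three exceed it, everything other than $e_i$ is tiny to agent $i$ for each $i$, so a single balanced split of $M'$ looks balanced to all three agents at once, and the analysis closes in the style of Lemma~\ref{lemma:two-small}.

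I expect this last case to be the real obstacle: a single partition of $M'$ cannot be made balanced under all three cost functions simultaneously, yet each of the three agents imposes its own $4$-EFX constraint on the bundle it is assigned. The sub-case split on the magnitudes of $c_i(e_i)$ --- with a different matching of large items to agents, and a different picking order, in each sub-case --- is what makes this work, and the delicate point is to check that in every sub-case the constants line up so that the approximation ratio is exactly $4$ (comfortably within the $5$ promised by Theorem~\ref{thm:n=3}) rather than degrading to something larger.
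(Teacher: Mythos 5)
Your first case ($e_1=e_2$) is correct and matches the paper exactly. The problem is the main case of pairwise distinct $e_1,e_2,e_3$, where your proposal stops at a plan whose central construction is never supplied, and where the direction you do indicate does not work. You propose to attach the large items cyclically ($e_2$ to $X_1$, $e_3$ to $X_2$, $e_1$ to $X_3$) on top of a three-way split of $M'=M\setminus\{e_1,e_2,e_3\}$ balanced under one agent's cost function; but a three-bundle partition balanced under $c_1$ gives no control over $c_2(S_j)$ or $c_3(S_j)$, and you yourself concede in the last paragraph that ``a single partition of $M'$ cannot be made balanced under all three cost functions simultaneously.'' The escape hatch you offer --- a sub-case split on the magnitudes of $c_1(e_1),c_2(e_2),c_3(e_3)$ --- is vacuous here, because the hypothesis of the lemma already pins all three down to $c_i(e_i)\geq 1/8$; no further threshold distinguishes sub-cases, and no concrete rotation/picking order is given for any of them. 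The fallback of returning some $e_i$ to agent $i$ as a singleton also fails without extra assumptions: if, say, $X_3=\{e_3\}$ and agents $1$ and $2$ split $M-e_3$, then $c_1(X_3)=c_1(e_3)$ can be arbitrarily small while $c_1(X_1-x)$ is close to $1/2$, so agent $1$ can strongly-envy agent $3$.

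The idea you are missing is an \emph{asymmetric} assignment of the large items that avoids any three-way split: give agent $3$ \textbf{both} $e_1$ and $e_2$. Since $e_3=\sigma_3(1)$ is distinct from $e_1,e_2$, both of these items cost agent $3$ at most $c_3(\sigma_3(2))<\tfrac15 c_3(M_3^-)$, so $c_3(X_3)\leq\tfrac25 c_3(M_3^-)$, which is below half of $c_3(M^-)$; meanwhile agents $1$ and $2$ each see their own large item inside $X_3$, so $c_1(X_3),c_2(X_3)\geq 1/8$ and neither can strongly-envy agent $3$ once their own bundles have cost at most $1/2$ after removing an item. Now only agents $1$ and $2$ need to share $M^-$: compute an EFX two-way partition $(S_1,S_2)$ of $M^-$ under $c_1$ (automatically near-balanced under $c_1$ because every item of $M^-$ is small to agent $1$), attach $e_3$ to whichever half agent $3$ values less, and let agent $2$ pick between the two resulting bundles, leaving the other to agent $1$. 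Checking the constants then gives $4$-EFX. Without this reduction from a three-way to a two-way split, your case analysis has no way to close, so the proposal has a genuine gap.
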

	
	As before, if there exist two of $\{ e_1,e_2,e_3 \}$ that are the same item, say $e_1 = e_2$, then we can easily compute an $(1/2\beta)$-EFX allocation by assigning $e_1$ to $X_3$ and computing an EFX allocation between agent $1$ and $2$ on the remaining items.
	Hence we assume $e_1, e_2$ and $e_3$ are three different items, and let $M^- = M\setminus \{ e_1,e_2,e_3 \}$.
	
	\paragraph{The Algorithm}
	We initialize $X_j$ as an empty bundle for all $j\in \{1,2,3\}$.
	We first assign both $e_1$ and $e_2$ to agent 3, and {  let} agent 3 quit the allocation.
	Then we compute an EFX allocation $(S_1, S_2)$ on items $M^-$ under cost function $c_1$.
	Assume w.l.o.g. that $c_3(S_1) \leq c_3(S_2)$.
	We let agent $2$ pick a bundle between $S_1 + e_3$ and $S_2$, and assign the other bundle to agent $1$ (refer to Algorithm~\ref{alg:PDC}).
	
	\begin{algorithm}
		\caption{Algorithm for 3 Large Agents}\label{alg:PDC} 
		Initialize: $X_j \gets \emptyset$ for all $j\in \{1,2,3\}$, $P\gets M^-$  \;
		let $X_3\gets \{e_1,e_2\}$ \;
		Compute an EFX allocation $(S_1, S_2)$ on items $P$ under cost function $c_1$ \;
		let $S_{i^*} \gets \argmin\{c_3(S_1), c_3(S_2)\}$ \;
		{  $X_2\gets \argmin\{c_2(S_{i^*}+e_3), c_2(P \setminus S_{i^*})\}$ \;}
		$X_1\gets (P\setminus X_2) +e_3$ \;
		\KwOut{$(X_1,X_2,X_3)$}
	\end{algorithm}
	
	\begin{proofof}{Lemma~\ref{lemma:all-large}}
	We show that when all three agents are large, and $e_1, e_2$ and $e_3$ are different items, Algorithm~\ref{alg:PDC} returns an $\alpha$-EFX allocation.
	\begin{claim}\label{claim:M^-&M_i^-}
	    For all $i\in \{1,2,3\}$, we have $c_i(M^-)\geq \frac{\alpha-1}{\alpha}\cdot c_i(M_i^-)$.
	\end{claim}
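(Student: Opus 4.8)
The plan is to expand both sides with additivity and normalization and then reduce the claim to the standing hypothesis of this case analysis, namely that every agent $i$ satisfies $c_i(M_i^-) > 5\cdot c_i(\sigma_i(2))$.

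First I would fix an agent $i\in\{1,2,3\}$ and write, using that $e_1,e_2,e_3$ are three distinct items and $c_i(M)=1$,
\[
c_i(M^-) \;=\; 1 - c_i(e_1) - c_i(e_2) - c_i(e_3).
\]
Since $e_i = \sigma_i(1)$, exactly one of the three removed items is agent $i$'s most costly item, and the other two — call them $e_j$ and $e_k$ — are distinct items different from $\sigma_i(1)$. By the definition of $\sigma_i(2)$ as the second most costly item under $c_i$, we then have $c_i(e_j)\le c_i(\sigma_i(2))$ and $c_i(e_k)\le c_i(\sigma_i(2))$. Substituting, $c_i(M^-) \ge 1 - c_i(\sigma_i(1)) - 2\,c_i(\sigma_i(2))$. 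On the other hand, for $n=3$ we have $M_i^- = M\setminus\{\sigma_i(1),\sigma_i(2)\}$, so $c_i(M_i^-) = 1 - c_i(\sigma_i(1)) - c_i(\sigma_i(2))$. Combining the two gives the single clean inequality
\[
c_i(M^-) \;\ge\; c_i(M_i^-) - c_i(\sigma_i(2)).
\]

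Next I would invoke the assumption that we are in the case where $c_i(M_i^-) > 5\,c_i(\sigma_i(2))$, i.e. $c_i(\sigma_i(2)) < \tfrac15\,c_i(M_i^-)$. Plugging this into the previous bound yields $c_i(M^-) > c_i(M_i^-) - \tfrac15\,c_i(M_i^-) = \tfrac45\,c_i(M_i^-)$, which is exactly the claim (in fact with a strict inequality). Since $i$ was arbitrary, this completes the argument for all $i\in\{1,2,3\}$.

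I do not expect a genuine obstacle here; the only point that needs a little care is the bookkeeping that, from agent $i$'s perspective, the three items $e_1,e_2,e_3$ deleted to form $M^-$ consist of $\sigma_i(1)$ together with two other items each of cost at most $c_i(\sigma_i(2))$ — this is what makes $M^-$ differ from $M_i^-$ by at most an additive $c_i(\sigma_i(2))$. Everything else is a one-line substitution using the hypothesis already in force in this subsection.
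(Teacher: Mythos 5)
Your proof is correct and matches the paper's own argument essentially line for line: both bound $c_i(e_1)+c_i(e_2)+c_i(e_3)$ by $c_i(\sigma_i(1)) + 2c_i(\sigma_i(2))$, rewrite this as $c_i(M_i^-) - c_i(\sigma_i(2))$, and then invoke the standing hypothesis $c_i(M_i^-) > 5\cdot c_i(\sigma_i(2))$. No issues.
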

	\begin{proof}
	    Recall that $M_i^- = M\setminus \{\sigma_i(1),\sigma_i(2)\}$ and for every agent $i\in N$, we have $c_i(M_i^-)> \alpha\cdot c_i(\sigma_i(2))$.
	    Hence for all $i\in \{1,2,3\}$, we have
	    \begin{align*}
	        c_i(M^-) &= c_i(M)- c_i(e_1+e_2+e_3) 
	        \geq c_i(M) - c_i(\sigma_i(1)) - 2\cdot c_i(\sigma_i(2)) \\
	        &\geq c_i(M_i^-) - c_i(\sigma_i(2)) 
	        \geq c_i(M_i^-) - \frac{1}{\alpha}\cdot c_i(M_i^-) 
	        \geq \frac{\alpha-1}{\alpha}\cdot c_i(M_i^-).
	    \end{align*}
	    Thus Claim~\ref{claim:M^-&M_i^-} holds.
	\end{proof}

    In the following, we show that each agent $i\in \{1,2,3\}$ is $\alpha$-EFX towards the other two agents.

	\begin{itemize}
		\item Agent $3$ is $\alpha$-EFX towards bundle $S_1 + e_3$ because $c_3(X_3) \leq 2\cdot c_3(e_3)$; is $\alpha$-EFX towards bundle $S_2$ because
		\begin{align*}
		c_3(S_2) & \geq \frac{1}{2}\cdot c_3(M^-) \geq \frac{\alpha-1}{2\alpha}\cdot c_3(M_3^-) > \frac{\alpha-1}{4}\cdot c_3(X_3),
		\end{align*}
		where the second inequality hold because of Claim~\ref{claim:M^-&M_i^-}, the third inequality holds because $c_3(M_3^-) > \alpha\cdot c_3(e_1)$ and $c_3(M_3^-) > \alpha\cdot c_3(e_2)$.
		Since $\frac{4}{\alpha-1} \leq \alpha$ for all $\alpha \geq \frac{1+\sqrt{17}}{2}\approx 2.5616$, we have $c_3(X_3) < \alpha\cdot c_3(S_2)$.
		
		\item Agent $2$ does not envy agent $1$ because she gets to pick first; is $\alpha$-EFX towards agent $3$ because $c_2(X_2) \leq 1/2$ while $c_2(X_3) \geq c_2(e_2) \geq \beta$.
		
		\item Recall that agent $1$ gets the bundle in $\{ S_1 + e_3, S_2\}$ that is not picked by $2$.
		Moreover, $S_1\cup S_2 = M^-$ and $(S_1,S_2)$ is an EFX allocation under $c_1$.
		Since every item $e \neq e_1$ has cost $c_1(e) \leq \frac{1}{\alpha}\cdot c_1(M_1^-)$, we have $c_1(e) \leq \frac{1}{\alpha-1}\cdot c_1(M^-)$ for every $e\in M^-$.
		Consequently, we have
		\begin{align*}
		 \max \{ c_1(S_1), c_1(S_2) \} \leq \frac{\alpha}{2\alpha-2}\cdot c_1(M^-)\quad \text{ and } \quad
		 \min \{ c_1(S_1), c_1(S_2) \} \geq \frac{\alpha-2}{2\alpha-2}\cdot c_1(M^-),
		\end{align*}
		because otherwise {  $|c_1(S_1) - c_1(S_2)| > \frac{1}{\alpha-1}\cdot c_1(M_1^-)$}, which contradicts the fact that $(S_1,S_2)$ is an EFX partition.
		Hence we have
		\begin{align*}
		\max \{ c_1(S_1 + e_3), c_1(S_2) \}
		\leq\ & \frac{\alpha}{2\alpha-2} \cdot c_1(M^-) + \frac{1}{\alpha-1} \cdot c_1(M^-) \\
		= \ & \frac{\alpha+2}{2\alpha-2} \cdot c_1(M^-)
		\leq  \frac{\alpha+2}{\alpha-2} \cdot \min \{ c_1(S_1 + e_3), c_1(S_2) \} \\
		\leq\ & \alpha \cdot \min\{c_1(S_1+e_3), c_1(S_2)\}.
		\end{align*}
        The last equality holds since $\frac{\alpha+2}{\alpha-2} \leq \alpha$ holds for all $\alpha \geq \frac{3+\sqrt{17}}{2}\approx 3.5616$.
		Hence agent $1$ is $\alpha$-EFX towards agent $2$.
		In the following, we show that for all $e\in X_1$ we have $c_1(X_1 - e) \leq 1/2$. 
		If $X_1 = S_2$ then the claim is trivially true because $(S_1, S_2)$ is an EFX partition of $M^-$.
		Otherwise we have $X_1 = S_1 + e_3$.
		If $e = e_3$, then
		\begin{equation*}
		    c_1(X_1 - e) = c_1(S_1)  \leq c_1(S_2 + e_1).
		\end{equation*}
		If $e\neq e_3$, then we have 
		\begin{equation*}
		    c_1(X_1 - e) = c_1(S_1 - e) + c_1(e_3) \leq c_1(S_2 + e_1).
		\end{equation*}
		
		In both cases we have $c_1(X_1 - e) \leq c_1(S_2 + e_1)$.
		Since $X_1 = S_1 + e_3$ and $S_2 + e_1$ are disjoint, we have
		\begin{equation*}
		    c_1(X_1 - e) \leq \frac{1}{2}\cdot (c_1(X_1 - e) + c_1(S_2 +e_1)) \leq \frac{1}{2}.
		\end{equation*}
		Since $c_1(X_3) \geq c_1(e_1) \geq \beta$, we have $c_1(X_1-e) \leq \frac{1}{2\beta}\cdot c_1(X_3) $.
		Hence agent $1$ is $\alpha$-EFX towards agent $3$ either.
	\end{itemize}

    Hence any agent is $\alpha$-EFX towards any other agent, and the allocation is $\alpha$-EFX.
	\end{proofof}
	
	We argue that Algorithms~\ref{alg:RRP},~\ref{alg:PDC2} and~\ref{alg:PDC} run in $O(m \log m)$ time because under a fixed cost function, computing an EFX allocation can be done by sorting items in descending order of costs and allocating items sequentially to form a partition.
	Given this partition, the final allocation can be determined in $O(1)$ time.
	In summary, in all cases, we can compute a $(2+\sqrt{6})$-EFX allocation for three agents in polynomial-time, which proves Theorem~\ref{thm:n=3}.
 
	From the above analysis, we observe that it is crucial to distinguish whether an item $e$ is large to an agent $i$: if it is, then by allocating the item to another agent $j$ who values it small, we can ensure that $i$ is $\alpha$-EFX towards $j$; if it is not, then putting item $e$ in $X_i$ does not hurt the approximation ratio too much.
	In the following section, we show how these ideas can be extended to the general case when $n\geq 4$.

	\section{Four or More Agents}\label{sec:fourormore}
	
	In this section, we give a polynomial-time algorithm that computes a $(3n^2-n)$-EFX allocation for any given instance with $n\geq 4$ agents.
	
	\begin{theorem}\label{thm:general-n}
		There exists an algorithm that computes a $(3n^2-n)$-EFX allocation for any instance with $n$ agents in $O(nm \log m)$ time.
	\end{theorem}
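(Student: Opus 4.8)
The plan is to generalize the three-agent strategy: classify items as ``large'' or ``small'' relative to each agent, peel off the genuinely large items by giving them to agents who regard them as cheap, and partition the remaining small items into roughly balanced bundles via a Sequential Placement procedure analogous to Algorithm~\ref{alg:RRP}. First I would fix a threshold and call an item $e$ \emph{large for agent $i$} if $c_i(e)$ exceeds, say, $\tfrac{1}{n}$ of the total (or of $c_i(M_i^-)$); as in Section~3 we may assume every agent has at most one item of very large cost, since otherwise allocating $\sigma_i(1),\dots,\sigma_i(n-1)$ to the others and $M_i^-$ to agent $i$ already gives an $(m-n)$-EFX allocation good enough after a reduction — more carefully, we reduce to the case $c_i(M_i^-) > \Theta(n)\cdot c_i(\sigma_i(n-1))$ for all $i$, so that each agent has at most $n-1$ ``pivotal'' items and every remaining item contributes at most a $1/\Theta(n)$ fraction of that agent's residual cost.

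The core combinatorial step is the \textbf{Sequential Placement algorithm}: create $n$ empty bundles $S_1,\dots,S_n$, and let the agents (or a designated subset of ``small-type'' agents) take turns, each on her turn placing the currently most costly unallocated item into her currently cheapest bundle. I would prove the analogue of Lemma~\ref{lemma:two-small}: after termination, $c_i(S_j)\in[\tfrac{1}{\beta n},\tfrac{\beta}{n}]$ for every participating agent $i$ and every $j$, for some constant $\beta$. The argument mirrors the Lemma~\ref{lemma:two-small} proof: look at the last time agent $i$ touched her heaviest bundle $S_{\max}$, bound the gap between $S_{\max}$ and $S_{\min}$ at that moment by $O(1/n)$ (one item per round, each small), then control what gets added afterward using the fact that agent $i$ allocates items in decreasing order so her share of the still-unallocated items dominates each other agent's share — hence the tail added to any one bundle is at most a $\tfrac{1}{n-1}$-ish fraction of the residual mass.

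To assemble the final allocation I would borrow the PROP1-style rotation idea referenced in the introduction: the large/pivotal items are distributed so that whenever $e$ is large for $i$, some agent $j\neq i$ holding $e$ has $c_i(e)$ small, killing $i$'s strong envy toward $j$; the small items are handed out through the balanced bundles $(S_1,\dots,S_n)$ so that each agent ends up with one bundle of cost $\Theta(1/n)$ to her, against which her own bundle (at most $O(1/n)$ plus $O(1)$ pivotal items, each of which is $O(1)$) is within a factor $O(n^2)$. Tracking constants — a factor $n$ from the bundle-balance, another factor $n$ from comparing against a single $\Theta(1/n)$-cost bundle while one's own bundle can carry up to $O(n)$ pivotal items — yields the $3n^2$ bound; the $O(nm\log m)$ running time comes from $n$ agents each sorting once plus the $m$ placement steps.

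The main obstacle I expect is the assignment of the pivotal/large items. With $n$ agents, an agent may simultaneously find several items large, and we must route each such item to \emph{some} agent who finds it small without overloading any single recipient or creating a bundle that some third agent strongly envies; making this routing work in general (rather than the ad hoc case analysis that suffices for $n=3$) is where a Hall-type / cycle-elimination argument, or a careful greedy charging, will be needed, and it is also where the extra factor of $n$ in the ratio is likely spent.
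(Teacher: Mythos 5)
Your high-level plan matches the paper's (threshold large items, route them to agents who find them cheap, balance the small items), but both of the technical pillars you lean on are left unproven, and for the first one the approach you propose is precisely the one the paper says it could not make work. The core combinatorial step you describe --- running Sequential Placement (Algorithm~\ref{alg:RRP}) with $n$ agents and $n$ bundles and concluding $c_i(S_j)\in[\tfrac{1}{\beta n},\tfrac{\beta}{n}]$ for \emph{every} participating agent --- does not ``mirror'' Lemma~\ref{lemma:two-small}. That proof hinges on the fact that with two placers, the residual mass handled by the other agent satisfies $c_1(P_2)\le\tfrac12 c_1(P)$, which is what makes Claim~\ref{claim:S3+P2} give an upper bound of $5/8$ rather than something near $1$. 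With $n$ placers, the other $n-1$ agents handle a $\tfrac{n-1}{n}$ fraction of the residual $c_1$-mass and, since they place according to their \emph{own} cost functions, they can funnel essentially all of it into a single bundle from agent $1$'s viewpoint; your claimed ``$\tfrac{1}{n-1}$-ish fraction of the residual mass per bundle'' has no justification, and both the upper and (via the way the lower bound in Lemma~\ref{lemma:two-small} is derived from the upper bound) the lower bound collapse. The paper explicitly flags this obstruction in its Remark and replaces Sequential Placement by a different device (Lemma~\ref{lemma:even-partition-general-n}): compute a PROP1 partition $\mathbf{Q}$ of the small items treated as goods, re-partition each $Q_i$ into a PROP1 partition under $c_i$, and regroup $S_j=\cup_i Q_{ij}$; this yields only a lower bound $c_i(S_j)\ge\tfrac{1}{2n^2+2n}c_i(M^-)$, which turns out to be all that is needed (the upper bound on one's own bundle is just ``total minus one other bundle'').

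The second pillar --- routing the large items --- you explicitly leave open (``a Hall-type / cycle-elimination argument \dots will be needed''), so the proposal is incomplete there as well. For the record, no matching argument is needed: the paper splits $L=\cup_i L_i$ into $K=\cap_i L_i$ and $L\setminus K$. Since the reduction forces $|L_i|\le n-2$, the at most $n-2$ items of $K$ can each be parked with a dedicated agent who receives nothing else (hence is trivially EFX), and every item of $L\setminus K$ is by definition small to some agent and is simply handed to such an agent; Lemma~\ref{lemma:only-small-items} then kills all strong envy toward holders of one's large items, because an agent who receives no item of her own $L_i$ has $c_i(X_i)\le c_i(M\setminus L_i)\le(3n^2-n+2+n-2)\,c_i(X_j)$ whenever $X_j$ meets $L_i$. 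Also note your accounting ``one's own bundle can carry up to $O(n)$ pivotal items, each of cost $O(1)$'' cannot be right as stated: if an agent received even one item that is large \emph{to her} alongside other items, the EFX guarantee would be lost; the paper's invariant is exactly that each agent either receives a single item or receives no item from her own $L_i$.
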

	
	Recall that for each agent $i\in N$, we define $M_i^- = \{ \sigma_i(j) : j\geq n \}$ as the set of tail items.
	As before (refer to Lemma~\ref{lemma:m-n+1} for a formal analysis), if there exists an agent $i$ with
	\begin{equation}
	    c_i(M_i^-) \leq (3n^2-n)\cdot c_i(\sigma_i(n-1)), \label{eq:simple_case_n>=4}
	\end{equation}
	then we can easily compute a $(3n^2-n)$-EFX allocation by allocating exactly one item in $M\setminus M_i^-$ to each agent in $N\setminus\{i\}$, and assigning the remaining items $M_i^-$ to agent $i$.
	Since each agent other than $i$ receives only one item, and agent $i$ receives a set of items with a total cost at most $3n^2-n$ times the cost of any other agent, the allocation is $(3n^2-n)$-EFX.
    From now on we assume that {  for every agent $i$,} Equation~\eqref{eq:simple_case_n>=4} is not true.
    

	\subsection{The Allocation Algorithm}
	
	We define $b_i = \frac{1}{3n^2 - 2n + 2}\cdot c_i(M_i^-)$, and let $L_i = \{ e\in M : c_i(e) \geq b_i \}$ be the set of \emph{large} items of agent $i$.
	Note that by the above discussion we have $|L_i| \leq n-2$ for all $i\in N$ because otherwise
    \begin{equation*}
        c_i(M_i^-) = (3n^2 - 2n + 2)\cdot b_i \leq (3n^2-n)\cdot c_i(\sigma_i(n-1)),
    \end{equation*} 
    which {  satisfies} Equation~\eqref{eq:simple_case_n>=4}.
	The main intuition behind the definition of large items is as follows.
	Our algorithm will compute an allocation $\bX = (X_1,\ldots,X_n)$ ensuring that for each agent $i\in N$, either $X_i \cap L_i = \emptyset$, i.e., no large item in $L_i$ is assigned to agent $i$; or $|X_i| = 1$.
	We show that as long as no large item is assigned to agent $i$, $i$ is $(3n^2-n)$-EFX towards any other agent that receives at least one item in $L_i$.
	
	\begin{lemma}\label{lemma:only-small-items}
		For any agent $i\in N$, if $X_i\cap L_i = \emptyset$ and $L_i \cap X_j \neq \emptyset$, then $i$ is $(3n^2-n)$-EFX towards agent $j$.
	\end{lemma}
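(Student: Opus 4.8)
The plan is to prove the stronger statement that $c_i(X_i)\le 3n^2\cdot c_i(X_j)$ (so that in particular $c_i(X_i-e)\le c_i(X_i)\le 3n^2\cdot c_i(X_j)$ for every $e\in X_i$, which is exactly ``$i$ does not strongly-envy $j$''). Since $X_j$ contains some item of $L_i$, we have $c_i(X_j)\ge b_i$, so it suffices to bound $c_i(X_i)\le 3n^2 b_i$. The whole proof is therefore an upper bound on $c_i(X_i)$ under the hypothesis $X_i\cap L_i=\emptyset$.

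First I would record two structural facts about $L_i$. Since $L_i=\{e:c_i(e)\ge b_i\}$ is an upper level set of $c_i$, it is a prefix of the $\sigma_i$-ordering: $L_i=\{\sigma_i(1),\dots,\sigma_i(k)\}$ where $k:=|L_i|$, and for every $j>k$ we have $c_i(\sigma_i(j))<b_i$ (else $\sigma_i(1),\dots,\sigma_i(j)$ would all lie in $L_i$). By the discussion preceding the lemma (using $c_i(M_i^-)>3n^2\cdot c_i(\sigma_i(n-1))$ together with $n\ge 4$) we have $k\le n-2$, and the hypothesis $L_i\cap X_j\neq\emptyset$ gives $k\ge 1$. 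Now $X_i\cap L_i=\emptyset$ means $X_i\subseteq M\setminus L_i=\{\sigma_i(k+1),\dots,\sigma_i(m)\}$, and since $k\le n-2$ this set splits as $M_i^-\cup\{\sigma_i(k+1),\dots,\sigma_i(n-1)\}$, where the second block consists of exactly $n-1-k\le n-2$ items, each of cost $<b_i$. Hence
\[
c_i(X_i)\;\le\; c_i(M_i^-)+(n-1-k)\,b_i\;\le\;(3n^2-n+2)\,b_i+(n-2)\,b_i\;=\;3n^2 b_i,
\]
using $b_i=\frac{1}{3n^2-n+2}\,c_i(M_i^-)$. Combined with $c_i(X_j)\ge b_i$, this yields $c_i(X_i)\le 3n^2\cdot c_i(X_j)$, as desired.

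The only point that needs care is the counting: one must verify that $M\setminus L_i$ is precisely $M_i^-$ together with the $n-1-k$ middle items $\sigma_i(k+1),\dots,\sigma_i(n-1)$, and that this count is at most $n-2$. This is exactly where both sides of the bound on $|L_i|$ enter — $|L_i|\le n-2$ makes the middle block well defined and small, while $|L_i|\ge 1$ (forced by $L_i\cap X_j\neq\emptyset$) shaves the count from $n-1$ down to $n-2$, which is precisely what makes the two $b_i$-terms telescope to exactly $3n^2 b_i$. I do not expect any genuine obstacle beyond this bookkeeping; the rest is the one-line arithmetic above.
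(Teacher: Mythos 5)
Your proposal is correct and follows essentially the same route as the paper: both decompose $M\setminus L_i$ as $M_i^-$ together with the at most $n-2$ middle items $\sigma_i(k+1),\dots,\sigma_i(n-1)$ (each of cost less than $b_i$), and both use $L_i\cap X_j\neq\emptyset$ to get $c_i(X_j)\ge b_i$. Your write-up just makes the bookkeeping (why $|L_i|\le n-2$, why the middle block has $n-1-k\le n-2$ items) more explicit than the paper does.
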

	\begin{proof}
		By definition each item $e\in L_i$ has cost 
		\begin{equation*}
		    c_i(e) \geq b_i = \frac{1}{3n^2 - 2n +2}\cdot c_i(M_i^-).
		\end{equation*}
        {  Note that $M\setminus M_i^- = \{\sigma_i (j): j\leq n-1\}$.
        Also note that $L_i \subseteq M\setminus M_i^-$ and $|L_i| \geq 1$ since $L_i \cap X_j \neq \emptyset$.
        In other words, there are at most $n-2$ items that are not included in $M_i^-\cup L_i$, all of which are small to agent $i$.
		Hence we have
		\begin{align*}
		c_i(X_i) & \leq c_i(M\setminus L_i) = c_i(M_i^-) + c_i(M\setminus (M_i^- \cup L_i)) \\
        & \leq (3n^2 - 2n + 2)\cdot b_i + (n-2)\cdot b_i \\
		& = (3n^2-n) \cdot b_i \leq (3n^2-n) \cdot c_i(X_j),
		\end{align*}
        where the last inequality holds because $X_j$ contains at least one item that is large to agent $i$.
        }
	\end{proof}
	
	\paragraph{Remark}
	A few difficulties arise when we try to extend the ideas we develop for three agents to general number of agents.
	First, for a large number of agents, it is no longer feasible to classify agents depending on how many large items they have because there are too many cases.
	Instead, our new algorithm removes the large items of each agent, and treats all agents as ``small agents''.
	Second, even if all agents value all items small, i.e., $L_i = \emptyset$ for all $i\in N$, it is not clear how to extend the Sequential Placement algorithm (Algorithm~\ref{alg:RRP}) to compute $O(n)$ subsets with upper and lower bounded costs for general number of agents.
	To get around this, we borrow existing results for the PROP1 allocation of goods, and show that when items are small to the agents, we can partition the items into $O(n)$ bundles such that the ratio between the costs of any two bundles is bounded by $3n^2-n$.
	
	\smallskip
	
	Let $L = \cup_{i\in N} L_i$ be the set of items that are large to at least one agent.
	Let $K = \cap_{i\in N} L_i$ be the set of items that are large to all agents.
	Let $M^- = M\setminus L$ be the set of items that are small to all agents.
	
	\begin{claim}\label{claim:cost-of-small}
		For each $i\in N$ and $e\in M^-$, we have $c_i(e) \leq \frac{1}{2n^2 + n}\cdot c_i(M^-)$.
	\end{claim}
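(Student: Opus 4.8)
The plan is to reduce the claim to a single comparison between $c_i(M^-)$ and $c_i(M_i^-)$. Since $M^- = M\setminus L \subseteq M\setminus L_i$, every $e\in M^-$ is small to agent $i$, i.e. $c_i(e) < b_i = \frac{1}{3n^2-n+2}\cdot c_i(M_i^-)$. Hence it suffices to prove
\[
c_i(M_i^-) \;\le\; \frac{3n^2-n+2}{2n^2+2n}\cdot c_i(M^-),
\]
because combining this with the previous inequality gives $c_i(e) < b_i \le \frac{1}{2n^2+2n}\cdot c_i(M^-)$, which is what we want. (If $c_i(M_i^-)=0$ then $L_i=M$, so $M^-=\emptyset$ and the statement is vacuous.)

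To obtain the displayed inequality, I would compare $M_i^-$ with $M^-$ item by item. Since $M^-\cap M_i^-\subseteq M^-$ and all costs are non-negative,
\[
c_i(M^-) \;\ge\; c_i(M_i^-) - c_i(M_i^-\setminus M^-) \;=\; c_i(M_i^-) - c_i(M_i^-\cap L).
\]
The key observation is that every item of $M_i^-\cap L$ is at the same time (i) a tail item of agent $i$, hence of cost at most $c_i(\sigma_i(n-1))$; and since $|L_i|\le n-2$ (the bound already recorded in the excerpt) we have $\sigma_i(n-1)\notin L_i$, so $c_i(\sigma_i(n-1)) < b_i$ and such an item is small to $i$; therefore (ii) it cannot lie in $L_i$, so it lies in $L\setminus L_i\subseteq\bigcup_{j\ne i}L_j$. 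Consequently $|M_i^-\cap L|\le(n-1)(n-2)$, and each of those items has cost at most $b_i$, so $c_i(M_i^-\cap L)\le (n-1)(n-2)\cdot b_i$.

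Plugging $b_i=\frac{1}{3n^2-n+2}\cdot c_i(M_i^-)$ into the bound above and using $(n-1)(n-2)=n^2-3n+2$, a direct computation yields
\[
c_i(M^-) \;\ge\; c_i(M_i^-)\left(1-\frac{n^2-3n+2}{3n^2-n+2}\right) \;=\; \frac{2n^2+2n}{3n^2-n+2}\cdot c_i(M_i^-),
\]
which is exactly the inequality we need. The only delicate step is the counting: one has to recognize that the items of $M_i^-$ discarded when passing from $M_i^-$ to $M^-$ are precisely those that are large to some agent other than $i$ (of which there are at most $(n-1)(n-2)$), and that each of them is small to agent $i$ — and this count is exactly what makes the arithmetic close with the particular constant $2n^2+2n$ appearing in the statement.
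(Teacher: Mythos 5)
Your proposal is correct and follows essentially the same route as the paper: bound each $c_i(e)$ by $b_i$, then lower-bound $c_i(M^-)$ by $c_i(M_i^-)$ minus the at most $(n-1)(n-2)$ items of $\bigcup_{j\neq i}L_j$ that may be removed, each costing at most $b_i$ to agent $i$, which yields $c_i(M^-)\geq\frac{2n^2+2n}{3n^2-n+2}\cdot c_i(M_i^-)$. Your write-up is in fact slightly more explicit than the paper's in justifying (via $|L_i|\leq n-2$) why every item of $M_i^-\cap L$ is small to agent $i$ and hence lies in $\bigcup_{j\neq i}L_j$, a step the paper leaves implicit.
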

	\begin{proof}
		Recall that item $e\in M^-$ is small to agent $i$.
		Thus
		\begin{equation*}
		    c_i(e) \leq b_i = \frac{1}{3n^2 - 2n + 2}\cdot c_i(M_i^-).
		\end{equation*}
		Moreover, we have $M_i^- \setminus (\cup_{j\in N} L_j \setminus L_i) \subseteq M^-$.
		Since there are at most $(n-1)\cdot (n-2)$ items in $\cup_{j\in N} L_j \setminus L_i$, and each of them has cost at most $b_i$ to agent $i$, we have
		\begin{equation}
		c_i(M^-) \geq c_i(M_i^-) - (n-1)\cdot (n-2)\cdot b_i \geq (2n^2 + n)\cdot b_i.\label{eq:M_i^-and-M^-}
		\end{equation}
		
		By the above lower bound on $c_i(M^-)$, we have $c_i(e) \leq b_i \leq \frac{1}{2n^2 + n}\cdot c_i(M^-)$.
	\end{proof}
	
	Recall that now we have three sets of items $K$, $M^-$ and $L\setminus K$, each of which will be handled as follows.
	\begin{itemize}
		\item Since each item in $K$ is large to all agents, we assign each of them to a unique agent chosen arbitrarily.
		Let $N^*$ be these agents, and $N^- = N\setminus N^*$.
		Note that $|N^*| = |K|$.
		Our algorithm will not assign any further items to agents in $N^*$.
		Obviously, these agents are $(3n^2-n)$-EFX towards any other agents.
		Moreover, if we can ensure that in the final allocation $X_i\cap L_i = \emptyset$ for all $i\in N^-$, then all agents are $(3n^2-n)$-EFX towards agents in $N^*$, by Lemma~\ref{lemma:only-small-items}.
		
		\item By Claim~\ref{claim:cost-of-small}, for every agent $i$, all items in $M^-$ have small cost compared to $c_i(M^-)$.
		We show in Lemma~\ref{lemma:even-partition-general-n} that we can partition the items in $M^-$ into $|N^-|$ bundles {  such that each agent in $N^-$ perceives all other bundles with some guaranteed lower bound on the cost}.
		The key to the computation of the partition is to ensure that each bundle has a considerably large cost to every agent in $N^-$.
		
		\item It remains to assign items in $L\setminus K$.
		Recall that these are items that are large to some agents but not to all agents.
		Our algorithm assigns each $e \in L\setminus K$ to an arbitrary agent $i$ for which $e\notin L_i$.
	\end{itemize}
	
	\begin{lemma}\label{lemma:even-partition-general-n}
		Given a set of items $M^-$ and a group of agents $N^-$ such that for all $i\in N^-$ and $e\in M^-$, $c_i(e)\leq \frac{1}{2n^2 + n}\cdot c_i(M^-)$,
		there exists a partition $(S_1,S_2,\ldots,S_{|N^-|})$ of $M^-$ such that for all $i,j\in N^-$,
		\begin{equation*}
		    c_i(S_j) \geq \frac{1}{2n^2 + n}\cdot c_i(M^-).
		\end{equation*}
	\end{lemma}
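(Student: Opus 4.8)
The plan is to reduce Lemma~\ref{lemma:even-partition-general-n} to a cleaner ``balanced partition'' statement and then obtain that statement by rounding the trivial fractional equal partition. First I would observe that it suffices to find a partition $(S_1,\ldots,S_{|N^-|})$ of $M^-$ in which every agent $i\in N^-$ values every bundle at least its fair share minus a bounded number of items, say
\[
c_i(S_j)\;\ge\;\frac{1}{|N^-|}\,c_i(M^-)\;-\;(|N^-|-1)\cdot\max_{e\in M^-}c_i(e)\qquad\text{for all }i,j\in N^-.
\]
Together with the hypothesis $\max_{e\in M^-}c_i(e)\le\frac{1}{2n^2+2n}c_i(M^-)$ and the bound $|N^-|\le n$, this immediately gives the lemma, since $\frac{1}{|N^-|}-\frac{|N^-|-1}{2n^2+2n}\ge\frac1n-\frac{n-1}{2n^2+2n}=\frac{n+3}{2n^2+2n}\ge\frac{1}{2n^2+2n}$. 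So the whole content lies in the balanced-partition claim: the cost of each bundle can be kept within ``a few items'' of $\frac{1}{|N^-|}c_i(M^-)$, simultaneously under every agent's cost function.

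To prove the balanced-partition claim I would start from the fractional solution $x_{ej}=1/|N^-|$, which satisfies $\sum_j x_{ej}=1$ for every $e\in M^-$ and $\sum_e x_{ej}c_i(e)=\frac{1}{|N^-|}c_i(M^-)$ for every $i\in N^-$ and every bundle $j$, and then round it to an integral partition while perturbing the cost of each bundle, under each $c_i$, by at most $|N^-|-1$ items' worth. This is the kind of rounding underlying PROP1 guarantees for goods~\cite{conf/sigecom/LiptonMMS04}: maintain the bipartite graph whose left vertices are the currently fractionally-split items and whose right vertices are the $|N^-|$ bundles, with an edge whenever $0<x_{ej}<1$; while this graph contains a cycle, push fractional mass around the cycle until some edge hits $0$ or $1$, which permanently deletes an edge and strictly decreases the number of split items; once the graph is a forest it has at most $|N^-|-1$ split items, each bundle is incident to at most $|N^-|-1$ of them, and rounding every remaining split item arbitrarily into one of its incident bundles changes $c_i(S_j)$ by at most $(|N^-|-1)\max_{e}c_i(e)$ for all $i,j$. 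Wrapping up is then routine: hand the bundles $(S_1,\ldots,S_{|N^-|})$ to the agents of $N^-$, and combine with the assignments of $K$ and of $L\setminus K$ described just before the lemma.

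The step I expect to be the main obstacle is precisely this rounding bookkeeping. Taking a basic feasible solution of the full linear system (assignment constraints \emph{plus} all $|N^-|^2$ bundle-value constraints) only bounds the number of fractionally-split items by $\Theta(|N^-|^2)$, and such items could in principle pile onto a single bundle during rounding, costing that bundle $\Theta(n^2)$ items' worth and breaking the $\frac{1}{2n^2+2n}$ guarantee. So the argument must exploit the transportation-like structure of the item--bundle graph to keep both the total number of split items and, crucially, the number incident to any one bundle down to $O(|N^-|)$, and the cycle-cancellation phase must be organized so that the cumulative drift of any bundle's cost away from $\frac{1}{|N^-|}c_i(M^-)$ stays additively $O(|N^-|)$ items rather than growing with $m$; concretely this means pushing mass only along cycles of the original item--bundle graph (each push moves the load of at most two items between two bundles and destroys an edge) and never re-creating an edge. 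Getting this accounting exactly right is the delicate part; everything else is arithmetic.
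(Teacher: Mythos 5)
There is a genuine gap, and it sits exactly where you flagged it: the ``balanced-partition claim'' that every agent $i$ and every bundle $j$ satisfy $c_i(S_j)\ge \frac{1}{|N^-|}c_i(M^-)-(|N^-|-1)\max_{e}c_i(e)$ \emph{simultaneously for all $n$ agents} is never established, and the rounding scheme you sketch does not deliver it. Your reduction arithmetic in the first paragraph is fine, but it rests entirely on this claim. The cycle-cancellation argument works for a single cost function because one can take a vertex of the polytope with the $t$ load constraints for that one agent, whose support is a forest with at most $t-1$ split items. With $n$ agents you face two bad options. If you impose all $nt$ load constraints, a basic feasible solution has up to $\Theta(nt)$ split items, which can all be incident to one bundle; with $t\approx n$ and item costs as large as $\frac{1}{2n^2+2n}c_i(M^-)$ this is a $\Theta(1)\cdot c_i(M^-)$ error that swallows the $\frac1t c_i(M^-)$ target. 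If instead you cancel cycles while only preserving the assignment constraints, then each push perturbs $\sum_e c_i(e)x_{ej}$ for every agent on the cycle (a single degree of freedom cannot preserve $n$ agents' loads at once), and nothing prevents these perturbations from accumulating over the up to $mt$ cancellations; your assertion that the cumulative drift stays at $O(|N^-|)$ items per bundle is precisely the unproven step. Even the nonconstructive route through discrete necklace splitting gives roughly $n(t-1)$ cut items whose placement per bundle is of the same order as the target share, so the claim is at best borderline and certainly not routine.

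The paper sidesteps all of this by never asking for two-sided balance. It runs Round-Robin (viewing costs as values) to get a PROP1 partition $(Q_1,\ldots,Q_t)$, so that agent $i$'s \emph{own} bundle satisfies $c_i(Q_i)\ge \frac{2n+1}{2n^2+2n}c_i(M^-)$; then it re-partitions each $Q_i$ into $t$ sub-bundles $(Q_{i1},\ldots,Q_{it})$, PROP1 under $c_i$ alone, each worth at least $\frac{1}{2n^2+2n}c_i(M^-)$ to $i$; finally it sets $S_j=\cup_i Q_{ij}$. The lower bound $c_i(S_j)\ge c_i(Q_{ij})$ then follows because each final bundle contains one certified piece of each agent's own bundle --- a one-sided guarantee obtained with per-agent computations only, requiring no simultaneous control of all $n$ cost functions on the same bundle. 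To salvage your approach you would either need to prove the simultaneous balancedness claim (which appears to be strictly harder than the lemma) or switch to a one-sided certificate of this kind.
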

	
    \begin{proof}
	Let $N^- = \{1,2,\ldots,t\}$.
	Our goal is to partition $M^-$ into $(S_1,S_2,\ldots,S_{t})$ such that for all $i,j\in N^-$, $c_i(S_j) \geq \frac{1}{2n^2 + n}\cdot c_i(M^-)$.
	Since we need to partition the items into bundles whose cost is lower bounded, it is natural to borrow ideas from the allocation of goods.
	In particular, we treat items in $M^-$ as goods, where each agent $i$ has \emph{value} $c_i(e)$ on item $e\in M^-$. Then we compute a $t$-partition $\mathbf{Q} = (Q_1,Q_2,\ldots,Q_t)$ of $M^-$ that is PROP1 using the Round Robin algorithm~\cite{conf/sigecom/LiptonMMS04}.
	The fairness notion PROP1 is defined as follows:
	An allocation $\mathbf{Q}$ of items $M^-$ to agents $N^-$ is PROP1 if for all $i\in N^-$, there exists $e\in M^-\setminus Q_i$ such that $c_i(Q_i + e) \geq \frac{1}{t}\cdot c_i(M^-)$.
	
	Since each $e\in M^-$ has cost $c_i(e) \leq \frac{1}{2n^2 + n}\cdot c_i(M^-)$ to agent $i$, we have
	\begin{align}
	c_i(Q_i) &\geq \frac{1}{n}\cdot c_i(M^-) - \frac{1}{2n^2 + n}\cdot c_i(M^-)
	\geq \frac{2n}{2n^2 + n}\cdot c_i(M^-). \label{eq:lower-bound-Qi}
	\end{align}
	
	However, $\mathbf{Q}$ is not the desired partition since $c_i(Q_j)$ can be arbitrarily small, for some $j\neq i$.
	Thus for each $i\in N^-$ we further compute a $t$-partition $(Q_{i1},Q_{i2},\ldots,Q_{it})$ of $Q_i$ that is PROP1 under $c_i$. 
	Again, since each $e\in Q_i$ has $c_i(e)\leq \frac{1}{2n^2 + n}\cdot c_i(M^-)$, for each $j\leq t$ we have
	\begin{align}
	c_i(Q_{ij}) & \geq \frac{1}{n}\cdot c_i(Q_i) - \frac{1}{2n^2 + n}\cdot c_i(M^-) \geq \frac{1}{2n^2 + n}\cdot c_i(M^-),
	\label{eq:lower-bound-Qij}
	\end{align}
	where the second inequality follows from Equation~\eqref{eq:lower-bound-Qi}.
	
	Finally, we regroup the bundles $\{ Q_{ij} \}_{i,j\in N^-}$ into $t$ bundles: let $S_j = \cup_{i=1}^t Q_{ij}$ for each $j\in N^-$.
	By Equation~\eqref{eq:lower-bound-Qij}, for any $i, j\in N^-$, we have
	\begin{equation*}
	c_i(S_j) \geq c_i(Q_{ij}) \geq \frac{1}{2n^2 + n}\cdot c_i(M^-),
	\end{equation*}
	which concludes the proof of Lemma~\ref{lemma:even-partition-general-n}.
    \end{proof}
	
	\subsection{The Approximation Ratio}
	
	In this section, we show that all agents are $(3n^2-n)$-EFX towards the other agents.
	From the previous analysis, we know that agents in $N^*$ do not envy any other agent by more than one item, and agents in $N^-$ are $(3n^2-n)$-EFX towards agents in $N^*$.
	It remains to show that agents in $N^-$ are $(3n^2-n)$-EFX towards each other.
	Recall that each agent $i\in N^-$ receives a bundle $S_i$ (see Lemma~\ref{lemma:even-partition-general-n}), and possibly some other items from $L\setminus K$ that are small to agent $i$.
	By Lemma~\ref{lemma:even-partition-general-n}, for every $j\in N^-$ we have $c_i(X_j) \geq \frac{1}{2n^2 + n}\cdot c_i(M^-)$.
	
	Next we give an upper bound on $c_i(X_i)$.
	{  Recall that for any $i\in N$ we have $|L_i|\leq n-2$.
	Hence there are at most $(n-1)\cdot (n-2)$ items in $L\setminus K$ that are small to agent $i$.
    Furthermore, there are at most $n-2$ items in $K$, which leads to $|N^-| \geq 2$.
    In other words, for any agent $i\in N^-$, there exists at least another agent $j\in N^-$ such that $c_i(X_j) \geq \frac{1}{2n^2+n}\cdot c_i(M^-)$.}
        We have
	\begin{align*}
	c_i(X_i) & \leq (1-\frac{1}{2n^2 + n})\cdot c_i(M^-) + (n-1)\cdot (n-2)\cdot b_i \\
	& \leq \frac{2n^2+n-1}{2n^2+n}\cdot c_i(M^-) + \frac{n^2-3n+2}{2n^2+n}\cdot c_i(M^-)\\
	& \leq \frac{3n^2 - 2n + 1}{2n^2 + n}\cdot c_i(M^-) \leq \frac{3n^2-n}{2n^2 + n}\cdot c_i(M^-),
	\end{align*}
	where the second inequality follows from Equation~\eqref{eq:M_i^-and-M^-}.
	Combining the upper bound on $c_i(X_i)$ and lower bound on $c_i(X_j)$, we have $c_i(X_i) \leq (3n^2-n)\cdot c_i(X_j)$ for any $i,j\in N^-$.
	Hence agents in $N^-$ are $(3n^2-n)$-EFX towards each other.
	
    {  Finally, we argue that the allocation can be computed in $O(n m \log m)$ time.
    The main complexity comes from the division of items into three sets $K, M^-$ and $L\setminus K$, which takes $O(n m \log m)$ time as each $L_i$ can be computed in $O(m \log m)$ time by sorting items based on $c_i$.
    Given the three sets, it can be verified that allocating items in $K$, $M^-$ and $L\setminus K$ takes $O(n m \log m)$ time.
    More specifically, the computation of bundles $\{Q_{ij}\}_{i,j\in N^-}$ takes $O(nm \log m)$ time by using the Round-Robin algorithm twice.
    The regrouping of those bundles can be done in $O(nm)$ time since each bundle $S_j$ contains at most $m$ items and there are $|N^-|<n$ bundles to compute.}

	\section{Bi-valued Instances with Three Agents}\label{sec:bi-three}
	
	In the following two sections, we consider the fair allocation problem with agents having bi-valued cost functions\footnote{In these two sections we use some notations that are different from before, which are more convenient to work with for bi-valued instances. For example, we no longer assume the cost functions are normalized and we do not use $\sigma_i(j)$ to denote the $j$-th most costly item.}.
	That is, there exist constants $a, b\geq 0$ such that for any $i\in N$ and $e\in M$, $c_i(e)\in\{a,b\}$.
	Equivalently, for any $a\neq b$, we can scale the cost function so that $c_i(e)\in\{\epsilon,1\}$ where $\epsilon \in [0,1)$.
	Note that when $\epsilon=0$, the instance is binary.

	In the following, we present an algorithm that computes EFX allocations for three agents with bi-valued cost functions.
	We first give some definitions, some of which will be reused in the next section.
 
	\begin{definition}[Consistent Items]
	    We call item $e\in M$ a \emph{consistently large} item if for all $i\in N, c_i(e) = 1$;
	    a \emph{consistently small} item if for all $i\in N, c_i(e) = \epsilon$.
	    All other items are called \emph{inconsistent} items.
	\end{definition}
	
	\begin{definition}[Large/small Items]
        We call item $e\in M$ \emph{large to agent $i$} if $c_i(e) = 1$;
        \emph{small to $i$} if $c_i(e) = \epsilon$.
        Let $M^-_i = \{ e\in M: c_i(e)=\epsilon \}$ be the items that are small to agent $i$.
        If an item is large (resp. small) to agent $i$, but is small (resp. large) to all other agents, we say that it is \emph{large (resp. small) only to} agent $i$.
    \end{definition}
	    
	The algorithm we use to compute EFX allocations for three agents is based on the Round-Robin algorithm.
	The algorithm (see Algorithm~\ref{alg:RR}) takes as input a set of items and an ordering of the agents $\{ \sigma_1,\sigma_2,\ldots,\sigma_n \} = N$, and lets the agents pick their favourite (minimum cost) item one-by-one, following the order $\sigma$, until all items are allocated.
	We call the output allocation $\bX$ a round-robin allocation.
	Note that in each round, exactly one item is assigned.
    {  We index the rounds by $1,2,\cdots,|M'|$.
	Note that when $M'\neq M$, the allocation is partial.}
	For every agent $i$, we denote the last round during which she receives an item by $r_i$.
    We first show the fairness of round-robin allocation and how we can use the property of round-robin allocation to compute an EFX allocation.
    
	\begin{algorithm}
		\caption{Round-Robin ($M', \sigma_1, \sigma_2, \cdots, \sigma_n$)}
		\label{alg:RR} 
		\KwIn{A set of items $M'\subseteq M$ and an ordering of the agents $(\sigma_1, \sigma_2, \cdots, \sigma_n)$.}
		Initialize $X_i \gets \emptyset$ for all $i\in N$, and let $P\gets M'$, $i\gets 1$\;
		\While{$P\neq \emptyset$}{
		    let $e\gets \argmin_{e\in P}\{c_{\sigma_i}(e)\}$\;
		    update $X_{\sigma_i} \gets X_{\sigma_i} + e$ and $P\gets P-e$\;
		    set $i\gets (i \mod n) + 1$ \;}
 		\KwOut{$\bX=(X_1,X_2,\cdots,X_n)$}
	\end{algorithm}
	
	\begin{lemma}[One-way Envy]{\label{lemma:i-not-envy-j}}
	    {  For a given round-robin allocation of any general additive instance, for any two agents $i\neq j$ such that $r_i<r_j$, agent $i$ does not envy agent $j$, and agent $j$ is EF1 towards agent $i$.}
	\end{lemma}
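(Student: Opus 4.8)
The plan is to argue directly from the structure of the round-robin picking sequence, without invoking the bi-valued assumption. Index the rounds by $1,2,\dots,m$, and note that the agent occupying position $c$ in the order $\sigma$ (i.e.\ the $c$-th picker within each cycle) takes items exactly at rounds $c,c+n,c+2n,\dots$. I will use two elementary consequences of this. First, if an agent ends up with $k$ items, her picking rounds are $c,c+n,\dots,c+(k-1)n$, so her last pick $r$ satisfies $k=\lceil r/n\rceil$; moreover $r>m-n$ (otherwise another item would be assigned to her), so every agent's last pick lies in the final $n$ rounds. Second, at each of her turns an agent takes a minimum-cost available item under her own cost function, so if an item $e$ is picked in a round strictly later than some turn of agent $a$, then $e$ was still unassigned at that turn, hence the item $a$ took there costs at most $c_a(e)$.

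First I would record the size comparison. Since the last picks of $i$ and $j$ both lie in an interval of at most $n$ consecutive rounds and $r_i<r_j$, we get $0<r_j-r_i\le n-1$; combined with $|X_i|=\lceil r_i/n\rceil$ and $|X_j|=\lceil r_j/n\rceil$ this yields $|X_i|\le |X_j|\le |X_i|+1$. Write $p=|X_i|$, $q=|X_j|$, and for $s\le p$ let $a_s$ be the item $i$ picks at her $s$-th-from-last turn (round $r_i-(s-1)n$), and for $s\le q$ let $b_s$ be $j$'s $s$-th-from-last pick (round $r_j-(s-1)n$).

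For ``$i$ does not envy $j$'': for each $s\le p\le q$ the round of $b_s$ is $r_j-(s-1)n>r_i-(s-1)n$, the round of $a_s$, so $b_s$ was available when $i$ picked $a_s$, whence $c_i(a_s)\le c_i(b_s)$. Summing over $s=1,\dots,p$ and using that $b_1,\dots,b_p$ are distinct members of $X_j$ gives $c_i(X_i)=\sum_{s\le p}c_i(a_s)\le\sum_{s\le p}c_i(b_s)\le c_i(X_j)$. For ``$j$ is EF1 towards $i$'': remove from $X_j$ the last item $b_1$ that $j$ picked. For $s=2,\dots,q$, the round of $a_{s-1}$ is $r_i-(s-2)n$, and $r_i-(s-2)n>r_j-(s-1)n$ precisely because $r_j-r_i\le n-1$; also $s-1\le q-1\le p$, so $a_{s-1}$ is a genuine pick of $i$. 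Hence $a_{s-1}$ was available when $j$ picked $b_s$, so $c_j(b_s)\le c_j(a_{s-1})$, and since $s\mapsto s-1$ is injective and $q-1\le p$, $c_j(X_j-b_1)=\sum_{s=2}^q c_j(b_s)\le\sum_{s=2}^q c_j(a_{s-1})\le c_j(X_i)$, which is exactly EF1.

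The two summation steps are routine once the pairings are in place; the one spot that needs care is the bookkeeping that makes the pairings legitimate, namely deriving $|X_i|\le|X_j|\le|X_i|+1$ and the round gap $r_j-r_i\le n-1$ from the sole hypothesis $r_i<r_j$ (the envy pairing needs $p\le q$; the EF1 pairing needs both $q-1\le p$ and the $n-1$ gap to place $a_{s-1}$ after $b_s$). I expect that size/indexing argument to be the main, if mild, obstacle; everything else is forced by the minimum-cost picking rule.
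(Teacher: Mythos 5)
Your proof is correct and takes essentially the same approach as the paper: both arguments pair each of agent $i$'s picks with agent $j$'s next pick (and, for EF1, each non-final pick of $j$ with $i$'s next pick) and invoke the minimum-cost picking rule. Your version simply makes explicit the injection and the bookkeeping ($r_j - r_i \le n-1$, $|X_i|\le|X_j|\le|X_i|+1$) that the paper's shorter proof leaves implicit.
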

	\begin{proof}
	    Since $r_i < r_j$, for every round $r$ in which agent $i$ received an item $e$, there exists a round $r'$ satisfying $r<r'<r+n$, in which agent $j$ received an item $e'$.
	    Since $r<r'$, we have $c_i(e)\leq c_i(e')$, which leads to $c_i(X_i) \leq c_i(X_j)$. In other words, agent $i$ does not envy agent $j$.
	    Similarly, the above argument holds for agent $j$ and each round $r\neq r_j$ during which agent $j$ receives an item.
	    Consequently, excluding the last item $e$ agent $j$ receives, we have $c_j(X_j - e)\leq c_j(X_i)$. 
        Hence agent $j$ is EF1 towards agent $i$.
	\end{proof}
	
	\begin{lemma}{\label{lemma:eps-and-1-makes-EFX}}
	    Given the round-robin allocation on items $M'\subseteq M$ and agents $i$ and $j$ with $r_i<r_j$, if there exists $e\in M\setminus M'$ such that $c_i(e)=\epsilon$ and $c_j(e)=1$, then in the allocation $(\ldots,X_i+e,\ldots, X_j,\ldots)$, agent $i$ is EFX towards agent $j$ and agent $j$ does not envy agent $i$.
	\end{lemma}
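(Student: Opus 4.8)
The plan is to verify the two claimed properties separately, using Lemma~\ref{lemma:i-not-envy-j} as the workhorse for the original round-robin allocation on $M'$ and then tracking how adding the single item $e$ to $X_i$ changes things. Let $\bX' = (X_1,\ldots,X_n)$ denote the round-robin allocation on $M'$, and let $\bX$ be the modified allocation in which agent $i$ additionally receives $e$. Throughout, recall the hypotheses $c_i(e) = \epsilon$ and $c_j(e) = 1$, and $r_i < r_j$.

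First I would handle agent $j$. By Lemma~\ref{lemma:i-not-envy-j}, in $\bX'$ agent $j$ is EF1 towards agent $i$, so there is some $e'\in X_j$ with $c_j(X_j - e') \leq c_j(X_i)$. Actually I want the stronger statement that $c_j(X_j) \le c_j(X_i) + 1$: since all items have cost in $\{\epsilon,1\}$, removing one item decreases $j$'s cost by at most $1$, so $c_j(X_j) \le c_j(X_j - e') + 1 \le c_j(X_i) + 1$. Now in $\bX$ agent $i$'s bundle is $X_i + e$ with $c_j(e) = 1$, hence $c_j(X_i + e) = c_j(X_i) + 1 \ge c_j(X_j)$. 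Therefore agent $j$ does not envy agent $i$ in $\bX$.

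Next I would handle agent $i$, who must be shown EFX towards agent $j$, i.e. for every item $f \in X_i + e$ we need $c_i((X_i+e) - f) \le c_i(X_j)$. By Lemma~\ref{lemma:i-not-envy-j}, in $\bX'$ we already have $c_i(X_i) \le c_i(X_j)$. Consider two cases for the item $f$ being removed. If $f = e$, then $c_i((X_i+e)-e) = c_i(X_i) \le c_i(X_j)$ and we are done. If $f \in X_i$, then $c_i((X_i+e)-f) = c_i(X_i - f) + c_i(e) = c_i(X_i - f) + \epsilon$; I need this to be at most $c_i(X_j)$. The clean way is to observe that $e$ is, by hypothesis, small to agent $i$, so $c_i(e) = \epsilon \le c_i(f)$ for \emph{every} item $f$ (every item costs either $\epsilon$ or $1$ to agent $i$), hence $c_i(X_i - f) + c_i(e) \le c_i(X_i - f) + c_i(f) = c_i(X_i) \le c_i(X_j)$. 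This closes the second case, so agent $i$ is EFX towards agent $j$ in $\bX$.

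The only subtlety — and the one place I would be careful — is the inequality $c_i(e)\le c_i(f)$ used in the last step: it relies crucially on $e$ being small to $i$, i.e. $c_i(e)=\epsilon$, which is exactly the bi-valued hypothesis, together with the fact that $\epsilon$ is the minimum possible cost. This is what makes the argument specific to bi-valued instances and is the reason the lemma is phrased with the condition $c_i(e)=\epsilon$. No per-round bookkeeping beyond Lemma~\ref{lemma:i-not-envy-j} is needed, so the whole proof is short; I would just present the two cases above.
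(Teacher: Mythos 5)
Your proof is correct and follows essentially the same route as the paper's: Lemma~\ref{lemma:i-not-envy-j} gives $c_i(X_i)\le c_i(X_j)$ and the EF1 bound $c_j(X_j)\le c_j(X_i)+1$, and then the facts $c_i(e)=\epsilon$ (the minimum possible cost, so removing any item from $X_i+e$ leaves cost at most $c_i(X_i)$) and $c_j(e)=1$ close both directions. Your version merely makes explicit the case analysis that the paper's one-line argument for agent $i$ leaves implicit.
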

	\begin{proof}
	    From Lemma~\ref{lemma:i-not-envy-j}, we have $c_i(X_i+e)-\epsilon = c_i(X_i) \leq c_i(X_j)$, which leads to agent $i$ being EFX towards agent $j$.
	    {  As for agent $j$, we have $c_j(X_i+e) = c_j(X_i)+1 \geq c_j(X_j)$ since agent $j$ does not envy agent $i$ by more than one item after the round-robin allocation (before the item $e$ is assigned).
	    Therefore agent $j$ does not envy agent $i$.}
	\end{proof}

    In the following, we use the above lemma to prove the main result in this section.
 
	\begin{theorem}\label{thm:efx-3-agents-bi-valued}
	    There exists an algorithm that computes EFX allocations for three agents with bi-valued cost functions in $O(m \log m)$ time.
	\end{theorem}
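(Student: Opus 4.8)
The engine of the algorithm will be the Round-Robin procedure (Algorithm~\ref{alg:RR}) together with Lemmas~\ref{lemma:i-not-envy-j} and~\ref{lemma:eps-and-1-makes-EFX}. I would start from one more elementary observation about round-robin: at each of her turns an agent takes a cheapest available chore, and since the pool of available chores only shrinks, the cost \emph{to her} of the chore she takes is non-decreasing over her turns; hence the last chore an agent receives is a most expensive one in her bundle. Combined with Lemma~\ref{lemma:i-not-envy-j}, this shows that whenever $r_i<r_j$, agent $j$'s EF1 guarantee towards agent $i$ is already an EFX guarantee \emph{unless} $X_j$ contains both a chore large to $j$ and a chore small to $j$; call such a bundle \emph{mixed}. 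Running round-robin with a suitably chosen agent order and relabelling so that $r_1<r_2<r_3$, the output is then automatically EFX except possibly in the three directions $2\to 1$, $3\to 1$, $3\to 2$, and only when the bundle of the envying agent is mixed.

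The plan is to eliminate these residual directions by \emph{withholding} a small set $R$ of repair chores before running round-robin on $M\setminus R$, then re-inserting them. By Lemma~\ref{lemma:eps-and-1-makes-EFX}, the ideal repair chore for direction $j\to i$ (with $r_i<r_j$) is one that is small to $i$ and large to $j$: adding it to $X_i$ makes $i$ EFX towards $j$ while keeping $j$ non-envious of $i$. A chore ``small only to agent~$1$'' is large to both $2$ and $3$, hence repairs $2\to 1$ and $3\to 1$ at once; a chore ``small only to agent~$2$'' repairs $3\to 2$. I would thus split into cases on the inconsistent chores. When the required ``small-only'' chores are available, put them in $R$, run round-robin on $M\setminus R$ with the order chosen so that their recipients are the early agents, and re-insert. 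The remaining cases are: there are no inconsistent chores (an identical-ordering instance, for which an EFX allocation is known to exist); or the repair chore a direction needs is absent, in which case I would instead argue that the round-robin output on the reduced instance is simply not mixed in the offending direction — which can be forced through the choice of round-robin order, and if necessary by pre-assigning a few chores that are large to the offending agent, so that agent ends up holding only large chores, which trivially makes her EFX.

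The step I expect to be the main obstacle is exactly showing that re-inserting the repair chores creates no \emph{new} violation, and that in the branches without a suitable repair chore the reduced round-robin output genuinely avoids a mixed offending bundle. Re-inserting a chore into $X_3$, say, affects the directions $3\to 1$ and $3\to 2$ (controlled by Lemma~\ref{lemma:eps-and-1-makes-EFX}) but also every direction $\cdot\to 3$; the latter is benign, since agents $1,2$ only see agent~$3$'s bundle grow and hence envy it less, but this must still be recorded, and re-inserting one repair chore into $X_2$ while another went into $X_1$ forces a joint re-check of agent~$2$'s EFX towards agent~$1$. The real combinatorial work is the case split over which inconsistent-chore types are present (and the relevant chore counts modulo $3$), exhibiting in each branch a valid order and withheld set $R$; the identical-functions base case needs its own short argument, since round-robin alone is not always EFX there. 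Each branch uses only sorting and a constant number of round-robin passes, so the running time is $O(m\log m)$.
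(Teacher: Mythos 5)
Your engine---withhold a few inconsistent chores, run round-robin on the rest, and re-insert them via Lemma~\ref{lemma:eps-and-1-makes-EFX}---is exactly the paper's mechanism for the favourable cases, and your observation that in a bi-valued instance the round-robin EF1 guarantee upgrades to EFX whenever the envying agent's bundle is not mixed is correct. The gap sits precisely in the branch you defer to ``the real combinatorial work'': when no chore is small to exactly one agent, every inconsistent chore is large to exactly one agent, and each agent has at most one such chore. Order the agents so that $r_1<r_2<r_3$; the directions to repair are $2\to1$, $3\to1$ and $3\to2$, and the last two each require a chore that is large to agent~$3$ and small to its recipient to be placed in $X_1$ and in $X_2$ respectively. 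The only candidate is the single chore large only to agent~$3$ (a consistently large chore is also large to whoever receives it, so Lemma~\ref{lemma:eps-and-1-makes-EFX} cannot be applied to it), and it cannot sit in both bundles; permuting the round-robin order merely relabels which agent plays this doubly-demanded role. Your fallbacks do not rescue this branch: the remaining chores are all consistent, so round-robin on them amounts to dealing out a sorted list under a common cost function and routinely produces mixed bundles (four $\epsilon$-chores and one $1$-chore yield $X_2=\{\epsilon,1\}$ and $X_3=\{\epsilon\}$, and removing the $\epsilon$-chore from $X_2$ leaves cost $1>\epsilon$), while ``pre-assigning large chores so the offending agent holds only large chores'' both destroys the EF1 guarantee you rely on and may be infeasible when almost all chores are consistently small.

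The paper resolves this branch with a tool absent from your plan (Lemma~\ref{lemma:one-large-only-each}): it abandons round-robin, computes an EFX partition $(S_1,S_2,S_3)$ of the consistent chores under the \emph{common} cost function $c$ with $c(S_1)\le c(S_2)\le c(S_3)$, and then chooses between two different placements of the (at most three) large-only chores according to whether $c(S_3)-c(S_1)>\epsilon$; that threshold is exactly what makes the last agent's bundle tolerable in each sub-case. A common-cost-function argument of the same flavour is also needed when two agents share an identical cost function and in the sub-case $M_1^-=\{e_3\}$ of Lemma~\ref{lemma:two-large-only}, neither of which your case split isolates. So the skeleton and the easy cases are right, but the hardest case needs an idea your proposal does not contain.
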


    {  As it will be apparent, the running time claimed in Theorem~\ref{thm:efx-3-agents-bi-valued} is dominated by the Round-Robin algorithm, which can be done in $O(m \log m)$ time by sorting items based on each agent's cost function.}
 
	We first note that computing an EFX allocation is easy for some special cases.
	For instance, if there exist two agents sharing the same cost function, e.g., $c_1 = c_2$, we can compute an EFX allocation by partitioning the items into an EFX allocation under the cost function of agent 1, then let agent 3 pick her favorite bundle.
	Clearly, agent 3 would not envy the other two agents.
	In addition, since agents 1 and 2 share the same cost function, no matter which bundle they receive, they are EFX towards the other agents. 
	Hence in the following, we assume that no two agents share the same cost function.
	
	\paragraph{Overview}
	The main idea of our algorithm is to make use of Lemma~\ref{lemma:i-not-envy-j} and~\ref{lemma:eps-and-1-makes-EFX} to construct an EFX allocation.
	In particular, given a round-robin allocation $\bX$, we would like to assign some unallocated items to improve the fairness guarantee to EFX.
    {  Note that every item can be categorized into one of the following four types: \emph{consistently large, consistently small, large only to some agent $i$, small only to some agent $i$}.}
	Suppose every agent $i$ has an item $e_i$ that is small only to agent $i$.
	Then by computing a round-robin allocation on items $M' = M\setminus \{e_1,e_2,e_3\}$ and allocating each $e_i$ to agent $i$, the resulting allocation is EFX, by Lemma~\ref{lemma:eps-and-1-makes-EFX}.
    {  In fact, as we will show in Lemma~\ref{lemma:exist-small-only}, we can find an EFX allocation as long as there is an item that is small only to some agent.}
	When this does not hold, our algorithm carefully examines the number of items of each type in the viewpoint of each agent and proceeds differently.
	
	\begin{lemma} \label{lemma:exist-small-only}
	   If there exists an item that is small only to some agent, then an EFX allocation can be computed in polynomial time.
	\end{lemma}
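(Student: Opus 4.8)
The plan is to case-analyze on how many agents have an item that is small only to them. We are in the regime where no two agents share a cost function, and we have the two workhorse lemmas (Lemma~\ref{lemma:i-not-envy-j} and Lemma~\ref{lemma:eps-and-1-makes-EFX}) at our disposal. The guiding principle is: if an item $e$ is small only to agent $i$, then handing $e$ to agent $i$ while everyone else runs round-robin on the rest is ``free'' for the relationship from $i$ to the others, because Lemma~\ref{lemma:eps-and-1-makes-EFX} upgrades ``$i$ does not envy $j$'' to ``$i$ is EFX towards $j$'' and simultaneously keeps $j$ from envying $i$. So the whole difficulty is controlling the envy \emph{among} the agents that do \emph{not} receive such a private-small item, and making sure the round-robin ordering is chosen so that those agents are EFX towards each other.

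First I would dispose of the easy sub-case: if every agent has an item that is small only to them, pick one such item $e_i$ for each $i$, run Round-Robin on $M' = M \setminus \{e_1,e_2,e_3\}$ with an arbitrary order, and append $e_i$ to $X_i$. For any pair $i,j$, after relabeling so $r_i < r_j$ in the round-robin phase, Lemma~\ref{lemma:i-not-envy-j} gives $c_i(X_i)\le c_i(X_j)$ and $c_j(X_j - e^{\text{last}}_j)\le c_j(X_i)$; then Lemma~\ref{lemma:eps-and-1-makes-EFX} applied with $e_i$ (small only to $i$, hence large to $j$) yields $i$ EFX towards $j$ and $j$ not envying $i$ --- so in particular $j$ is EFX towards $i$. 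Since this holds for every ordered pair, the allocation is EFX. The interesting sub-cases are when exactly one or exactly two agents own a private-small item.

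Suppose only agent $1$ (say) has an item small only to agent $1$, call it $e_1$. I would first assign $e_1$ to agent $1$ and then choose the round-robin order on $M - e_1$ so that agent $1$ picks \emph{last}, i.e.\ $\sigma = (2,3,1)$ or $(3,2,1)$; this makes $r_1$ maximal among the round-robin rounds, which is exactly the hypothesis format needed to invoke Lemma~\ref{lemma:eps-and-1-makes-EFX} with $i=1$ against both other agents. That settles all relationships involving agent $1$: $1$ is EFX towards $2$ and $3$ by Lemma~\ref{lemma:eps-and-1-makes-EFX}, and $2,3$ do not envy $1$. The remaining task is the pair $\{2,3\}$, and here I would argue that because neither agent $2$ nor agent $3$ has a private-small item, any item that is small to $2$ is also small to $3$ or large to $2$, which constrains their cost vectors enough that the plain round-robin outcome between them (whoever has the earlier last round does not envy the other, the other is EF1) can be promoted to EFX --- using that in a bi-valued world an EF1 violation of size exactly one ``large'' item, together with the structural constraint on which items are small to whom, forces an actual EFX-compatible configuration. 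I expect this to require a short argument distinguishing whether agent $2$'s last round item is large or small to agent $2$; if it is small (cost $\epsilon$), EF1 already implies EFX; if it is large, one uses that agent $3$ then also finds it large (no private-small items left for $2$ or $3$) and a counting/parity argument on the round-robin sequence.

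The two-private-agents sub-case (agents $1$ and $2$ each have a private-small item $e_1,e_2$, agent $3$ does not) is handled the same way: assign $e_1$ to $1$ and $e_2$ to $2$, run round-robin on the rest with an order placing $1$ and $2$ after $3$ --- say $\sigma=(3,1,2)$ --- so that both $r_1$ and $r_2$ exceed $r_3$; Lemma~\ref{lemma:eps-and-1-makes-EFX} then handles $1$ vs $3$ and $2$ vs $3$, and for the pair $\{1,2\}$ one applies Lemma~\ref{lemma:eps-and-1-makes-EFX} in whichever direction the private-small items point (e.g.\ $e_2$ is large to $1$ if $1$ is not the agent it is small to, which holds since $e_2$ is small only to $2$), getting EFX both ways. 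The main obstacle, and the step I would spend the most care on, is the pair of agents \emph{neither} of whom has a private-small item: there is no private item to invoke Lemma~\ref{lemma:eps-and-1-makes-EFX}, so I must show directly that the bi-valued structure plus ``no private-small item for either of them'' plus ``distinct cost functions'' forces the round-robin allocation between them (or a slightly adjusted one) to be EFX, not merely EF1. I would resolve this by leveraging that, for such a pair, every item is either large to both, small to both, or (the only asymmetry available) large to one and \emph{also} large to someone outside the pair --- which is a strong enough restriction to rule out the EF1-but-not-EFX configurations.
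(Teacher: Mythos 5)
There is a genuine gap, on two counts.

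First, your round-robin ordering is backwards. Lemma~\ref{lemma:eps-and-1-makes-EFX} requires $r_i<r_j$ for the agent $i$ who \emph{receives} the withheld item $e$: the agent holding the private-small item must finish \emph{earlier} than the agents she is being compared against. You place agent $1$ last so that $r_1$ is maximal, which is the opposite of the lemma's hypothesis. With your ordering, what the round-robin phase gives you for agent $1$ is only the EF1 guarantee of Lemma~\ref{lemma:i-not-envy-j} ($c_1(S_1-e^{\mathrm{last}})\le c_1(S_j)$), and appending the small item $e_1$ does not upgrade this to EFX: removing $e_1$ from $X_1=S_1+e_1$ leaves all of $S_1$, and $c_1(S_1)\le c_1(S_j)$ is exactly what you do not have. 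The same reversal occurs in your two-private-agents sub-case ($r_1,r_2>r_3$ where the lemma needs $r_1,r_2<r_3$). The correct ordering, as in the paper, puts the agents who receive withheld small items first, e.g.\ $r_1<r_2<r_3$.

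Second, and more importantly, the pair of agents with no private-small item is exactly where the work is, and you leave it unresolved. Your sketched ``counting/parity'' argument is not carried out, and EF1 from round-robin genuinely does not imply EFX here. The paper's key move, which your proposal is missing, is this: if agent $1$ has an item $e_1$ small only to her, then $c_2(e_1)=c_3(e_1)=1$, and since $c_2\neq c_3$ there must exist another item $e_2$ with $c_2(e_2)\neq c_3(e_2)$, say $c_2(e_2)=\epsilon$ and $c_3(e_2)=1$ (this $e_2$ need not be small \emph{only} to agent $2$). Withholding both $e_1$ and $e_2$, running round-robin with $r_1<r_2<r_3$, and setting $X_1=S_1+e_1$, $X_2=S_2+e_2$, $X_3=S_3$ lets Lemma~\ref{lemma:eps-and-1-makes-EFX} (plus one direct EF1 computation for agent $2$ versus agent $1$) settle every pair in a single uniform argument, with no case split on how many agents own private-small items. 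Without this disagreement-item device, your non-private pair remains only EF1, and the proof does not close.
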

	\begin{proof}
	    Suppose there exists an agent $i$, say $i=1$, that has an item $e_1$ that is small only to her.
	    Since $c_2(e_1) = c_3(e_1) = 1$, there must exists an item $e_2\neq e_1$ such that $c_2(e_2)\neq c_3(e_2)$, otherwise we have $c_2=c_3$.
	    We can assume w.l.o.g. that $c_2(e_2)=\epsilon, c_3(e_2)=1$.
	    Let $M' = M\setminus \{e_1, e_2\}$.
	    By carefully deciding the ordering of agents, we can compute a round-robin allocation $(S_1,S_2,S_3)$ on items $M'$, satisfying $r_1 < r_2 < r_3$.
	    In other words, agent 3 is the last one that receives an item; agent 2 is the second last one that receives an item.
	    Then we let the final allocation be
	    \begin{equation*}
	        X_1 \gets S_1 + e_1,\quad X_2 \gets S_2 + e_2,\quad X_3 \gets S_3.
	    \end{equation*}
	    
	    From Lemma~\ref{lemma:i-not-envy-j}, we have
        \begin{equation*}
            c_1(S_1)\leq \min\{c_1(S_2), c_1(S_3)\} \quad \text{ and }\quad c_2(S_2)\leq c_2(S_3).
        \end{equation*}
        
	    Since $c_1(e_1) = \epsilon$, agent $1$ certainly is EFX towards the other two agents.
	    Since $c_2(e_2)=\epsilon$, agent $2$ is EFX towards agent $3$.
    	Agent $2$ is EFX towards agent $1$ because $c_2(e_1) = 1$ and thus
        \begin{equation*}
            c_2(X_2-e_2) = c_2(S_2) \leq c_2(S_1+e_1).
        \end{equation*}
    	Finally, by Lemma~\ref{lemma:eps-and-1-makes-EFX}, agent $3$ does not envy the other two agents since $c_3(e_1) = c_3(e_2) = 1$.
    	Hence $(X_1,X_2,X_2)$ is an EFX allocation.
	\end{proof}
    
    By Lemma~\ref{lemma:exist-small-only}, it remains to consider that case when every inconsistent item is large only to some agent.
    In other words, suppose $L_i$ is the set of items that are large only to agent $i$, then $(L_1, L_2, L_3)$ is a partition of the inconsistent items into three sets. We finish the proof of Theorem~\ref{thm:efx-3-agents-bi-valued} by the following two lemmas. 
	
	\begin{lemma} \label{lemma:two-large-only}
	    If there exists $|L_i|\geq 2$, then an EFX allocation can be computed in polynomial time.
	\end{lemma}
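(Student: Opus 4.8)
Recall we are in the case where every inconsistent item is large only to some agent, so the inconsistent items are partitioned as $(L_1, L_2, L_3)$ where $L_i$ is the set of items large only to agent $i$. Suppose $|L_i| \geq 2$ for some agent; say $i = 1$, and pick two distinct items $e, e' \in L_1$. These items have cost $1$ to agent $1$ and cost $\epsilon$ to both agents $2$ and $3$. The remaining items $M' = M \setminus \{e, e'\}$ will be distributed by a round-robin allocation with an ordering to be chosen, and we will hand $e$ and $e'$ to agents $2$ and $3$ respectively.

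\textbf{Plan.} The plan is to run the round-robin algorithm on $M'$ with an ordering $\sigma$ in which agent $1$ is \emph{last}, i.e.\ $r_1 > r_2$ and $r_1 > r_3$; say the order is $(\sigma_1, \sigma_2, \sigma_3) = (2, 3, 1)$ so that $r_2 < r_3 < r_1$. Let $(S_1, S_2, S_3)$ be the resulting bundles, and set
\begin{equation*}
X_1 \gets S_1, \qquad X_2 \gets S_2 + e, \qquad X_3 \gets S_3 + e'.
\end{equation*}
First I would verify the easy envy relations among agents $2$ and $3$: since $r_2 < r_3$, Lemma~\ref{lemma:i-not-envy-j} gives $c_2(S_2) \leq c_2(S_3)$ and tells us agent $3$ is EF1 toward agent $2$; because $e$ and $e'$ are both small ($=\epsilon$) to agents $2$ and $3$, Lemma~\ref{lemma:eps-and-1-makes-EFX} (applied with the pair $(2,3)$ and item $e$, and symmetrically reasoning for $e'$) should yield that agent $2$ is EFX toward agent $3$ and agent $3$ does not envy agent $2$ — the single extra $\epsilon$-item on each side is exactly what the lemma absorbs. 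Then I would handle agent $1$: since agent $1$ picks last in every round, $c_1(S_1) \leq \min\{c_1(S_2), c_1(S_3)\}$ by Lemma~\ref{lemma:i-not-envy-j}, and $e, e'$ are \emph{large} to agent $1$, so $c_1(X_2) = c_1(S_2) + 1 \geq c_1(S_1) = c_1(X_1)$ and likewise for $X_3$; hence agent $1$ does not envy anyone at all, in particular is EFX.

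\textbf{Main obstacle.} The delicate direction is showing agents $2$ and $3$ do not strongly-envy agent $1$, whose bundle $X_1 = S_1$ received no "compensating" item. From the round-robin structure we only get EF1-type guarantees: agent $2$ satisfies $c_2(S_2 - f) \leq c_2(S_1)$ for $f$ the last item agent $2$ took (Lemma~\ref{lemma:i-not-envy-j} in the EF1 direction, since $r_1 > r_2$ means agent $1$ picks after agent $2$ in each round and agent $2$'s final item is the unmatched one). But $X_2 = S_2 + e$, so removing an arbitrary item from $X_2$ need not remove that particular $f$. The key observation I would use is bi-valuedness: $c_2$ takes only values $\epsilon$ and $1$, so within $S_2$ every "$1$-item" is interchangeable and every "$\epsilon$-item" is interchangeable. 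If $X_2$ contains a $1$-item, removing it leaves cost $\leq c_2(S_2) - 1 + \epsilon$, and I must argue this is $\leq c_2(S_1)$; if $X_2$ contains only $\epsilon$-items then $c_2(X_2 - g) = c_2(X_2) - \epsilon$ and I compare against $c_2(S_1)$. The case that $c_2(S_1) = 0$ (agent $2$ thinks $S_1$ is worthless) needs special care: then I would need $|X_2 - g| \leq 0$ after the worst removal, which forces a structural argument about how few items of positive cost land in $S_2$ — here I would lean on the fact that the round-robin with agent $2$ first means agent $2$ grabs all her $\epsilon$-items before any $1$-item, bounding how many $1$-items she ends up with, and on $e$ being a $1$-item-equivalent\footnote{to agent $2$, $e$ has cost $\epsilon$, so it behaves like an $\epsilon$-item in $S_2$.} correction. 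Symmetric reasoning handles agent $3$ versus agent $1$, with the added input that $r_3 < r_1$ and $X_3 = S_3 + e'$ with $c_3(e') = \epsilon$. Assembling these bi-valued case checks into the claim "no agent strongly-envies another," hence the allocation is EFX, completes the proof.
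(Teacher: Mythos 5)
There is a genuine gap, and it sits exactly where you wave it through rather than where you flag it. With the order $(2,3,1)$ your Lemma~\ref{lemma:i-not-envy-j} facts are: agents $2$ and $3$ (small $r$) do \emph{not} envy agent $1$, while agent $1$ (largest $r$) is only EF1 towards the others — you state this backwards when you claim $c_1(S_1)\leq\min\{c_1(S_2),c_1(S_3)\}$ for the \emph{last} picker. Agent $1$ is nevertheless fine, because $e,e'$ are large to her and absorb her EF1 deficit; and your ``main obstacle'' (agents $2,3$ versus agent $1$) is actually a non-issue: since $r_2<r_1$ and $r_3<r_1$ they do not envy $S_1$ at all, and adding one $\epsilon$-item to their own bundles preserves EFX because removing any item in a bi-valued instance removes at least $\epsilon$. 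The real problem is the pair $\{2,3\}$. Whichever of them has the larger $r$ (agent $3$ in your ordering) is only EF1 towards the other, with a deficit of up to $1$ (her last round-robin pick may cost her $1$). Lemma~\ref{lemma:eps-and-1-makes-EFX} would absorb this only if the item added to agent $2$'s bundle were \emph{large} to agent $3$; but $e\in L_1$ is large \emph{only} to agent $1$, so $c_3(e)=\epsilon$ and it contributes only $\epsilon$ to $c_3(X_2)$. Concretely, take $e,e'\in L_1$, $M'=\{w,z_1,z_2\}$ with $w$ consistently small and $z_1,z_2$ consistently large: round-robin gives $S_2=\{w\}$, $S_3=\{z_1\}$, $S_1=\{z_2\}$, and then $c_3(X_3-e')=1$ while $c_3(X_2)=c_3(w)+c_3(e)=2\epsilon$, so agent $3$ strongly envies agent $2$ whenever $\epsilon<1/2$. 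No reordering of the three agents fixes this, because the only compensating items you use are both in $L_1$ and hence worth $\epsilon$ to both of agents $2$ and $3$.

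This is precisely why the paper's proof does not stop at two items of $L_1$: it observes that $L_2\cup L_3\neq\emptyset$ (else $c_2=c_3$, handled earlier), takes $e_3\in L_2$ say, and plants it — an item \emph{large} to agent $2$ — into agent $3$'s bundle, choosing the order $r_3<r_2<r_1$ so that every agent's bundle contains an item large to each agent with a larger $r$. It also needs a fourth withheld item $e_4\in M_1^--e_3$ so that agent $1$'s own bundle ends with a removable $\epsilon$-item, and a separate degenerate subcase when $M_1^-=\{e_3\}$ (where $c_2$ and $c_3$ coincide on $M\setminus\{e_3\}$ and a two-agent-style EFX split works). Your proposal is missing the use of $L_2$ (or $L_3$) entirely, and without it the construction fails.
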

	
	\begin{lemma} \label{lemma:one-large-only-each}
	    If $|L_i| \leq 1$ for all $i\in N$, then an EFX allocation can be computed in polynomial time.
	\end{lemma}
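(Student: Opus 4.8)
The plan is to exploit the structure that remains once we know $|L_i|\le 1$ for every agent: the inconsistent items partition into at most three singletons, and every other item is either consistently large or consistently small. I would first dispose of the easy degenerate situations --- if some $L_i=\emptyset$, or if two of the three $L_i$'s are empty, the instance is essentially identical-ordering-like on the inconsistent items and can be folded into the consistent items (treating the at most one inconsistent item as consistently large/small from the viewpoints of the agents who don't own it, which only strengthens the guarantees of Lemma~\ref{lemma:i-not-envy-j} and Lemma~\ref{lemma:eps-and-1-makes-EFX}). So the main case is $|L_i|=1$ for all three agents: write $L_i=\{f_i\}$, where $f_i$ is large only to agent $i$ (hence small to the other two). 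Note $c_i(f_i)=1$ and $c_j(f_i)=\epsilon$ for $j\ne i$.

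Next, observe that $f_i$ is exactly the kind of item Lemma~\ref{lemma:eps-and-1-makes-EFX} wants: it is small to agent $i$ but large to some other agent. The idea is to run Round-Robin on the consistent items $M'=M\setminus\{f_1,f_2,f_3\}$ with a carefully chosen agent order, obtaining $(S_1,S_2,S_3)$, and then hand $f_i$ back to agent $i$. Since $f_i$ is large to both other agents, Lemma~\ref{lemma:eps-and-1-makes-EFX} (applied with the pair $(i,j)$ for each later agent $j$) shows the later agents do not envy agent $i$ even after $f_i$ is added, and agent $i$ stays EFX towards any earlier agent because $c_i(f_i)=\epsilon$ drops out when we remove it. The only envy that is not immediately covered is from an agent $j$ (appearing \emph{after} $i$ in the round-robin order, so $r_i<r_j$) towards an \emph{earlier} agent $i$: Lemma~\ref{lemma:i-not-envy-j} only gives EF1, not EFX, for $j$ towards $i$. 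This is the crux.

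To handle that, I would use the bi-valued structure of the bundle $X_i=S_i+f_i$ that agent $j$ might strongly-envy. From agent $j$'s perspective every item is worth $\epsilon$ or $1$; agent $j$ strongly-envies agent $i$ only if removing the cheapest item of $X_i$ still leaves cost exceeding $c_j(X_j)$. Because $f_i$ has value $\epsilon$ to agent $j$, if $S_i$ also contains an item of value $\epsilon$ to $j$ (i.e., a consistently small item landed in $S_i$), then $X_i$ has at least two $\epsilon$-items from $j$'s view, and the EF1 guarantee of Lemma~\ref{lemma:i-not-envy-j} plus removing one $\epsilon$-item upgrades to EFX. So the dangerous sub-case is when $S_i$ is entirely consistently-large items from $j$'s (equivalently everyone's) viewpoint. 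I would argue this forces $S_i$ to be very small --- essentially $|S_i|\le 1$, or the round-robin order can be chosen so that the agent holding an all-large bundle is picked \emph{last}, so no one envies them. Concretely: choose the round-robin order so that whichever agent ends up with the "all-consistently-large'' bundle is $\sigma_3$; since $r_{\sigma_3}$ is the last round, no agent has $r_i<r_{\sigma_3}$, so no one is in the problematic later-envies-earlier position with respect to that agent. For the other two agents, their $S_i$ will contain at least one consistently small item (there are enough consistently small items to go around, or if not, the instance has few items and is handled directly), giving the needed second $\epsilon$-item. The main obstacle is precisely making this counting argument airtight: showing that either (a) at most one of the three bundles can be all-large, and it can be forced to the last position, or (b) if two bundles threaten to be all-large the total item count is small enough for an exhaustive/direct construction. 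Once the order is fixed so the all-large bundle is last and the other bundles carry a spare $\epsilon$-item, the three verifications (agent $1$ never strongly-envies by $c_i(f_i)=\epsilon$; each later agent is EFX towards each earlier agent by the spare $\epsilon$-item or by Lemma~\ref{lemma:eps-and-1-makes-EFX}; the last agent envies no one by Lemma~\ref{lemma:i-not-envy-j}) are routine.
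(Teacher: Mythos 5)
There is a genuine error at the heart of your plan. In your first paragraph you correctly record that $f_i\in L_i$ means $c_i(f_i)=1$ and $c_j(f_i)=\epsilon$ for $j\neq i$ (this is the paper's definition of ``large \emph{only} to agent $i$''). But your allocation step then hands $f_i$ back to agent $i$ and justifies it by asserting ``$f_i$ is large to both other agents'' and ``$c_i(f_i)=\epsilon$'' --- the exact opposite of what you just wrote. Under the correct reading, giving $f_i$ to agent $i$ assigns each agent the one item that costs $1$ to her and $\epsilon$ to everyone else, which is the worst possible routing: Lemma~\ref{lemma:eps-and-1-makes-EFX} requires the appended item to be \emph{small} to its receiver and large to the other agent, so it does not apply, and agent $i$'s EFX condition must now survive the removal of an $\epsilon$-item from $S_i+f_i$ while the cost-$1$ item $f_i$ stays, which in general fails. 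The correct move is the permuted one: each $e_i$ must go to some agent $j\neq i$, for whom it costs $\epsilon$.

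Even after fixing the direction, your argument for the ``later agent envies earlier agent'' crux remains incomplete (you say so yourself), and it is avoidable. The structural fact you should exploit is that after deleting the at most three inconsistent items, \emph{every} item in $M'$ is consistent, so all three agents share a single cost function $c$ on $M'$; one can directly compute an EFX partition $(S_1,S_2,S_3)$ of $M'$ under $c$ with $c(S_1)\le c(S_2)\le c(S_3)$, on which all agents agree, and no round-robin or EF1-to-EFX upgrade is needed. The remaining work is only to route $e_1,e_2,e_3$ to agents who find them small in a balanced way, and this genuinely requires a case split: the paper gives $S_1+e_2+e_3$, $S_2+e_1$, $S_3$ when $c(S_3)-c(S_1)>\epsilon$, the cyclic assignment $S_1+e_2$, $S_2+e_3$, $S_3+e_1$ when $c(S_3)-c(S_1)\le\epsilon$, and $S_1+e_2$, $S_2+e_1$, $S_3$ when $e_3$ is undefined. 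Your proposal contains neither the correct routing nor this case analysis, so as written it does not establish the lemma.
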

	
	In the following, we prove these two lemmas. 
	
	\begin{proofof}{Lemma~\ref{lemma:two-large-only}}
    	W.l.o.g. suppose $|L_1|\geq 2$ and let $e_1,e_2\in L_1$ be any two items that are large only to agent $1$.
    	Note that at least one of $L_2$ and $L_3$ is non-empty, as otherwise we have $c_2 = c_3$.
    	Assume w.l.o.g. that $L_2\neq \emptyset$ and let item $e_3\in L_2$ be large only to agent 2 (see Table~\ref{table:case2}).
    	
    	\begin{table}[htb]
    	\begin{center}
        \begin{tabular}{c|c c c c}
                        & $e_1$         & $e_2$         &$e_3$          &$\cdots$ \\ \hline
        $c_1$           & $1$           & $1$           & $\epsilon$    &$\cdots$    \\ \hline
        $c_2$           & $\epsilon$    & $\epsilon$    & $1$           &$\cdots$    \\ \hline
        $c_3$           & $\epsilon$    & $\epsilon$    & $\epsilon$    &$\cdots$ 
        \end{tabular}
        \end{center}
        \caption{The cost functions of the three agents.}{\label{table:case2}}
        \end{table}
        
        We proceed differently depending on whether $e_3$ is the only small item {  of} agent $1$.
        
        If $M_1^- = \{e_3\}$, i.e., every $e\neq e_3$ is large to agent $1$, then we know that {  agents} 2 and 3 share the same cost function on items $M' = M\setminus\{e_3\}$.
        Because otherwise, we have an item $e'\in M'$ with $c_2(e')\neq c_3(e')$ and $c_1(e')=1$, which leads to $e'$ being small only to either agent 2 or agent 3, which is already handled in Lemma~\ref{lemma:exist-small-only}.
    	To compute an EFX allocation, we divide the items $M'$ into an EFX allocation under the cost function $c_2$, let agent 1 pick her favourite bundle, and then assigns item $e_3$ to agent $1$.
    	The allocation is EFX to agent $2$ and $3$ because they share the same cost function on $M'$; it is also EFX to agent $1$ because $c_1(e_3) = \epsilon$ and $c_1(X_1 - e_3)\leq \min\{ c_1(X_2),c_1(X_3) \}$.
    	
    	If $M_1^-\neq \{e_3\}$, then let $e_4$ be any arbitrary item in $M_1^- \setminus \{e_3\}$.
    	Let $M' = M\setminus\{e_1,e_2,e_3,e_4\}$.
    	By picking the ordering of agents carefully and running the round-robin algorithm, we compute a round-robin allocation $(S_1,S_2,S_3)$ on items $M'$ satisfying that $r_3 < r_2 < r_1$.
    	From Lemma~\ref{lemma:i-not-envy-j}, we have 
    	\begin{equation*}
    	    c_3(S_3)\leq \min\{c_3(S_1), c_3(S_2)\} \quad \text{and}\quad c_2(S_2)\leq c_2(S_1).
    	\end{equation*}
    	
    	We set the final allocation as follows:
    	\begin{equation*}
    	  X_3 \gets S_3+e_2+e_3,\quad X_2 \gets S_2+e_1,\quad X_1 \gets S_1+e_4.
    	\end{equation*}
    	
    	Note that each item in $e_1,e_2,e_3,e_4$ is small to its receiver. The allocation is EFX to
    	\begin{itemize}
    	    \item agent $1$ because $c_1(X_1 - e_4) = c_1(S_1) \leq \min\{c_1(S_2 + e_2 + e_3),c_1(S_3 + e_1)\}$;
    	    \item agent $2$ because $c_2(X_2 - e_1) = c_2(S_2) \leq \min \{c_2(S_1 + e_4),c_2(S_3 + e_3)\}$;
    	    \item agent $3$ because $c_3(X_3 - e_2) = c_3(S_3 + e_3) \leq \min\{c_3(S_1 + e_4), c_3(S_2 + e_1)\}$,
    	\end{itemize}
    	where the inequalities follow from Lemma~\ref{lemma:eps-and-1-makes-EFX}.
	\end{proofof}
	
	\begin{proofof}{Lemma~\ref{lemma:one-large-only-each}}
	    Since $|L_i|\leq 1$ for all $i\in N$, let items $e_1,e_2,e_3$ be large only to agent $1$, $2$, $3$ respectively.
	    Note that some of $e_i$ might be undefined, i.e., when $L_i = \emptyset$.
	    However, since no two agents have the same cost function, at most one of $e_1,e_2,e_3$ is undefined.
	    In the following, we assume w.l.o.g. that $e_1$ and $e_2$ are well defined but not necessarily $e_3$.
        Let $M' = M\setminus\{e_1,e_2,e_3\}$.
        Note that all items in $M'$ are consistent items and thus $c_1,c_2,c_3$ have the same cost values on $M'$.
        We use function $c : M' \rightarrow  \{1,\epsilon\}$ to represent this function.
        We first compute an EFX allocation $\{S_1,S_2,S_3\}$ on $M'$, such that $c(S_1)\leq c(S_2)\leq c(S_3)$.
        Then we try to allocate the remaining items $e_1,e_2,e_3$ to the partial allocation, making sure that the resulting one is EFX.
        
        If $e_3$ is well defined and $c(S_3)-c(S_1)>\epsilon$, we let the final allocation be
        \begin{equation*}
            X_1 \gets S_1+e_2+e_3,\quad
            X_2 \gets S_2+e_1,\quad
            X_3 \gets S_3.
        \end{equation*}
        
        Note that each item in $e_1,e_2,e_3$ is small to its receiver.
        The allocation is EFX because of the following:
        \begin{align*}
            &c_1(X_1-e_2) = c_1(S_1+e_3) \leq \min\{c_1(S_2+e_1),c_1(S_3)\}, \\
            &c_2(X_2-e_1) = c_2(S_2) \leq \min\{c_2(S_1+e_2+e_3,c_2(S_3)\}, \\
            &c_3(X_3-e) = c_3(S_3-e) \leq \min\{c_3(S_1),c_3(S_2)\}, \forall e\in X_3.
        \end{align*}
        
        If $e_3$ is well defined and $c(S_3)-c(S_1)\leq\epsilon$, we let the final allocation be
        \begin{equation*}
            X_1 \gets S_1+e_2,\quad
            X_2 \gets S_2+e_3,\quad
            X_3 \gets S_3+e_1.
        \end{equation*}
        
        Again, each item in $e_1,e_2,e_3$ is small to its receiver.
        The allocation is EFX because of the following:
        \begin{align*}
            &c_1(X_1-e_2)=c_1(S_1)\leq \min\{c_1(S_2+e_3),c_1(S_3+e_1)\}, \\
            &c_2(X_2-e_3)=c_2(S_2)\leq \min\{c_2(S_1+e_2,c_2(S_3)\},\\
            &c_3(X_3-e_1) = c_3(S_3)\leq \min\{c_3(S_1+e_2),c_3(S_2+e_3)\}
        \end{align*}
        
        Finally, if $e_3$ is undefined, we let the final allocation be
        \begin{equation*}
            X_1 \gets S_1+e_2,\quad
            X_2 \gets S_2+e_1,\quad
            X_3 \gets S_3.
        \end{equation*}
        
        As before, each item in $e_1,e_2$ is small to its receiver and the allocation is EFX because:
        \begin{align*}
            &c_1(X_1-e_2) = c_1(S_1)\leq c_1(S_2)\leq c_1(S_3), \\
            &c_2(X_2-e_1) = c_2(S_2)\leq \min\{c_2(S_1+e_2),c_2(S_3)\}, \\
            &c_3(X_3-e) = c_3(S_3-e)\leq \min\{c_3(S_1),c_3(S_2)\}, \forall e\in X_3.
        \end{align*}
        
        Hence for all cases we can compute an EFX allocation $(X_1,X_2,X_3)$.
	\end{proofof}
	
    For completeness, we summarize our complete algorithm in Algorithm~\ref{alg:bivaluedfor3agents}.
    
    \begin{algorithm}[htbp]
		\caption{EFX Allocation for 3 Agents}\label{alg:bivaluedfor3agents} 
		Initialize: $X_1 \gets \emptyset, X_2\gets \emptyset, X_3\gets \emptyset$\;
		\uIf{there exist two agents $i\neq j$ sharing the same cost function}{
		    compute an EFX allocation on item $M$ under $c_i$: $\{S_1,S_2,S_3\}$\;
		    {  for $l$ s.t. $l\neq i$ and $l\neq j$, set $X_l \gets \argmin\{(c_l(S_1), c_l(S_2), c_l(S_3))\}$ \;}
		    arbitrarily allocate the remaining two bundles to agents $i, j$\;}
		\Else(all agents have different cost functions;){
		    \uIf{there exists an item $e_1\in M$ small only to some agent $i$}{
		    there exists $e_2\neq e_1$ s.t. $c_j(e_2) = \epsilon, c_l(e_2) = 1$: let $M^-\gets M\setminus\{e_1,e_2\}$\;
		    {  compute a round-robin allocation on item $M^-$ s.t. $r_i < r_j < r_l:\{S_i,S_j,S_l\}$\;
		    Set $X_i\gets S_i+e_1, X_j\gets S_j+e_2, X_l\gets S_l$\;}}
		    \uElseIf{there exists an agent $i$ that has two items $e_1,e_2$ large only to her}
            {
		    there exists an item $e_3$ that is large only to some agent $j$\;
		    \eIf{there exists an item $e_4\neq e_3$ small to $i$}
            {
		        $M^-\gets M\setminus\{e_1,e_2,e_3,e_4\}$\;
		        {  compute a round-robin allocation on item $M^-$ s.t. $r_l<r_j<r_i:\{S_i,S_j,S_l\}$\;
		        set $X_i\gets S_i+e_4, X_j\gets S_j+e_2, X_l\gets S_l+e_2+e_3$\;}
            }
		    (every $e\neq e_3$ is large to agent $i$;)
            {
            $M^-\gets M\setminus\{e_3\}$\;
		        compute an EFX allocation $S_1,S_2,S_3$ on items $M^-$ under cost function $c_j$\;
		        set $X_i \gets \argmin\{(c_i(S_1), c_i(S_2), c_i(S_3))\}$\;
		        arbitrarily allocate the remaining two bundles to agent $j, l$\;
            }
            }
            \uElseIf{for every agent $i\in N$, there has exactly one item large only to her}
            {
            let $e_1,e_2,e_3$ be the items large only to agent $1,2,3$, $M^-=M\setminus\{e_1,e_2,e_3\}$\;
	        divide $M^-$ into an EFX allocation $\{S_1,S_2,S_3\}$ with $c(S_1)\leq c(S_2) \leq c(S_3)$\;
	        \eIf{$c(S_3)-c(S_1)>\epsilon$}
                {set $X_1\gets S_1+e_2+e_3, X_2\gets S_2+e_1, X_3\gets S_3$\;}
                {set $X_1\gets S_1+e_2, X_2\gets S_2+e_3, X_3\gets S_3+e_1$\;}
            }
            \Else(there are only two items $e_1,e_2$ that is large only to agent $i,j$ respectively;)
            {
            $M^-\gets M\setminus\{e_1,e_2\}$\;
            divide $M^-$ into an EFX allocation $\{S_1,S_2,S_3\}$ with $c(S_1)\leq c(S_2) \leq c(S_3)$\;
            set $X_i\gets S_1+e_2, X_j\gets S_2+e_1,X_l\gets S_3$\;
            }
        }
 	\KwOut{$\bX=(X_1,X_2,X_3)$}
	\end{algorithm}

    \section{Bi-valued Instances with Four or More Agents}\label{sec:bi-fourormore}
	
	In this section, we consider bi-valued instances with $n\geq4$ agents.
	We provide a polynomial-time algorithm that computes an EFX allocation with at most $n-1$ items unallocated; and a polynomial-time algorithm that computes a complete allocation that is $(n-1)$-EFX.

    \begin{theorem}\label{thm:partial-n-bivalued}
	    For the instances of $n\geq 4$ agents with bi-valued cost functions, there exists an algorithm that computes an EFX partial allocation with at most $n-1$ unallocated items.
	\end{theorem}
	
	\begin{theorem}\label{thm:general-n-bivalued}
	    For the instances of $n\geq 4$ agents with bi-valued cost functions, there exists an algorithm that computes an $(n-1)$-EFX allocation.
	\end{theorem}
	
	Given an instance, we divide the set of items into $M^-$ and $M^+$, where $M^+$ includes all consistently large items:
	\begin{equation*}
	    M^- = \{e\in M: \exists i\in N, c_i(e)=\epsilon\}, \quad  M^+ = \{e\in M: \forall i\in N, c_i(e)=1\}.
	\end{equation*}
	
	Note that we have $M^- = \bigcup_{i\in N} M_i^-$, where $M_i^-$ contains the items that are small to agent $i$.
	We first give an algorithm that computes an allocation $\bX^0$ of items $M^-$, for which our results are based on.

	\subsection{Obtaining Partial Allocation and Agent Groups}
	
	We first give an algorithm that computes an allocation $\bX^0$ of items $M^-$ with certain desired properties (see Lemma~\ref{lemma:leximin}).
    When $M^+\neq\emptyset$, $\bX^0$ is a partial allocation.
	Based on the allocation $\bX^0$, we partition agents into disjoint groups $\bA=(A_1,\cdots,A_b)$.
	For the allocation $\bX^0$ and groups $\bA=(A_1,\cdots,A_b)$, we have the following properties.
    We say that agent $i$ is in a \emph{higher} (resp. \emph{lower}) group than $j$ if $i\in A_r$ and $j\in A_{r'}$ for some $r < r'$ (resp. $r > r'$).
    
	\begin{lemma} \label{lemma:leximin}
	    We can compute allocation $\bX^0$ of items $M^-$ to agents $N$, and partition the agents into groups $\bA=(A_1,\cdots,A_b)$ with the following properties:
	    \begin{enumerate}
            \item For all $r\in \{1,2,\ldots,b\}$ and $i,j\in A_r$, we have $\left| |X_i^0|-|X_j^0| \right|\leq 1$. 
            In other words, agents from the same group receive almost the same number of items in $\bX^0$;
            \item For all $i\in A_r$, $j\in A_{r'}$ such that $r < r'$, we have $c_j(X_i^0) = |X_i^0|$.
            In other words, for an agent $j$ in a lower group, all items received by an agent $i$ in a higher group are large;
            \item For all $i\in A_r$, $j\in A_{r'}$ such that $r < r'$, we have $|X_i^0| \geq |X_j^0|$.
            In other words, agent $i$ receives at least as many items as every agent $j$ from a lower group.
            Additionally, we have $\{i\in N:|X_i^0| = \min_{j\in N} |X_j^0| \}\subseteq A_b$.
	    \end{enumerate}
	\end{lemma}
	
    In the following, we present the algorithms to compute allocation $\bX^0$ and partition agents into groups.
    
    \paragraph{The Algorithm}
    Recall that every item $i\in M^-$ is small to some agents.
    We first compute an allocation $\bX = (X_1,\ldots,X_n)$ in which every agent only receives items that are small to her.
    Based on the allocation we construct a directed graph $G$ as follows.
    For every $i,j\in N$, we {  add} an edge from $j$ to $i$ if there exists an item $e\in X_i$ such that $c_j(e) = \epsilon$.
    As long as there exists a path from $j$ to $i$ with $|X_i|-|X_j|\geq 2$, we implement an item-transfer along the path, which increases $|X_j|$ by one and decreases $|X_i|$ by one.
    When there is no path of such type in $G$, we return the allocation as $\bX^0$.
    Note that for all $e\in X^0_i$, we have $c_i(e)=\epsilon$.
    We summarize the construction of $\bX^0$ in Algorithm~\ref{alg:leximin}.
    
	\begin{algorithm}[htbp]
		\caption{Partial Allocation for $M^-$}\label{alg:leximin}
		\KwIn{Fair allocation instance $(N,M^-,C)$ with $c_i(e) \in \{\epsilon,1\}$}
		Initialize: $X_i \gets \emptyset$ for all $i\in N$, $P\gets M^-$\;
		\While{$P\neq \emptyset$}{
		    pick any item $e\in P$ and agent $i\in N$ s.t. $c_i(e)=\epsilon$\;
            update $X_i\gets X_i+e$, $P \gets P-e$;
        }
	    Construct a directed graph $G=(N,E)$: there exists an edge from $j$ to $i$ if $\exists e\in X_i, c_j(e)=\epsilon$\;
	    \While{there exists a path $i_k\rightarrow \cdots \rightarrow i_0$ such that $|X_{i_0}|-|X_{i_k}|\geq 2$}{
	        \For{$l=1,2,\cdots,k$}{
	            {  let $e$ be any item in $X_{i_{l-1}}$ such that $c_{i_l}(e) = \epsilon$}\;
	            {  $X_{i_{l-1}}\gets X_{i_{l-1}} - e$,
	            $X_{i_l}\gets X_{i_l} + e$};}}
 		\KwOut{$\bX^0=(X_1^0,\cdots,X_n^0)$}
	\end{algorithm}

    By re-indexing the agents, we can assume w.l.o.g. that in allocation $\bX^0$ we have
    \begin{equation*}
        |X^0_1| \geq |X^0_2| \geq \cdots \geq |X^0_n|.
    \end{equation*}
    We refer to this property as the \emph{monotonicity} (in size) of the allocation $\bX^0$.
    Based on the construction of item-transfer in line $(6-9)$ of Algorithm~\ref{alg:leximin}, we have the following observations.
    \begin{observation}\label{obser:partial-allocation}
        For the output allocation $\bX^0$, we have:
        \begin{enumerate}
        \item For all $i, j\in N$ s.t. $|X_i^0|-|X_j^0|\geq 2$, for all item $e\in X_i^0$, we have $c_j(e)=1$, i.e., $c_j(X_i^0) = |X_i^0|$;
        \item For all $i,j,l\in N$ s.t. $|X_i^0|-|X_j^0| = |X_j^0|-|X_l^0|=1$, we have either $c_j(X_i^0) = |X_i^0|$ or $c_l(X_j^0) = |X_j^0|$;
        \end{enumerate}
    \end{observation}
	\begin{proof}
	    For agent $i,j\in N$ with $|X_i^0|-|X_j^0|\geq 2$, if there exists an item $e\in X_i$ s.t. $c_j(e)=\epsilon$, then we have a edge from $j$ to $i$ and an item-transfer can be implemented, which is a contradiction.
	    Similarly, if there exist agents $i,j,l\in N$ s.t. $|X_i|-|X_j| = |X_j|-|X_l|=1$ and items $e_1\in S_i, e_2\in S_j$ s.t. $c_j(e_1)=c_l(e_2)=\epsilon$, then there is a path from $l$ to $i$, which would have been resolved in the algorithm.
	\end{proof}
	        
    Next we introduce the partitioning of agent groups.
    We first partition agents into $N_1, \cdots, N_b$ by $|X^0_i|$, such that two agents are in the same group if and only if they have bundles of the same size.
    Specifically, we have $N_1 = \{ i\in N: |X^0_i| = \max_{j\in N} |X^0_j| \}$ contains the agents that receive most items, $N_2 = \{ i\in N: |X^0_i| = \max_{j\in N\setminus N_1} |X^0_j| \}$ contains the agents with second most items, etc.
    Thus 
    \begin{itemize}
        \item for all $r\in \{1,2,\ldots,b\}$ and $i,j \in N_r$, we have $|X^0_i| = |X^0_j|$;
        \item for all $1\leq r < r'\leq b$ and $i \in N_r, j \in N_{r'}$, we have $|X^0_i| > |X^0_j|$.
    \end{itemize}
    
    Observed that if there exists a path from $j\in N_{r+1}$ to $i\in N_r$, then $i$ cannot reach any agent $l\in \cup_{t=1}^{r-1} N_t$, because we have $|X^0_l| \geq |X^0_j| +2$.
    Therefore, all items in $X^0_l$, where $l\in \cup_{t=1}^{r-1} N_t$, are large to agent $i$.
    {  However, we cannot guarantee that all items in $X^0_i$ are large to agent $j$.
    Therefore, to produce the partition of agents $\bA$, we further classify agents in each group $N_{r}$ into two types.
    We divide $N_r$ into $F_r$ and $N_r\setminus F_r$, where each agent $i\in N_r$ is in $F_r$ if it can be reached by a path from some agents in $N_{r+1}$ (see Algorithm~\ref{alg:agent-groups} for a constructive definition of $F_r$).
    For completeness, we let $F_0 = F_b = \emptyset$.}
    Then we define
    \begin{equation*}
        A_r = (N_r\setminus F_r) \cup F_{r-1}, \quad \text{for all } r\in \{1,2,\ldots,b\},
    \end{equation*}
    where $F_0 = F_b = \emptyset$.
    We summarize the construction of $\bA$ in Algorithm~\ref{alg:agent-groups}.

	\begin{algorithm}[htbp]
		\caption{Construct Agent Groups $\bA$}\label{alg:agent-groups}
		\KwIn{Allocation $\bX^0$}
	   initialize: $N' \gets N$, $b\gets 0$\;
        \While{$N' \neq \emptyset$}
        {  
            $b \gets b+1$, $N_b \gets \{ i\in N : |X^0_i| = \max_{j\in N'} |X^0_j| \}$ and $N' \gets N' \setminus N_b$ \;
        }
		for all $r\in \{1,2,\ldots,b-1\}$, let $F_r\gets\{i\in N_r:\exists j\in N_{r+1} \text{ and } e\in X_i, c_j(e) = \epsilon\}$; let $F_0 = F_b = \emptyset$\;
		\While{there exists $i\in N_r\setminus F_r$ such that $\exists j\in F_r\text{ and } e\in X_i, c_j(e) = \epsilon$}{
		    add $i$ to $F_r$;}
		for all $r\in \{1,2,\ldots,b-1\}$, let $A_r\gets (N_r\setminus F_r) \cup F_{r-1}$.   \\
        \KwOut{$\bA=(A_1,\cdots,A_b)$}
	\end{algorithm}

    We present an illustrative example in Figure~\ref{fig:leximin} on the partitioning of agents into groups $\bA$.
    
    \begin{figure}[htb]
    \begin{center}
    \begin{tikzpicture}
    \begin{axis}
        [
        ybar,
        ymax=11,
        enlargelimits=0.1,
        bar width=12pt,
        height=6cm,width=12cm,
        ylabel={Bundle Size},
        symbolic x coords={1, 2, 3, 4, 5, 6, 7, 8, 9, 10, 11, 12}, 
        xtick=data,
        nodes near coords,
        ]
        \addplot+[color = black, fill = gray!70, name nodes near coords=agent]coordinates {(1,10) (2,10) (3,9) (4,9) (5,9) (6,9) (7,8) (8,8) (9,7) (10,3) (11,3) (12,2)};        
    \end{axis}
    \draw [->,black] (agent-2.north) to [out=120,in=15] (agent-1.north);
    \draw [->,black] (agent-5.north) to [out=135,in=45] (agent-4.north);
    \draw [->,black] (agent-6.north) to  [out=120,in=15] (agent-5.north);
    \draw [->,black] (agent-8.north) to [out=120,in=15] (agent-7.north);
    \draw [->,black] (agent-11.north) to [out=120,in=15] (agent-10.north);
    \coordinate (a) at (0.65,-0.3);
    \coordinate (b) at (1.1,-0.3);
    \draw[decorate,decoration={brace,raise=10pt,amplitude=0.15cm},black]
    (b.south) -- (a.south);
    \node[below, black] at (0.9,-0.8) {$A_1$};
    \coordinate (a) at (1.5,-0.3);
    \coordinate (b) at (3.4,-0.3);
    \draw[decorate,decoration={brace,raise=10pt,amplitude=0.15cm},black] (b.south) -- (a.south);
    \node[below, black] at (2.5,-0.8) {$A_2$};
    \coordinate (a) at (3.9,-0.3);
    \coordinate (b) at (5.7,-0.3);
    \draw[decorate,decoration={brace,raise=10pt,amplitude=0.15cm},black] (b.south) -- (a.south);
    \node[below, black] at (4.85,-0.8) {$A_3$};
    \coordinate (a) at (6.25,-0.3);
    \coordinate (b) at (7.3,-0.3);
    \draw[decorate,decoration={brace,raise=10pt,amplitude=0.15cm},black] (b.south) -- (a.south);
    \node[below, black] at (6.8,-0.8) {$A_4$};
    \coordinate (a) at (7.75,-0.3);
    \coordinate (b) at (8.25,-0.3);
    \draw[decorate,decoration={brace,raise=10pt,amplitude=0.15cm},black]
    (b.south) -- (a.south);
    \node[below, black] at (8.05,-0.8) {$A_5$};
    \coordinate (a) at (8.6,-0.3);
    \coordinate (b) at (9.75,-0.3);
    \draw[decorate,decoration={brace,raise=10pt,amplitude=0.15cm},black] (b.south) -- (a.south);
    \node[below, black] at (9.2,-0.8) {$A_6$};
    \end{tikzpicture}        
    \end{center}
    \caption{An illustration of how to partition the agent groups based on the allocation $\bX^0$.
    The arrows in the figure represent that there exists an edge from the tail agent to the head agent.
    As shown in the figure, agent $1$ is in $N_1\setminus F_1 \subseteq A_1$ since there is no edge to her.
    There exists an edge from agent $3$ to agent $2$, which implies that agent $2$ is in $F_1 \subseteq A_2$.
    Note that both agents $3$ and $4$ are in $N_2\setminus F_2 \subseteq A_2$ since they cannot be reached by any agent in $N_3$.
    However, agents $5$ and $6$ are in $F_2\subseteq A_3$ since they can be reached by agent $7$.
    Hence we have $A_3 = \{5,6,7\}$.
    Similarly we have $A_4 = \{8,9\}$, $A_5 = \{10\}$, and $A_6 = \{11,12\}$.}
    \label{fig:leximin}
    \end{figure}
    
	    This allows us to show the properties of groups $\bA=(A_1,\cdots,A_b)$.
	    
        \begin{proofof}{Lemma~\ref{lemma:leximin}}
        We first show property $1$.
        Recall that group $A_r=F_{r-1}\cup (N_r\setminus F_r)$.
        Hence we have $\left| |X_i^0|-|X_j^0| \right|\leq 1$ for all $i,j\in A_r$.
	    Property $2$ follows from Observation~\ref{obser:partial-allocation} and the construction of $A_r$ as follows.
     
        Consider any $i\in A_r, j\in A_{r'}$, where $r < r'$.
        \begin{itemize}
            \item If $i\in F_{r-1}, j\in F_{r'-1}$: by definition agent $j\in F_{r'-1}$ has a path from some agents in $N_{r'}$.
            By Observation~\ref{obser:partial-allocation}, for all $e\in X_i$, we have $c_j(e)=1$.

            \item If $i\in F_{r-1}, j\in N_{r'}\setminus F_{r'}$:
            since $i\in N_{r-1}$ and $r' \geq (r-1) + 2$, by Observation~\ref{obser:partial-allocation}, for all $e\in X_j$, we have $c_i(e)=1$.
            
            \item If $i\in N_r\setminus F_r, j\in F_{r'-1}$: 
            if $r' - 1 = r$, then by definition for all $e\in X_i$, $c_j(e)=1$, otherwise agent $i$ should be add into $F_{r}$.
            Otherwise $r' \geq r + 2$ and since agent $j\in F_{r'-1}$ has a path from some agents in $N_{r'}$, by Observation~\ref{obser:partial-allocation}, for all $e\in X_i$, we have $c_j(e)=1$.
        
            \item If $i\in N_r\setminus F_r, j\in N_{r'}\setminus F_{r'}$:
            if $r' = r + 1$, by definition agent $i$ can not be reached by a path from any agent in $N_{r+1} = N_{r'}$.
            Hence for all $e\in X_i$, we have $c_j(e)=1$.
            Otherwise $r' - r \geq 2$. By Observation~\ref{obser:partial-allocation}, for all $e\in X_j$, we have $c_i(e)=1$.
        \end{itemize}

        Finally, property $3$ follows from the construction of $A_b$.
        Since $N_b$ contains agents $i$ with minimum $|X_i|$, we have $F_b=\emptyset$, $A_b=F_{b-1}\cup (N_b\setminus F_b) = F_{b+1}\cup N_b$.
        Hence we have $N_b\subseteq A_b$.
	\end{proofof}

    \subsection{EFX Partial Allocation with at most \texorpdfstring{$n-1$}{} Unallocated Items}
    
    In this section, we present an algorithm that computes an EFX allocation with at most $n-1$ unallocated items in polynomial time.
%
    Our algorithm begins by calling Algorithm~\ref{alg:leximin} and~\ref{alg:agent-groups} to return a partial allocation $\bX^0$.
    Recall that at this moment the unallocated items are $P = M^+$, i.e., the consistently large items.
    Our goal is to allocate items in $P$, and also reallocate some items in allocation $\bX^0$, so that all agents receive roughly the same number of items, until $|P| < n$.
    By Lemma~\ref{lemma:leximin}, we can guarantee EFX among agents within the same group as the number of items they receive differ by at most one, and all agents receive only small items.  
    To maintain the EFX-ness among agents within the same group, during the reallocation we ensure that 
    \begin{itemize}
        \item[(1)] all new items allocated to an agent are large to her and her groupmates (agents within the same group);
        \item[(2)] all agents from the same group receive the same number of new items.
    \end{itemize}
    
    To further ensure EFX-ness between agents from different groups, we require that in the final allocation
    \begin{itemize}
        \item[(3)] for all agents $i,j\in N$, we have $\left| |X_i|-|X_j| \right| \leq 1$.
    \end{itemize}
    
	\paragraph{The Algorithm}
    As introduced, the goal of our algorithm is to decide the allocation of $P$ and reallocation of items in $\bX^0$ to satisfy the three constraints listed above.
    For all $r\in \{ 1,2,\ldots,b \}$, let $k_r = \min_{i\in A_r}\{ |X^0_i| \}$.
    By Lemma~\ref{lemma:leximin}, for all $i\in A_r$ we have $|X^0_i| \in \{ k_r + 1, k_r \}$.
    We aim to compute an integer $\delta_r$ (which can be negative) for each group $A_r$, such that $k_r + \delta_r = k$ for all $r\in \{1,2,\ldots, b\}$, where $k$ is an appropriately chosen parameter.
    This ensures that constraint~(3) is satisfied.
    Then for each $r$, if $\delta_r$ is negative, we remove $|\delta_r|$ items from $X^0_i$, for each $i\in A_r$; if $\delta_r$ is positive, we include $\delta_r$ items to $X^0_i$.
    The above operations ensure that constraints (1) and (2) are satisfied.
    Finally, by setting
    \begin{equation}
        k = \left\lfloor \frac{\sum_{r=1}^b (k_r\cdot |A_r|)+|M^+|}{n} \right\rfloor, \label{eq:k-definition}
    \end{equation}
    we can show that at most $n-1$ items are left unallocated, and the allocation is EFX.
    
    The steps of the full algorithm are summarized in Algorithm~\ref{alg:partial-ngeq4-bivalued}.

    \begin{algorithm}[htbp]
		\caption{Partial EFX Allocation for $n\geq 4$ Agents}\label{alg:partial-ngeq4-bivalued} 
		Initialize: $X_i \gets \emptyset$ for all $i\in N$, $P\gets M^+$ \;
		compute partial allocation $\bX^0$ on items $M^-$ and agent groups $\bA$, using Algorithm~\ref{alg:leximin} and~\ref{alg:agent-groups}\;
        for all $r\in \{1,2,\ldots,b\}$, let $k_r = \min_{i\in A_r}\{ |X^0_i| \}$ and define $k = \left\lfloor \frac{\sum_{r=1}^b (k_r\cdot |A_r|)+|M^+|}{n} \right\rfloor$ \;
		\For{$r = 1,2,\ldots, b$}
		{
            define $\delta_r \gets k - k_r$ \;
            \If{$\delta_r < 0$}
            {
                \For{each $i\in A_r$}
                {
                    pick arbitrary $|\delta_r|$ items $\Delta_i$ from $X^0_i$, and update $X_i \gets X^0_i \setminus \Delta_i$, $P \gets P \cup \Delta_i$ \; 
                }
            }
            \Else
            {
                \For{each $i\in A_r$}
                {
                    pick arbitrary $\delta_r$ items $\Delta_i$ from $P$, and update $X_i \gets X^0_i \cup \Delta_i$, $P \gets P \setminus \Delta_i$ \; 
                }
            }
		}
        \KwOut{$\bX=(X_1,\cdots,X_n)$}
	\end{algorithm}    
    
    \begin{lemma}
        {  Algorithm~\ref{alg:partial-ngeq4-bivalued} computes an EFX allocation with at most $n-1$ unallocated items in $O(n m^2)$ time.}
    \end{lemma}
    \begin{proof}
        By definition we have $k_1 \geq k_2 \geq \cdots \geq k_b$.
        Therefore in Algorithm~\ref{alg:partial-ngeq4-bivalued},
        \begin{itemize}
            \item there exists $1\leq r_1 < r_2 \leq b$ such that agents in $\cup_{r\leq r_1} A_r$ lose items and agents in $\cup_{r\geq r_2} A_r$ receive items.
            By Lemma~\ref{lemma:leximin}, items from agents in higher groups are large to agents in lower groups, constraint~(1) is satisfied.
            In fact, we have the stronger property that every new item allocated to an agent is large to all agents in $\cup_{r\geq r_2} A_r$;
            
            \item for all $r\in \{1,2,\ldots,b\}$, each agent $i\in A_r$ receives or loses $|\delta_i|$ items. 
            Hence constraint~(2) is satisfied;
            
            \item in the final allocation, for all $r\in \{1,2,\ldots,b\}$, each $i\in A_r$ has $|X_i| \in \{k+1,k\}$.
            Hence constraint~(3) is satisfied;
        \end{itemize}
        
        In the following, we use the above properties to show that the allocation is EFX.
        Trivially, agents within the same group are EFX towards each other, as their costs differ by at most $\epsilon$.
        Next, consider any two agents $i\in A_r$ and $j\in A_{r'}$ from different groups, where $r < r'$.
        \begin{itemize}
            \item Since $i$ is from a higher group, we have $c_j(X_i) = |X_i|\geq k \geq c_j(X_j - e)$ for all $e\in X_j$.
            Hence agent $j$ is EFX towards agent $i$.
            \item {  Since $k_r \geq k_{r'}$, we have $\delta_r \leq \delta_{r'}$. 
            If agent $i$ receives new items, then the number of new items agent $j$ receives is not smaller.
            Moreover, these new items are large to both $i$ and $j$.
            If agent $i$ does not receive new items, then all items in $X_i$ are small to agent $i$.
            Since by removing any item $e\in X_i$, $|X_i - e|\leq |X_j|$, for both cases we have $c_i(X_i - e)\leq c_i(X_j)$.
            Hence agent $i$ is EFX towards agent $j$;}
        \end{itemize}
        
        Next, we show that there are at most $n-1$ unallocated items, i.e., $|P| \leq n-1$ at the end of the algorithm.
        Since initially $|P| = |M^+|$ and each agent $i\in A_r$ decreases $|P|$ by $\delta_r$, it suffices to show that $|M^+| - \sum_{r=1}^{b} (\delta_r\cdot |A_r|) \leq n-1$, which is true because
        \begin{align*}
            |M^+| - \sum_{r=1}^{b} (\delta_r\cdot |A_r|) & = |M^+| - \sum_{r=1}^{b} \left((k-k_r)\cdot |A_r|\right)
             = |M^+| + \sum_{r=1}^{b} (k_r\cdot |A_r|) - k\cdot n \\
            & = \left( |M^+| + \sum_{r=1}^{b} (k_r\cdot |A_r|) \right) - \left\lfloor \frac{\sum_{r=1}^b (k_r\cdot |A_r|)+|M^+|}{n} \right\rfloor \cdot n \leq n-1.
        \end{align*}
        
        {  Finally, we prove that Algorithm~\ref{alg:partial-ngeq4-bivalued} runs in polynomial time.
        We first show that both Algorithms~\ref{alg:leximin} and~\ref{alg:agent-groups} run in polynomial time.
        Observe that 1) the computation of the initial allocation in the while loop in lines $2-4$ in Algorithm 6 takes $O(nm)$ time; 2) given the initial allocation, computing the directed graph $G$ takes $O(nm)$ time since each agent has to recognize at most $m$ edges (for convenience we maintain a multi-graph).
        It suffices to argue that the while loops in lines $6-8$ finish in polynomial time.
        In the following, we show that by carefully choosing the paths in the while loop in lines $6-8$, the while loops break after $O(m)$ rounds.
        When there are multiple satisfied paths, we choose the one that maximizes $|X_{i_0}|$, which can be identified in $O(nm)$ time by running depth-first searches (there are $n$ nodes and at most $nm$ edges in the directed graph).
        Following such path selection principle, we can guarantee that once an agent is selected as the end of a path, she will not be selected as the start of any future paths.
        Therefore throughout the whole algorithm, the items reallocated from the end agents are all different, which implies that there are $O(m)$ rounds.
        Therefore Algorithm~\ref{alg:leximin} runs in $O(n m^2)$ time.

        In Algorithm~\ref{alg:agent-groups}, the computation of groups $N_r$ can be done in $O(n)$ time.
        The construction of groups $A$ replies on the partition of each $N_r$ into $F_r$ and $N_r \setminus F_r$, which can be done by recognizing the connected component of each agent in $N_r$ in $O(m)$ time.
        Note that the number of connected components is $O(n)$. Thus Algorithm~\ref{alg:agent-groups} runs in $O(nm)$ time.
        
        Finally, we consider the remaining Algorithm~\ref{alg:partial-ngeq4-bivalued}.
        Note that the computation of $\{k_r, \delta_r\}_{r\leq b}$ and $k$ take $O(m)$ time.
        Each item incurs $O(1)$ time in the reallocation procedure since each item is reallocated at most once.
        In conclusion, Algorithm~\ref{alg:partial-ngeq4-bivalued} runs in $O(n m^2)$ time.}
    \end{proof}
    
    We present an illustrative example in Figure~\ref{fig:partial-allocation} that shows how to compute an EFX partial allocation based on the partial allocation $\bX^0$ and $P$ with agents groups $\bA$ in Figure~\ref{fig:leximin}.
    \begin{figure}[htb]
    \begin{center}
    \begin{tikzpicture}
    \begin{axis}
        [
        ybar stacked,
        ymax=11,
        enlargelimits=0.1,
        bar width=12pt,
        height=6cm,width=12cm,
        ylabel={Bundle Size},
        symbolic x coords={1, 2, 3, 4, 5, 6, 7, 8, 9, 10, 11, 12}, 
        xtick=data,
        nodes near coords,
        ]
        \addplot+[color = black, fill = gray!70, name nodes near coords=agent]coordinates {(1,7) (2,8) (3,7) (4,7) (5,8) (6,8) (7,7) (8,8) (9,7) (10,3) (11,3) (12,2)};
        \addplot+[color = black, dashed, fill = white, name nodes near coords=agent]coordinates {(1,3) (2,2) (3,2) (4,2) (5,1) (6,1) (7,1) (8,0) (9,0) (10,0) (11,0) (12,0)};   
        \addplot+[color = black, fill = gray!10, postaction={
        pattern=north east lines}, name nodes near coords=agent]coordinates {(1,0) (2,0) (3,0) (4,0) (5,0) (6,0) (7,0) (8,0) (9,0) (10,4) (11,5) (12,5)};  
    \end{axis}
    \coordinate (a) at (0.65,-0.3);
    \coordinate (b) at (1.1,-0.3);
    \draw[decorate,decoration={brace,raise=10pt,amplitude=0.15cm},black] (b.south) -- (a.south);
    \node[below, black] at (0.9,-0.8) {$A_1$};
    \coordinate (a) at (1.5,-0.3);
    \coordinate (b) at (3.4,-0.3);
    \draw[decorate,decoration={brace,raise=10pt,amplitude=0.15cm},black] (b.south) -- (a.south);
    \node[below, black] at (2.5,-0.8) {$A_2$};
    \coordinate (a) at (3.9,-0.3);
    \coordinate (b) at (5.7,-0.3);
    \draw[decorate,decoration={brace,raise=10pt,amplitude=0.15cm},black] (b.south) -- (a.south);
    \node[below, black] at (4.85,-0.8) {$A_3$};
    \coordinate (a) at (6.25,-0.3);
    \coordinate (b) at (7.3,-0.3);
    \draw[decorate,decoration={brace,raise=10pt,amplitude=0.15cm},black] (b.south) -- (a.south);
    \node[below, black] at (6.8,-0.8) {$A_4$};
    \coordinate (a) at (7.75,-0.3);
    \coordinate (b) at (8.25,-0.3);
    \draw[decorate,decoration={brace,raise=10pt,amplitude=0.15cm},black] (b.south) -- (a.south);
    \node[below, black] at (8.05,-0.8) {$A_5$};
    \coordinate (a) at (8.6,-0.3);
    \coordinate (b) at (9.75,-0.3);
    \draw[decorate,decoration={brace,raise=10pt,amplitude=0.15cm},black] (b.south) -- (a.south);
    \node[below, black] at (9.2,-0.8) {$A_6$};
    \end{tikzpicture}        
    \end{center}
    \caption{An illustration on how to compute an EFX partial allocation based on $\bX^0$ with $|P| = 2$.
    As shown in the figure, $k_1 = 10, k_2=9, k_3=8, k_4=7, k_5=3, k_6=2$.
    Hence we have $k = 7$.
    Agents in $A_1, A_2, A_3$ hold that $\delta_1 = -3, \delta_2 = -2, \delta_3 = -1$, and the dashed areas represent the removed items.
    For agents $8, 9$ we have $\delta_4 = 0$, which means that they do not lose or receive any item during the algorithm.
    Finally for agents in $A_5, A_6$ we have $\delta_5 = 4, \delta_6 = 5$, the hatching areas represent the item agents $10, 11, 12$ receive.
    In the EFX partial allocation we have $|X_i| \in \{7, 8\}$ for all $i\in N$.}
    \label{fig:partial-allocation}
    \end{figure}

    \subsection{\texorpdfstring{$(n-1)$}{}-EFX Allocation}
    
    In this section, we show that there exists an algorithm that computes an $(n-1)$-EFX allocation in polynomial time.
    Recall from Lemma~\ref{lemma:m-n+1} that we can compute $(m-n)$-EFX allocations for every instance with $m$ items and $n$ agents, which gives an $(n-1)$-EFX allocation when $m \leq 2n-1$.
    In the following, we assume that $m \geq 2n$.
    Since we can compute an EFX partial allocation with $n-1$ unallocated items, the natural idea is to find a way to allocate the unallocated items so that the approximation ratio does not increase by too much.
    In fact, if $\delta_b > 0$, i.e., agents in the lowest group receive new items, then by allocating all unallocated items to the agent $i\in A_b$ with minimum $|X_i|$, it can be proved that the allocation is $(n-1)$-EFX.
    Unfortunately, as we will see in this section, the most subtle case is when $\delta_b = 0$. 
    For example, after collecting items from higher groups, we still have $|P| < |A_b|$, and thus to ensure that all agents from the same group receive the same number of new items, we cannot allocate any new item to any agent.
    To handle this issue, we propose the \textit{Small Item Reallocation} (SIR) algorithm (a formal description will be given later).

    \paragraph{The Algorithm}
    Our algorithm is based on the EFX partial allocation computation we have introduced in the previous section.
    Let $\bX^1 = \{X_1^1, \cdots, X_n^1\}$ be the EFX partial allocation with at most $n-1$ unallocated items.
    In the following, we allocate the unallocated items $P$ to agents in $A_b$, i.e., the lowest group.
    The detailed steps are summarized in Algorithm~\ref{alg:ngeq4-bivalued}.
    We first attempt to allocate the items in a round-robin manner to agents in $A_b$ in the order of $(n,n-1,\ldots)$. 
    We show that the resulting allocation is $(n-1)$-EFX if any of the following three conditions holds: (1) $\delta_b \geq 1$; (2) $|P| \geq |A_b|$; (3) every agent in $A_b$ who does not receive new items has cost at least $1/(n-2)$.
    If none of the above conditions holds, then (instead of using Round-Robin) we introduce a method called \textit{Small Item Reallocation} (SIR), which scans through agents in $A_b$ one by one.
    For each scanned agent $i\in A_b$, {  if it holds an item that costs $1$ to all unscanned agents, then we do not allocate any further item to this agent; otherwise, we reallocate all items in its bundle, making sure that each item is small to its receiver, and allocate the bundle an item from $P$.}
    The scanning stops when $|P| = 0$ or $|P|$ equals the number of unscanned agents, in the latter case we allocate one item in $P$ to each unscanned agent.

    \smallskip
    
    \begin{algorithm}[htbp]
		\caption{$(n-1)$-EFX for $n\geq 4$ Agents}\label{alg:ngeq4-bivalued} 
		initialize: $X_i \gets \emptyset$ for all $i\in N$\;
		compute partial allocation $\bX^1$ with unallocated items $P$, and agents groups $\textbf{A}$ using Algorithm~\ref{alg:partial-ngeq4-bivalued}\;
        let $p \gets |P|$ and $s \gets |A_b|$, i.e., we have $A_b = \{ n,n-1,\ldots,n-s+1 \}$\;
		\eIf{$\delta_b > 0$ or $s \leq p$ or $c_{n-p}(X^1_{n-p})\geq {1}/{(n-2)}$}
		{
            allocate items in $P$ to agents in $A_b$ in a round robin manner, in the order of $(n,n-1,\ldots)$.
        }
        {
            \For{$i = n, n-1, \cdots, n-s+1$}{
                \If{$\forall e\in X^1_i, \exists j < i, c_j(e)=\epsilon$}
                {
                    \While{$X_i\neq\emptyset$}
                    {
                    pick an item $e\in X_i$ and agent $j<i$ with $c_j(e) = \epsilon$ and update
                    $X_i\gets X_i-e, X_j\gets X_j+e$\;
                    }
                    pick an arbitrary item $e\in P$ and update $X_i\gets \{e\}, P\gets P-e$\;
                    \If{$P = \emptyset$}{
                    break the for-loop\;
                    }
                }
                \If{$|P| = s-(n-i+1)$}{
                    allocate one item in $P$ to each agent in $\{i-1,i-2,\ldots,n-s+1\}$ and break the for-loop.
                    }
            }
        }
        \KwOut{$\bX=(X_1,\cdots,X_n)$}
	\end{algorithm}
    
    Thus (to prove Theorem~\ref{thm:general-n-bivalued}) it suffices to show that all agents are $(n-1)$-EFX towards the other agents.
    We first claim that for any instance that $m \geq 2n$, we have $k \geq 1$.
    Recall that for any $i\in A_r$, we have $|X_i^0| \in \{k_r, k_r +1\}$.
    Hence from Equation~\eqref{eq:k-definition} we have
        \begin{equation*}
            k = \left\lfloor \frac{\sum_{r=1}^b (k_r\cdot |A_r|)+|M^+|}{n} \right\rfloor \geq \left\lfloor \frac{\sum_{i=1}^n (|X_i^0|)+|M^+| - n}{n} \right\rfloor \geq  \left\lfloor \frac{m - n}{n} \right\rfloor \geq  \left\lfloor \frac{2n-n}{n} \right\rfloor \geq 1.
        \end{equation*}

    Note that every item $e\in P$ is either a consistently large item, i.e. $e\in M^+$, or taken from some bundle $X_i$ such that $i \in A_r$ with $\delta_r < 0$.
    By definition of $k$ we have $\delta_b \geq 0$, which implies that all items in $P$ are either from $M^+$ or groups higher than $A_b$.
    From Lemma~\ref{lemma:leximin}, we conclude that all items in $P$ are large to all agents in $A_b$, i.e., $\forall e\in P, i\in A_b$, we have $c_i(e) = 1$.
%
    We first show some conditions under which Round-Robin (with ordering $(n, n-1, \cdots, n-s+1)$) computes an $(n-1)$-EFX allocation.
    By the following lemma, we know that if line 5 of Algorithm~\ref{alg:ngeq4-bivalued} is executed then the resulting allocation is $(n-1)$-EFX.
    
    \begin{lemma}\label{lemma:properties}
    If allocating items in $P$ to agents in $A_b$ in a round robin manner with ordering $n, n-1, \cdots, n-s+1$ does not give an $(n-1)$-EFX allocation, then we must have (1) $\delta_r = 0$; (2) $|A_b| > |P|$; and (3) $c_{n-p}(X_{n-p}^1) < \frac{1}{n-2}$.
    \end{lemma}
    \begin{proof}
        We show that if any of the listed conditions does not hold, then the Round-Robin algorithm returns an $(n-1)$-EFX allocation.
        Recall that $\bX^1$ is an EFX partial allocation.
        Let $\bX = (X_1,\ldots, X_n)$ be the allocation returned by Round-Robin.
        Since we have $X^1_j\subseteq X_j$ for all agent $j\in N$, any agent $i$ that does not receive any new item certainly is $(n-1)$-EFX towards any other agents.
        Hence it suffices to argue that any agent $i$ that receives new items is $(n-1)$-EFX towards any other agents.
        Fix any such agent $i$.
        We have $i\in A_b$ since we only allocate new items to the last group.
        {  Moreover, since each item in $X_i\setminus X^1_i$ is large to $i$, when arguing $(n-1)$-EFX, we can assume w.l.o.g. that the item to be removed from $X_i$ is actually in $X^1_i$, as the newly added items are large to agent $i$.}
        Let $c = \max_{e\in X^1_i} c_i(X^1_i - e)$, it suffices to prove that for all $j\neq i$, 
        \begin{equation}
            c + |X_i \setminus X^1_i| \leq (n-1)\cdot c_i(X_j). \label{eqn:efx-by-rr}
        \end{equation}

        Recall from the definition of $k$, we have $|X^1_j|\in \{k, k+1\}$ for all $j\in N$.
        Therefore
        \begin{itemize}
            \item for all $j\notin A_b$ we have $c_i(X_j) \geq k$;
            \item for all $j \in A_b\setminus\{i\}$ we have $c_i(X_j) \geq c$ (since $\bX^1$ is EFX).
        \end{itemize}
                
        \paragraph{Suppose $\delta_b > 0$ or $|A_b| \leq |P|$}
        If $\delta_b > 0$, then each agent in $A_b$ (including $i$) receives at least one new item in Algorithm~\ref{alg:partial-ngeq4-bivalued} (line $9-11$).
        If $|A_b| \leq |P|$, then each agent $A_b$ receives at least one item from $P$ during the Round-Robin allocation.
        All these items are large to agent $i$.
        Hence in either {  case} we have $c_i(X_j) \geq 1$ for all $j\in N$.
        If $|A_b| = 1$, then we have $|X^1_i| = k$ and $c\leq k-1$. Therefore for all $j\neq i$ we have $j\notin A_b$ and thus
        \begin{equation*}
            c + |X_i\setminus X_i^1| \leq k-1 + n-1 \leq (n-1) \cdot k \leq (n-1)\cdot c_i(X_j).
        \end{equation*}
        If $|A_b| \geq 2$, then for all $j\notin A_b$ we have
        \begin{equation*}
            c + |X_i \setminus X^1_i| \leq k + \left\lceil \frac{p}{|A_b|} \right\rceil \leq k + \left\lceil \frac{n}{2} \right\rceil \leq \left(\left\lceil \frac{n}{2} \right\rceil+1\right)\cdot c_i(X_j) \leq (n-1)\cdot c_i(X_j).
        \end{equation*}
        For all $j\in A_b\setminus\{i\}$ we have $c_i(X_j) \geq 1$, which implies (recall that we also have $c_i(X_j) \geq c$)
        \begin{equation*}
            c + |X_i \setminus X^1_i| \leq c_i(X_j) + \left\lceil \frac{n}{2} \right\rceil \leq \left(\left\lceil \frac{n}{2} \right\rceil + 1\right)\cdot c_i(X_j) \leq (n-1)\cdot c_i(X_j).
        \end{equation*}
        
        \paragraph{Suppose $\delta_b = 0$, $|A_b| > |P|$ and $c_{n-p}(X_{n-p}^1) \geq \frac{1}{n-2}$}
        By definition, each agent in $\{n,n-1,\ldots,n-p+1\}$ receives one item from $P$, where $p = |P| < |A_b| = s$, and the agents in $\{n-p,\ldots,n-s+1\}$ do not receive any new item. 
        Therefore we have $i\in \{n,n-1,\ldots,n-p+1\}$ and $|X_i\setminus X^1_i| = 1$.
        Hence for all $j\notin A_b$, we have
        \begin{equation*}
            c + |X_i \setminus X^1_i| \leq c_i(X_j) + 1 \leq  2\cdot c_i(X_j).
        \end{equation*}
        For all $j\in A_b\setminus\{i\}$, if $j\in \{n,n-1,\ldots,n-p+1\}$ then $c_i(X_j) \geq 1$; if $j\in \{n-p,\ldots,n-s+1\}$ then
        \begin{equation*}
            c_i(X_j) = c_i(X^1_j) \geq c_j(X^1_j) \geq c_{n-p}(X^1_{n-p}) \geq \frac{1}{n-2},
        \end{equation*}
        where the first inequality holds because all items in $X^1_j$ that are large to $j$ must also be large to all agents in $A_b$; the second inequality holds because of the monotonicity of $|X^1_j|$ (for all $j\in A_b$).
        In either case, we have $c_i(X_j) \geq 1/(n-2)$, which implies (recall that we also have $c_i(X_j) \geq c$)
        \begin{equation*}
            c + |X_i \setminus X^1_i| \leq c_i(X_j) + 1 \leq  (1 + (n-2))\cdot c_i(X_j) = (n-1)\cdot c_i(X_j).
        \end{equation*}
        
        Hence if any of the three conditions does not hold, Round-Robin returns an $(n-1)$-EFX allocation.
    \end{proof}

    Next, we show that when all three properties listed in Lemma~\ref{lemma:properties} hold, SIR computes an $(n-1)$-EFX allocation.

    \begin{lemma}\label{lemma:SIR}
    If we have $\delta_r = 0$, $|A_b| > |P|$ and $c_{n-p}(X_{n-p}^1) < \frac{1}{n-2}$, then SIR computes an $(n-1)$-EFX allocation.
    \end{lemma}
    \begin{proof}
    Note that since $|X^1_i|\in \{k,k+1\}$ and $c_{n-p}(X^1_{n-p}) < 1/(n-2)$, we have $k\cdot \epsilon < 1/(n-2)$.
    Recall that in the SIR method, we scan the agents one by one and for each scanned agent $i$ we either replace its bundle with a single item from $P$ (in line $9 - 13$) or do not allocate any item from $P$ to $i$.
    We first argue that all items in $P$ are allocated in the final allocation.
    Recall that $|A_b| > |P|$.
    Since in each for-loop the number of unscanned agents decreases by one, and $|P|$ decreases by at most one, the algorithm must reach a state with $|P| \in \{0,s-(n-i+1)\}$, where $s - (n-i+1)$ is the number of unscanned agents.
    In either case ($|P| = 0$ or $|P| = s-(n-i+1)$) the algorithm will terminate with $P=\emptyset$ and each agent receiving at most one item from $P$.
    Therefore in the final allocation $\bX = (X_1,\ldots,X_n)$, for all $i\in A_b$,
    \begin{itemize}
        \item[(a)] $|X_i| = 1$ and $c_j(X_i) = 1$ for all $j\in A_b$; or
        \item[(b)] $X^1_i\subseteq X_i$. Moreover, in this case we have $c_i(X_i) = \epsilon\cdot |X_i|$ (recall that $\delta_b=0$ and we reallocate items among agent only if the item is small to its receiver, e.g., in line $10$).
    \end{itemize}
    
    Fix any agent $i\in A_b$, we argue that $i$ is $(n-1)$-EFX towards any other agent $j$, which is trivially true when $|X_i| = 1$.
    Therefore it suffices to consider that agent $i$ falls into the case (b).
    By the above property, for all $j\in A_b$ we have either $c_i(X_j) = 1$ or $X^1_j \subseteq X_j$.
    Moreover, in the latter case if $j > i$ (agent $j$ is scanned before agent $i$), then we also have $c_i(X_j) \geq 1$.
    If $c_i(X_j) \geq 1$, then $i$ is $(n-1)$-EFX towards $j$ because for all $e\in X_i$,
    \begin{align*}
		c_i(X_i - e) & \leq ((k+1)\cdot |A_b| - 2)\cdot \epsilon \leq (kn + n - 2)\cdot \epsilon \\
		& \leq (kn + k(n - 2))\cdot \epsilon = (n-1)\cdot 2k\cdot \epsilon \leq n-1,
	\end{align*}
    where the first inequality holds because there exists at least one bundle $X^1_j$ in $A_b$ of size $k$. 
    
    Otherwise $i > j$, which implies that $i$ is scanned before agent $j$ and $i\geq 2$.
    Hence all items in $X_i$ are from bundles $X^1_n, X^1_{n-1},\ldots, X^1_{i}$, all these bundles have size at most $|X^1_j|$ (by monotonicity of bundle sizes in $\bX^1$).
    Hence $c_i(X_i) \leq (n-i+1)\cdot c_i(X_j) \leq (n-1)\cdot c_i(X_j)$.
    \end{proof}

    \begin{proofof}{Theorem~\ref{thm:general-n-bivalued}}
        We show that Algorithm~\ref{alg:ngeq4-bivalued} computes an $(n-1)$-EFX (full) allocation in polynomial time.
        {  By Lemma~\ref{lemma:properties}, if the condition in line 4 of Algorithm~\ref{alg:ngeq4-bivalued} holds, then Round-Robin computes an $(n-1)$-EFX allocation in $O(n)$ time since there are $|P|$ rounds and in each round the agent picks an arbitrary item since they are large to all agents in $A_b$.}
        Otherwise by Lemma~\ref{lemma:SIR}, SIR computes an $(n-1)$-EFX allocation.
        Moreover, since it takes $O(n\cdot |X^1_i|)$ time to process each scanned agent $i$, the running time $O(nm)$ is also polynomial. 
    \end{proofof}
    
    \section{Conclusion and Future Work}\label{sec:conclusion}
	
	In this paper, we propose algorithms that always compute a $(2+\sqrt{6})$-EFX allocation for three agents and $(3n^2-n)$-EFX allocation for $n\geq 4$ agents. 
	These are the first approximation ratios of EFX that are independent of $m$ for the allocation of indivisible chores.
	Furthermore, we show that the approximation ratios can be improved for bi-valued instances.
	We propose algorithms that always compute an EFX allocation for three agents with bi-valued cost functions and an EFX allocation with at most $n-1$ items unallocated for $n\geq 4$ agents with bi-valued cost functions.
	For the bi-valued instances with $n\geq 4 $ agents, we further propose an algorithm that computes a complete allocation with an approximation ratio $n-1$ with respect to EFX.
	There are many open problems regarding the computation of approximately EFX allocation for chores.
	For example, it would be interesting to investigate whether constant approximations of EFX allocation exist for general number of agents, and whether EFX allocations exist for three agents or any number of agents with bi-valued cost functions.
	Observe that to ensure (approximation of) EFX for an agent $i$, we often need to focus on increasing the costs of other agents, instead of minimizing the cost of $i$, which can possibly lead to inefficiency in the final allocation.
	It is thus interesting to study the existence of allocations for chores that are fair, e.g., approximation of EFX, MMS, or PROPX, and efficient, e.g., Pareto optimal.
	

\bibliographystyle{abbrv}
\bibliography{efx}

\end{document}